\newtheorem{theorem}{Theorem}[section]
\newtheorem{lemma}{Lemma}[section]
\newtheorem{proposition}{Proposition}[section]
\newtheorem{assumption}{Assumption}[section]
\theoremstyle{definition}
\newtheorem{definition}{Definition}[section]
\newtheorem{remark}{Remark}[section]
\newcommand{\R}{\mathbb{R}}
\renewcommand{\tilde}{\widetilde}
\renewcommand{\hat}{\widehat}
\DeclareMathOperator{\Var}{Var}
\DeclareMathOperator{\Cov}{Cov}
\begin{document}

\title[]{Nonparametric inference on L\'evy measures of compound Poisson-driven Ornstein-Uhlenbeck processes under macroscopic discrete observations}
\author[D. Kurisu]{Daisuke Kurisu}

\date{This version: \today}

\address[D. Kurisu]{
Department of Industrial Engineering and Economics, School of Engineering, Tokyo Institute of Technology\\
2-12-1 Ookayama, Meguro-ku, Tokyo 152-8552, Japan.
}
\email{kurisu.d.aa@m.titech.ac.jp}

\begin{abstract}
This study examines a nonparametric inference on a stationary L\'evy-driven Ornstein-Uhlenbeck (OU) process $X = (X_{t})_{t \geq 0}$ with a compound Poisson subordinator. We propose a new spectral estimator for the L\'evy measure of the L\'evy-driven OU process $X$ under macroscopic observations. We also derive, for the estimator, multivariate central limit theorems over a finite number of design points, and high-dimensional central limit theorems in the case wherein the number of design points increases with an increase in the sample size. Built on these asymptotic results, we develop methods to construct confidence bands for the L\'evy measure and propose a practical method for bandwidth selection. 
\medskip

\noindent
\textit{Keywords}: nonparametric inference, compound Poisson-driven Ornstein-Uhlenbeck process, spectral estimation, high-dimensional central limit theorem, macroscopic observations
\end{abstract}

\maketitle
\section{Introduction}

Given a positive number $\lambda$ and an increasing L\'evy process $J = (J_{t})_{t \geq 0}$ without drift component, an Ornstein-Uhlenbeck (OU) process $X = (X_{t})_{t \geq 0}$ driven by $J$ is defined by a solution to the following stochastic differential equation (SDE)
\begin{align}\label{OU}
dX_t = -\lambda X_tdt + dJ_{\lambda t},\ t \geq 0.
\end{align}
We refer to \cite{Sa99} and \cite{Be96} as standard references on L\'evy processes. 
In this study, we consider a nonparametric inference on the L\'evy measure $\nu$ of the back-driving L\'evy process $J$ in (\ref{OU}) from discrete observations of $X$. The L\'evy measure $\nu$ is defined as a Borel measure on $[0,\infty)$ such that
\[
\int_{0}^{\infty}(1 \wedge x^{2})\nu(dx)<\infty. 
\]
We assume that $X$ is stationary. If $\int_{(2, \infty)}\log x \nu(dx)<\infty$, then the unique stationary solution of (\ref{OU}) exists (see Theorem 17.5 and Corollary 17.9 in \cite{Sa99}), and the stationary distribution $\pi$ of $X$ is self-decomposable with the characteristic function
\begin{align}\label{ch_stmeas}
\varphi(t) = \int_{\mathbb{R}}e^{itx}\pi(dx) = \exp{\left(\int_{0}^{\infty}(e^{itx}-1){k(x) \over x}dx\right)},
\end{align}
where $k(x) = \nu((x,\infty))1_{[0,\infty)}$.

This study focuses on the case wherein the L\'evy process $J$ in (\ref{OU}) is a compound Poisson process. In other words, $J$ is of the form
\[
J_{t} = \sum_{j=1}^{N_{t}}U_{j},\ t \geq 0,
\]
where $N = (N_{t})_{t \geq 0}$ is a Poisson process with intensity $\alpha>0$ and $\{U_{j}\}_{j \geq 1}$ is a sequence of independent and identically distributed (i.i.d.) positive-valued random variables with common distribution $F$. In this case, $J_{t}$ has a characteristic function of the form
\[
\varphi_{J_{t}}(u) = E[e^{iuJ_{t}}] = \exp\left(t\alpha \int_{0}^{\infty}(e^{iux} -1)F(dx)\right)
\]
and the L\'evy measure is given by $\nu(dx) = \alpha F(dx)$. We also work with the macroscopic observation
set up, that is, we have discrete observations $X_{\Delta}, X_{2\Delta}, \hdots, X_{n\Delta}$ at frequency $1/\Delta>0$ with $\Delta = \Delta_{n} \to \infty$ and $\Delta_{n}/n \to 0$ as $n \to \infty$. This is a technical condition to make the dependence among observations $\{X_{j\Delta}\}_{j=1}^{n}$ asymptotically negligible.

This study aims to develop a nonparametric inference on the L\'evy measure of a L\'evy-driven OU process. Therefore, we first propose a spectral (or Fourier-based) estimator for the $k$-function and derive a multivariate central limit theorem for the estimator over finite design points. As an extension of the result, we also derive high-dimensional central limit theorems for the estimator in the case wherein design points over a compact interval included in $(0,\infty)$ increases as the sample size $n$ goes to infinity. Second, built on those limit theorems, we develop methods for implementing confidence bands for the $k$-function. Similar methods to construct ``asymptotic'' uniform confidence bands are also proposed in \cite{HoLe12}. 
Since confidence bands provide a simple graphical description of the accuracy of a nonparametric curve estimator, quantifying uncertainties of the estimator simultaneously over design points, they are practically important in statistical analysis. Third, we propose a practical method for bandwidth selection inspired by the idea developed by \cite{BiDuHoMu07} on bandwidth selection in density deconvolution. To the best of our knowledge, this is the first paper to establish limit theorems for nonparametric estimators for the L\'evy measure of compound Poisson-driven OU processes.

L\'evy-driven OU processes are widely used in modeling phenomena where random events occur at random discrete times. For example, refer to \cite{AlTeTi01}, \cite{KeSt01}, and \cite{NoVeGa14} for applications of these processes to insurance, dam theory, and rainfall models. 
Several authors investigate the parametric inference on L\'evy-driven OU processes driven by subordinators. 
We refer to \cite{HuLo09}, \cite{Ma10}, and \cite{Ma14} under the high-frequency set up (i.e., $\Delta = \Delta_{n} \to 0$ and $n\Delta_{n} \to \infty$ as $n \to \infty$) and \cite{BrDaYa07} under the low-frequency set up (i.e., $\Delta>0$ is fixed and $n \to \infty$). 
There are several studies on parametric and nonparametric estimations and inferences on L\'evy processes. 
We refer to recent contributions by \cite{Wo01}, \cite{KaMa11, KaMa13}, and \cite{BrMa18} on parametric inference on L\'evy processes. 
We also find an overview of recent developments on the parametric inference on L\'evy processes in \cite{Ma15}. 
Some authors have studied statistical inference on L\'evy process under macroscopic observations. 
\cite{DuHo11} investigates statistical inference on a compound Poisson process under three kinds of time scales---high-frequency, low-frequency, and macroscopic. 
\cite{Du14} studies statistical inference on compound Poisson processes under macroscopic observations. 
\cite{DuKa18} is another recent study on nonparametric estimation on compound Poisson processes under macroscopic observations. \cite{Co18b} discusses the robustness of spectral estimation of L\'evy measures of compound Poisson processes to $\Delta_n$, and it includes the consistency of the estimator under the macroscopic set up. 
Concerning recent contributions to nonparametric inference on L\'evy measures (or densities) under the high-frequency set up, we refer to \cite{Fi09a, Fi11a, Fi11}, \cite{Ve14}, \cite{KoPa16}, \cite{NiReSoTr16}, and \cite{KaKu17}. 
Recent studies on nonparametric estimation of L\'evy densities under the high-frequency scale are \cite{Sh06}, \cite{vaGuSp07}, \cite{CoGe09, CoGe10, CoGe11}, \cite{Fi09}, \cite{Gu09, Gu12}, \cite{NeRe09}, \cite{KaRe10}, \cite{Be11, Be11b}, \cite{Du13}, \cite{Ka14}, \cite{BeRe15}, and \cite{BeSc16}. 
Concerning literature on the low-frequency set up, we refer to \cite{NiRe12} for inference on L\'evy measures, and \cite{Pi94}, \cite{BuGr03}, and \cite{Co17} for nonparametric inference on compound Poisson processes. Further, \cite{ChDeHa10} and \cite{Tr15} investigate nonparametric estimation of a class of L\'evy processes under the low-frequency set up. 
\cite{BePaWo17} studies nonparametric estimation of L\'evy measures of the moving average L\'evy processes under low-frequency observations. 
\cite{BuVe13}, \cite{BuHoVeDe17}, and \cite{HoVe17} study nonparametic inference on L\'evy measures of It\^o semimartingales with L\'evy jumps under high-frequency observations. 
\cite{JoMeVa05} and \cite{IlMoRoSo15} investigate nonparametric estimation of the L\'evy-driven OU processes. 
\cite{JoMeVa05} derive consistency of their estimator for a class of L\'evy-driven OU processes, which include compound Poisson-driven OU processes. \cite{IlMoRoSo15} establish consistency of their estimator of the L\'evy density of (\ref{OU}) with compound Poisson subordinator in uniform norm at a polynomial rate. 
However, they do not derive limit distributions of their estimators.

The analysis of the present study is related to deconvolution problems for mixing sequence. \cite{Ma91, Ma93a, Ma93b} investigate the probability density deconvolution problems for $\alpha$-mixing sequences and derive convergence rates and asymptotic distributions of deconvolution estimators. Since the L\'evy-driven OU process (\ref{OU}) is $\beta$-mixing under some conditions (see \cite{Ma04} for details), our analysis can be interpreted as a deconvolution problem for a $\beta$-mixing sequence. However, we need a non-trivial analysis since we are considering additional structures emerging from the properties of the compound Poisson-driven OU process. To be more precise, \cite{Ma93b} assumes that, for a mixing sequence $\{\tilde{X}_{j}\}_{j \geq 0}$, the joint densities $p(x_{1},x_{j+1})$ of $\tilde{X}_{1}$ and $\tilde{X}_{j+1}$ are uniformly bounded for any $j \geq 1$ and $x_{1}, x_{j+1} \in \mathbb{R}$ to show the asymptotic independence of their estimators at different design points. Although we also observe a $\beta$-mixing sequence $\{X_{j\Delta}\}$ (see Remark 3.1 for details on the $\beta$-mixing property of $\{X_{j\Delta}\}$), we cannot assume such a condition directly in this study's context. Indeed, since the transition probability $P_{t}(x,dy)$ of $X$ has a point mass at $y = e^{-\lambda t}x$, $P_{t}(x, \cdot)$ does not have a transition density function (\cite{ZhShDe11}, Corollary 2). Therefore, to avoid such a problem, we consider the macroscopic regimes in this study.

The estimation problem of L\'evy measures is generally ill-posed in the sense of inverse problems, and the ill-posedness is induced by a decay of the characteristic function of a L\'evy process. We refer to \cite{NeRe09} as the seminal work in which such an explanation is given for the first time. In our case, the ill-posedness is induced by the decay of the characteristic function of the stationary distribution $\pi$ of the L\'evy-driven OU (\ref{OU}). In this sense, the problem in this study is a (nonlinear) inverse problem. \cite{Tr14} investigates conditions wherein a self-decomposable distribution is \textit{nearly} ordinary smooth, that is, the characteristic function of the self-decomposable distribution decays polynomially at infinity up to a logarithmic factor. \cite{Tr14b} applies those results to the nonparametric calibration of self-decomposable L\'evy option pricing models. Refining the result for a special case in \cite{Tr14}, we will show that the characteristic function of a self-decomposable distribution is regularly varying at infinity with some index $\alpha > 0$. This enables us to derive asymptotic distributions of the spectral estimator proposed in this study. 

Our analysis is also related to \cite{KaSa16} and \cite{KaKu17}. \cite{KaSa16} is a recent contribution to the literature on the construction of uniform confidence bands in probability density deconvolution problems for i.i.d. observations. The study formulates methods for constructing uniform confidence bands built on applications of intermediate Gaussian approximation theorems developed in \cite{ChChKa14a, ChChKa14b, ChChKa15, ChChKa16} and provides multiplier bootstrap methods for implementing uniform confidence bands. \cite{KaKu17} also develops confidence bands for L\'evy densities based on intermediate Gaussian and multiplier bootstrap approximation theorems. However, we adopt different methods for the construction of confidence bands. We derive high-dimensional central limit theorems based on intermediate Gaussian approximation for $\beta$-mixing process. Additionally, we can show that the variance-covariance matrix of the Gaussian random vector appearing in multivariate and high-dimensional central limit theorems is the identity matrix. Therefore, we do not need bootstrap methods to compute critical values of confidence bands.

The rest of the paper is organized as follows. In Section 2, we define a spectral estimator for the $k$-function. We give a multivariate central limit theorem of the spectral estimator in Section 3. In Section 4, we describe high-dimensional central limit theorems for the estimator and procedures for implementing confidence bands. In Section 5, we propose a practical method for bandwidth selection and report simulation results to study the finite sample performance of the spectral estimator. Discussions on our results and proposed confidence bands are presented in Section 6. All proofs are collated in Appendices A and B.

\subsection{Notation}
For any non-empty set $T$ and any (complex-valued) function $f$ on $T$, let $\|f\|_{T} = \sup_{t \in T}|f(t)|$, and, for $T = \mathbb{R}$, let $\|f\|_{L^{p}} = (\int_{\mathbb{R}}|f(x)|^{p}dx)^{1/p}$ for $p >0$. For any positive sequence $a_{n}, b_{n}$, we write $a_{n} \lesssim b_{n}$ if there is a constant $C >0$ independent of $n$ such that $a_{n} \leq Cb_{n}$ for all $n$,  $a_{n} \sim b_{n}$ if $a_{n} \lesssim b_{n}$ and $b_{n} \lesssim a_{n}$, and $a_{n} \ll b_{n}$ if $a_{n}/b_{n} \to 0$ as $n \to \infty$. For $a,b \in \mathbb{R}$, let $a \vee b = \max(a,b)$. For $a \in \mathbb{R}$ and $b>0$, we use the shorthand notation $[a \pm b] = [a-b, a+b]$. The transpose of a vector $x$ is denoted by $x^{\top}$. We use the notation $\stackrel{d}{\to}$ as convergence in the distribution. For random variables $X$ and $Y$, we write $X \stackrel{d}{=} Y$ if they have the same distribution. $N(\mu, \Sigma)$ denotes a (multivariate) normal distribution with a mean $\mu$ and a variance(-covariance matrix) $\Sigma$.

\section{Estimation of the $k$-function}
In this section, we introduce a spectral estimator for the L\'evy measure ($k$-function) of the L\'evy-driven OU process (\ref{OU}). First, we consider a symmetrized version of the $k$-function, that is, 
\[
k_{\sharp}(x) = 
\begin{cases}
k(x) & \text{if $x \geq 0$},\\
k(-x) & \text{if $x <0$},\\
\end{cases}
\]
A simple calculation yields 
\[
{1 \over \varphi(-t)} = \exp{\left(\int_{-\infty}^{0}(e^{itx}-1){k(-x) \over x}dx\right)}.
\]
Therefore, we have 
\begin{align*}
\varphi_{\sharp}(t) &:= {\varphi(t) \over \varphi(-t)} = \exp{\left(\int_{\mathbb{R}}(e^{itx}-1){k_{\sharp}(x) \over x}dx\right)}, \\
\varphi'_{\sharp}(t) &= {\varphi'(t)\varphi(-t) + \varphi(t)\varphi'(-t) \over \varphi^{2}(-t)} = {1 \over \varphi(-t)}\varphi'(t) -  \left({1 \over \varphi(-t)}\right)'\varphi(t) = i\left(\int_{\mathbb{R}}e^{itx}k_{\sharp}(x)dx\right)\varphi_{\sharp}(t). 
\end{align*}
This formally yields 
\[
k_{\sharp}(x) = {-i \over 2\pi}\int_{\mathbb{R}}e^{-itx}{\varphi'_{\sharp}(t) \over \varphi_{\sharp}(t)}dt. 
\]

Let
\begin{align*}
\widehat{\varphi}(u) &= {1 \over n}\sum_{j=1}^{n}e^{iuX_{j\Delta}},\ 
\widehat{\varphi}'_{\theta_{n}}(u) = {i \over n}\sum_{j=1}^{n}X_{j\Delta}e^{iuX_{j\Delta}}1\{|X_{j\Delta}| \leq \theta_{n}\}. 
\end{align*}
Here, $\theta_{n}$ is a sequence of constants such that $\theta_{n} \to \infty$ as $n \to \infty$ (in the rest of this study, we set $\theta_{n} \sim n^{1/2}(\log n)^{-3}$). 
Let $W:\mathbb{R} \to \mathbb{R}$ be an integrable (kernel) function such that $\int_{\mathbb{R}}W(x)dx = 1$, and its Fourier transform $\varphi_{W}$ is supported in $[-1,1]$ (i.e., $\varphi_{W}(u) = 0$ for all $|u|>1$). Then, the spectral estimator for $k$ at $x>0$ is defined by 
\[
\widehat{k}_{\sharp}(x) = {-i \over 2\pi}\int_{\mathbb{R}}e^{-itx}{\widehat{\varphi}'_{\sharp}(t) \over \widehat{\varphi}_{\sharp}(t)}\varphi_{W}(th)dt,
\]
where $h = h_{n}$ is a sequence of positive constants (bandwidths) such that $h_{n} \to 0$ as $n \to \infty$, and 
\begin{align*}
\widehat{\varphi}_{\sharp}(t) &= {\widehat{\varphi}(t) \over \widehat{\varphi}(-t)},\ \widehat{\varphi}'_{\sharp}(t) = {1 \over \widehat{\varphi}(-t)}\widehat{\varphi}'_{\theta_{n}}(t) + {\widehat{\varphi}'_{\theta_{n}}(-t) \over \widehat{\varphi}^{2}(-t)}\widehat{\varphi}(t). 
\end{align*}

In the following sections, we develop central limit theorems for $\widehat{k}$. 

\begin{remark}
We need the truncation in $\widehat{\varphi}'_{\theta_{n}}$ to show Lemma \ref{L1} in Appendix A by applying an exponential inequality for bounded mixing sequences. Additionally, refer to Remark 3.2 and the proof of Proposition 9.4 in \cite{Be10}. 
\end{remark}

\begin{remark}
For a complex value $a$, let $\overline{a}$ be the complex conjugate of $a$. We observe that $\widehat{k}_{\sharp}$ is real-valued. In fact, since $\overline{\widehat{\varphi}'_{\sharp}(t)} = -\widehat{\varphi}'_{\sharp}(-t)$ and $\overline{\widehat{\varphi}_{\sharp}(t)} = \widehat{\varphi}_{\sharp}(-t)$, by a change of variables, we have 
\begin{align*}
\overline{\widehat{k}_{\sharp}(x)} &= {i \over 2\pi}\int_{\mathbb{R}}e^{itx}{\overline{\widehat{\varphi}'_{\sharp}(t)} \over \overline{\widehat{\varphi}_{\sharp}(t)}}\overline{\varphi_{W}(th)}dt = {-i \over 2\pi}\int_{\mathbb{R}}e^{itx}{\widehat{\varphi}'_{\sharp}(-t) \over \widehat{\varphi}_{\sharp}(-t)}\varphi_{W}(-th)dt = \widehat{k}_{\sharp}(x). 
\end{align*}
Additionally, refer to Section \ref{Discuss} for detailed comments on the construction of the estimator $\hat{k}_{\sharp}$ and an alternative estimator. 
\end{remark}

\section{Multivariate Central Limit Theorem}\label{Main_results}
In this section, we present a multivariate central limit theorem for $\widehat{k}_{\sharp}$. 

\begin{assumption}\label{Ass1}
We assume the following conditions. 
\begin{itemize}
\item[(i)] $\int_{0}^{\infty}(1 \vee |x|^{2+\epsilon})k(x)dx<\infty$ for some $\epsilon>0$.
\item[(ii)] $k(0) = \nu((0,\infty)) = \alpha$ and $2<\alpha<\infty$. 
\item[(iii)] Let $r>1/2$, and let $p$ be the integer such that $p < r \leq p+1$. The function $k_{\sharp}$ is $p$-times differentiable, and $k^{(p)}_{\sharp}$ is $(r-p)$-H\"older continuous, that is, 
\[
\sup_{x,y \in \mathbb{R}, x \neq y}{|k^{(p)}_{\sharp}(x) - k^{(p)}_{\sharp}(y)| \over |x-y|^{r-p}} <\infty. 
\]
\item[(iv)] $|\varphi_{k}(u)| \lesssim (1 + |u|)^{-1}$ and $|\varphi'_{k}(u)| \vee |\varphi''_{k}(u)| \lesssim (1 + |u|)^{-2}$, where $\varphi_{k} (= \varphi'/(i\varphi))$ is the Fourier transform of $k$. 
\item[(v)] Let $W: \mathbb{R} \to \mathbb{R}$ be an integrable function such that 
\begin{align*}
\begin{cases}
\int_{\mathbb{R}}W(x)dx = 1,\ \int_{\mathbb{R}}|x|^{p+1}|W(x)|dx<\infty, \\
\int_{\mathbb{R}}x^{\ell}W(x)dx = 0,\ \ell = 1,\hdots, p,\\
\varphi_{W}(u) = 0,\ \forall |u| >1,\\
\text{$\varphi_{W}$ is three-times continuously differentiable,}
\end{cases}
\end{align*}
where $\varphi_{W}$ is the Fourier transform of $W$.
 
\item[(vi)] $\Delta = \Delta_{n} \geq {5C_{0} \over 4\beta_{1}(2 + 2\alpha - \delta)} \log n$, $n/\Delta \to \infty$, and 
\[
\left({(\log n)^{5} \over n}\right)^{1/(2+2\alpha-\delta)} \ll h \ll \left({1 \over n\log n}\right)^{1/(1+2r+2\alpha-\delta)}
\]
for some positive constant $C_{0}$ and $\delta \in (0,1/12)$ as $n \to \infty$. Here, $\beta_{1}$ is a positive constant. It appears in the mixing coefficient of $X = (X_{t})_{t \geq 0}$ (Conditions (i) and (ii) imply that $X$ is exponentially $\beta$-mixing with $\beta$-mixing coefficient $\beta_{X}(t) = O(e^{-\beta_{1}t})$ for some $\beta_{1}>0$. Refer to the following remark). 
\end{itemize}
\end{assumption}

\begin{remark}\label{Remark_Ass}
Conditions (i) and (ii) imply that the stationary distribution $\pi$ has a bounded continuous density (we also denote the density by $\pi$) such that $\|\pi\|_{\mathbb{R}} \lesssim 1$ and $\int_{\mathbb{R}}|x|\pi(dx)<\infty$ (see Lemma \ref{L0}). In this case, the stationary L\'evy-driven OU process defined by (\ref{OU}) is exponentially $\beta$-mixing (Theorem 4.3 in \cite{Ma04}), that is, the $\beta$-mixing coefficients for the stationary continuous-time Markov process $X$
\[
\beta_{X}(t) = \int_{\mathbb{R}}\|P_{t}(x,\cdot) - \pi(\cdot)\|_{TV}\pi(dx),\ t>0
\]
(this representation follows from Proposition 1 in \cite{Da73}) satisfy $\beta_{X}(t) = O(e^{-\beta_{1}t})$ for some $\beta_{1}>0$. Here, $P_{t}(x,\cdot)$ is the transition probability of the L\'evy-driven OU (\ref{OU}), and $\|\cdot\|_{TV}$ is the total variation norm. 

Condition (iii) is concerned with the smoothness of $k_{\sharp}$, and this condition is used to obtain a suitable bound of the deterministic bias of the estimator $\|[k_{\sharp}*(h^{-1}W(\cdot/h))] - k_{\sharp}\|_{\R}$. See Section \ref{Discuss} for details. 

Condition (iv) is satisfied if $k$ is two-times continuously differentiable on $(0,\infty)$ and $\int_{0}^{\infty}\{|k(x)| + |xk'(x)| + |x^{2}k''(x)|\}dx<\infty$. Indeed, by Condition (i), we have $|\varphi^{(p)}(u)| \lesssim 1$ for $p=0,1,2$. Additionally, by integration-by-parts and the Riemann-Lebesgue theorem, we also have that 
\begin{align*}
\left|\varphi_{k}(u)\right| &= \left|\int_{0}^{\infty}e^{iux}k(x)dx\right| = \left|{k(0+) \over iu} - {1 \over iu}\int_{0}^{\infty}e^{iux}k'(x)dx\right| \lesssim {1 \over |u|},\\
\left|\varphi'_{k}(u)\right| &= \left|{1 \over u}\varphi_{k}(u) + {1 \over iu^{2}}\int_{0}^{\infty}e^{iux}(k'(x) + xk''(x))dx\right| \lesssim {1 \over u^{2}},\\
|\varphi''_{k}(u)| &\leq  {2 \over u^{2}}|\varphi_{k}(u)| + {1 \over u^{2}}\left|\int_{0}^{\infty}e^{iux}(4xk'(x) + x^{2}k''(x))dx\right| \lesssim {1 \over u^{2}}
\end{align*}
as $|u| \to \infty$. 

Condition (v) is concerned with the kernel function $W$. We assume that $W$ is a $(p+1)$-th order kernel. However, we allow for the possibility that $\int_{\R} x^{p+1}W(x)dx=0$. 
It must be noted that since the Fourier transform of $W$ has compact support, the support of the kernel function $W$ is necessarily unbounded (see Theorem 4.1 in \cite{StWe71}).

Condition (vi) is concerned with the sampling frequency, bandwidth, and the sample size. The condition $\Delta \gtrsim \log n$ implies that we work with macroscopic observation scheme; this is a technical condition for the inference on $k$. We assume this condition to guarantee that the dependence among $\{X_{j\Delta}\}_{j=1}^{n}$ can be ignored asymptotically. We note that, to estimate $k$ uniformly on an interval $I \subset (0,\infty)$, we do not need the condition and we can work with the low-frequency set up (i.e., $\Delta>0$ is fixed). 
From a practical viewpoint, our methods could be applied to low-frequency data; additionally, it would work effectively if we suitably rescale the time scale of the data and if the sample size $n$ is sufficiently large. In our simulation study, we consider the case when $(n,\Delta) = (500,1)$, and our method functions effectively in this case. We also need Condition (vi) to derive the lower bound of $h$ for the uniform consistency of $\widehat{k}_{\sharp}(x)$ for $x = x_{\ell}$, $j=1,\hdots, N$ with $0<x_{1}<\cdots < x_{N}<\infty$. We need the upper bound of $h$ for the undersmoothing condition. Refer to Remark \ref{remark_band} of this study for comments on the condition on $h$. 

\end{remark}

To state a multivariate central limit theorem for $\widehat{k}_{\sharp}$, we introduce the notion of regularly varying functions. 
\begin{definition}[Regularly varying function]
A measurable function $U_{0}: [0,\infty) \to [0,\infty)$ is regularly varying at $\infty$ with index $\rho$ (written as $U_{0} \in RV_{\rho}$) if for $x>0$,
\[
\lim_{t \to \infty}{U_{0}(tx) \over U_{0}(t)} = x^{\rho}.
\]
\end{definition}
We say that a function $U$ is slowly varying if $U_{0} \in RV_{0}$. We refer to \cite{Re07} for details of regularly varying functions. The following lemma plays an important role in the proof of Theorem \ref{CLT2}.
\begin{lemma}\label{L-1}
Assume Condition (ii) in Assumption \ref{Ass1}. There exists a function $L : (1,\infty) \to [0,\infty)$, which slowly varies at $\infty$, and a constant $B>0$ such that 
\begin{align*}
\lim_{|t| \to \infty}{|t|^{\alpha}|\varphi(t)| \over L(|t|)} = B.
\end{align*}
\end{lemma}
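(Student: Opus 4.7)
The plan is to prove the stronger assertion that $t \mapsto |\varphi(t)|$ is regularly varying at infinity with index $-\alpha$; the lemma then follows by setting $L(t) := t^{\alpha}|\varphi(t)|$ for $t>1$ and $B:=1$. Note that $L$ is strictly positive because $\varphi$ never vanishes (infinite divisibility of $\pi$), and since $\pi$ is supported on $[0,\infty)$, $|\varphi(-t)|=|\varphi(t)|$, so the two-sided limit $|t|\to\infty$ reduces to $t\to\infty$. By the definition of regular variation, the task is therefore to establish, for each fixed $c>0$,
$$
\lim_{t\to\infty}\frac{|\varphi(ct)|}{|\varphi(t)|} \;=\; c^{-\alpha}.
$$

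Taking the real part of $\log \varphi$ from (\ref{ch_stmeas}) and substituting $y=tx$ gives
$$
\log|\varphi(ct)|-\log|\varphi(t)| \;=\; \int_{0}^{\infty}\bigl(\cos(cy)-\cos(y)\bigr)\frac{k(y/t)}{y}\,dy.
$$
The target $-\alpha\log c$ is what one would get if $k(y/t)$ could be replaced by $k(0+)=\alpha$, in view of the classical Frullani-type identity $\int_{0}^{\infty}(\cos(cy)-\cos(y))/y\,dy = -\log c$. The approach is to justify that replacement by truncating at levels $0<\delta<M$ and treating the three resulting pieces separately, using only the fact that $k$ is non-negative, non-increasing, and continuous at $0$ with $k(0+)=\alpha$ (the latter being a direct consequence of $k(0)=\nu((0,\infty))=\alpha<\infty$).

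On $(0,\delta)$, the Taylor estimate $|\cos(cy)-\cos(y)|\leq \tfrac{1}{2}|c^{2}-1|y^{2}$ combined with $k\leq\alpha$ gives an $O(\delta^{2})$ contribution that is uniform in $t$. On $(\delta,M)$, uniform convergence $k(y/t)\to\alpha$ for $y\in[\delta,M]$ (continuity of $k$ at $0$) together with boundedness of $(\cos(cy)-\cos(y))/y$ on compact sets yields convergence to $\alpha\int_{\delta}^{M}(\cos(cy)-\cos(y))/y\,dy$. On $(M,\infty)$, since $y\mapsto k(y/t)/y$ is non-negative and non-increasing, Bonnet's second mean value theorem implies
$$
\Bigl|\int_{M}^{\infty}\cos(cy)\frac{k(y/t)}{y}\,dy\Bigr| \;\leq\; \frac{2}{c}\cdot\frac{k(M/t)}{M} \;\leq\; \frac{2\alpha}{cM},
$$
and the same bound (with $c=1$) holds for the $\cos(y)$ integral, so the tail contributes $O((1+1/c)/M)$ uniformly in $t$. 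Letting first $t\to\infty$, then $\delta\downarrow 0$ and $M\to\infty$ yields the claimed limit $-\alpha\log c$, and hence the ratio $|\varphi(ct)|/|\varphi(t)|\to c^{-\alpha}$.

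The main obstacle is that the integrand is only conditionally integrable at infinity, so ordinary dominated convergence cannot control the tail of the $y$-integral after the change of variable. This is precisely what monotonicity of $k$ is used for: it forces $y\mapsto k(y/t)/y$ to be monotone and allows Bonnet's theorem to deliver the uniform tail estimate that makes the truncation argument work, which is the only non-routine step in the proof.
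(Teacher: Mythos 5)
Your proof is correct, but it takes a genuinely different route from the paper's, and is arguably cleaner. The paper writes down an explicit candidate slowly varying function $L(x)=\exp\bigl(\int_{1/x}^{1}(\alpha-k(y))\,dy/y\bigr)$, checks slow variation by a direct change of variables, and then decomposes $\log|\varphi(u)|=\int_{0}^{\infty}(\cos(ux)-1)k(x)/x\,dx$ at the two breakpoints $1/u$ and $1$. The pieces on $(0,1/u)$ and $(1,\infty)$ are handled by a scaling argument and Riemann--Lebesgue, respectively, and the middle piece is related to $-\alpha\log u+\log L(u)$ plus a compensating term $\tilde I_{2}(u)=\int_{1}^{u}\cos(y)\,k(y/u)/y\,dy$. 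You instead prove the stronger statement that $|\varphi|$ is regularly varying of index $-\alpha$ by showing the ratio $|\varphi(ct)|/|\varphi(t)|\to c^{-\alpha}$ directly, reducing the lemma to the tautology $L(t):=t^{\alpha}|\varphi(t)|$, $B=1$, via the Karamata representation. The mechanics are different too: your three pieces are at fixed levels $\delta$ and $M$, the middle range uses uniform convergence $k(\cdot/t)\to\alpha$ on compacta, the small range uses the quadratic bound on $\cos(cy)-\cos(y)$, and the tail is controlled uniformly in $t$ by Bonnet's second mean value theorem applied to the monotone factor $k(y/t)/y$, with Frullani giving the final constant $-\alpha\log c$. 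What this buys is a fully rigorous limit statement: the paper's proof only exhibits $\limsup_{u\to\infty}|\tilde I_{2}(u)|<\infty$, which by itself shows $u^{\alpha}|\varphi(u)|/L(u)$ is bounded away from $0$ and $\infty$ but not that the limit exists; your Bonnet--Frullani argument closes precisely that gap. What the paper's route buys is a concrete closed-form expression for $L$ in terms of $\alpha-k$, which can be informative (e.g., for the discussion in Remark~\ref{tail_decay}), whereas your $L$ is defined implicitly through $\varphi$ itself.
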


\begin{remark}
In Assumption \ref{Ass1}, Condition (ii) is concerned with the smoothness of the stationary distribution $\pi$ of the L\'evy-driven OU process. Condition (ii) implies that the stationary distribution $\pi$ is \textit{nearly} ordinary smooth, that is, the characteristic function (\ref{ch_stmeas}) decays polynomially fast as $|u| \to \infty$ (Lemma \ref{L-1}), up to a slowly varying function. Since $k(x) = \nu((x,\infty))$, the finiteness of $k(0)$ is equivalent to the finiteness of the total mass of the L\'evy measure of the L\'evy process $J$. This means that the L\'evy process $J$ has finite activity, that is, it has only finitely many jumps in any bounded time interval. It is known that a L\'evy process with a finite L\'evy measure is a compound Poisson process. If $k(0) = \infty$, then the L\'evy process $J$ has infinite activity, that is, it has infinitely many jumps in any bounded time interval. In this case, the characteristic function (\ref{ch_stmeas}) decays faster than polynomials. Particularly, it decays exponentially fast as $|u| \to \infty$ if the Blumenthal-Getoor index of $J$ is positive, that is, if
\[
\rho_{BG} = \inf\left\{p>0: \int_{|x| \leq 1}|x|^{p}\nu(dx)<\infty\right\} >0. 
\] 
For example, this case includes inverse Gaussian, tempered stable, and normal inverse Gaussian processes. Condition (ii) rules out these examples since we could not construct confidence bands based on Gaussian approximation under our observation scheme (see the comments after Assumption 10 in \cite{KaSa16}). \cite{KaSa16} develops some methods to construct uniform confidence bands for the density deconvolution problem by using the intermediate Gaussian approximation. In their study, when the density of a measurement error is super smooth (this case corresponds to the case in our framework wherein the BG-index is positive), they assume that the effect of the estimation of the characteristic function of the measurement error based on $m = m_{n}$ auxiliary independent observations is asymptotically negligible, that is, $m_{n}/n \to \infty$ as $n \to \infty$. However, we can use $n$ observations to estimate $\varphi$ (this function corresponds to the characteristic function of a measurement error in deconvolution problems). Hence, in our situation, $m = n$. In this case, we can apply the results of the intermediate Gaussian approximation in \cite{ChChKa13} to the case wherein the density of a measurement error is ordinary smooth (or BG-index is $0$). However, to the best of our knowledge, such a result has not been achieved in the literature on deconvolution problems when the density of a measurement error is super smooth (or BG-index is positive). 
Therefore, we assume nearly ordinary smoothness of $\pi$ in our situation to obtain practical asymptotic theorems for the inference on $k$. 
\end{remark}

\begin{remark}\label{tail_decay}
Lemma \ref{L-1} implies that $|\varphi(u)|$ is a regularly varying function at $\infty$ with index $\alpha$. A slowly varying function $L(u)$ may go to $\infty$ as $u \to \infty$ but it does not grow faster than any power function, that is, 
\[
\lim_{u \to \infty}{L(u) \over u^{\delta}} = 0
\]
for any $\delta>0$. In fact, if $k(0) = \alpha>0$, from Proposition 1 in \cite{Tr14}, we have 
\[
(1+|u|)^{-\alpha} \lesssim |\varphi(u)| \lesssim (1 + |u|)^{-\alpha + \delta}.
\]
for any $\delta>0$. Such a tail behavior of $\varphi$ is related to Condition (vi) in Assumption \ref{Ass1}. If the stationary distribution $\pi$ is ordinary smooth, that is, $\varphi$ satisfies the relation
\[
(1 + |u|)^{-\alpha} \lesssim |\varphi(u)| \lesssim (1 + |u|)^{-\alpha}
\]
for some $\alpha>0$, then we can set $\delta = 0$ in Condition (vi). However, we must introduce $\delta>0$ to consider the effect of the slowly varying function $L$. 
\end{remark}

\begin{remark}\label{remark_band}
As shown in (\ref{Decomp_k}) and the comments below, if we do not assume the condition 
\[
h \ll \left({1 \over n\log n}\right)^{1/(1 + 2r + 2\alpha -\delta)},
\]
we have  
\[
\max_{1 \leq \ell \leq N}|\widehat{k}_{\sharp}(x_{\ell}) - k_{\sharp}(x_{\ell})| = O_{P}((nh^{2\alpha + 1-\delta})^{-1/2}\sqrt{\log n}) + O(h^{r})\ \text{as $n \to \infty$}
\]
where the second term of the right-hand side comes from the deterministic bias. For central limit theorems to hold and for constructing the confidence bands, we have to choose a bandwidth to ensure that the bias term is asymptotically negligible relative to the first term or ``variance'' term. The right-hand side is optimized if we take $h \sim ( \log n / n)^{1/(1 + 2r + 2\alpha-\delta)}$. 
\end{remark}

Under Assumption \ref{Ass1}, we can show that $\widehat{k}_{\sharp}(x) - k_{\sharp}(x)$ has the following asymptotically linear representation: 
\begin{align}\label{Approx}
\widehat{k}_{\sharp}(x) - k_{\sharp}(x) &= {-i \over 2\pi}\int_{\mathbb{R}}e^{-itx}\left({\widehat{\varphi}'_{\theta_{n}}(t) - \varphi'_{\theta_{n}}(t) \over \varphi(t)}\right)\varphi_{W}(th)dt + o_{P}((nh^{2\alpha + 1-\delta}\log n)^{-1/2}),
\end{align}
where $\varphi'_{\theta_{n}}(t) = E[\widehat{\varphi}'_{\theta_{n}}(t)]$. By a change of variables, we may rewrite the first term in (\ref{Approx}) as 
\begin{align}\label{EP approx}
Z_{n}(x) = {1 \over nh}\sum_{j=1}^{n}\left\{X_{j\Delta}1\{|X_{j\Delta}| \leq \theta_{n}\}K_{n}\left({x-X_{j\Delta} \over h}\right)- E\left[X_{1}1\{|X_{1}| \leq \theta_{n}\}K_{n}\left({x-X_{1} \over h}\right)\right]\right\},
\end{align}
where $K_{n}$ is a function defined by
\begin{align*}
K_{n}(x) &= {1 \over 2\pi}\int_{\mathbb{R}}e^{-itx}{\varphi_{W}(t) \over \varphi(t/h)}dt. 
\end{align*}

It must be noted that $K_{n}$ is well-defined and real-valued. 
To construct a confidence interval for $k(x)$, we estimate the variance of $\sqrt{n}hZ_{n}(x)$, which is $\sigma_{n}^{2}(x)$, by 
\begin{align}
\widehat{\sigma}_{n}^{2}(x) &= {1 \over n}\sum_{j=1}^{n}\left\{X_{j\Delta}1\{|X_{j\Delta}| \leq \theta_{n}\}\widehat{K}_{n}\left({x-X_{j\Delta} \over h}\right)\right\}^{2} \nonumber \\
&\quad - \left\{{1 \over n}\sum_{j=1}^{n}X_{j\Delta}1\{|X_{j\Delta}| \leq \theta_{n}\}\widehat{K}_{n}\left({x-X_{j\Delta} \over h}\right)\right\}^{2}, \label{sigsq_est}
\end{align}
where 
\[
\widehat{K}_{n}(x) = {1 \over 2\pi}\int_{\mathbb{R}}e^{-itx}{\varphi_{W}(t) \over \widehat{\varphi}(t/h)}dt. 
\]
\begin{remark}
We use Conditions (ii), (iv), and (v) in Assumption \ref{Ass1} to show that
\begin{align}\label{bound_Kn}
h^{\alpha}(|K_{n}(x)| + h|xK_{n}(x)|) \lesssim \min(1, 1/x^{2}).
\end{align}
Refer to the proof of Lemma \ref{L5} in Appendix A for details. Combining this bound on $K_{n}$ and Condition (vi) in Assumption \ref{Ass1}, we can show that the asymptotic variance-covariance matrix appearing in Theorem \ref{CLT2} is diagonal. 
\end{remark}
\begin{remark}
Propositions \ref{LemB1} and \ref{PrpB1} and Lemma \ref{L55} (see Appendix A) yield 
\begin{align*}
\sigma^{2}_{n}(x)  &= \Var(\sqrt{n}hZ_{n}(x)) \sim \Var(Z_{n,1}(x)) \gtrsim h^{-2\alpha+1-\delta}
\end{align*}
uniformly in $x \in I \subset (0,\infty)$ where $I$ is a compact set and $Z_{n,j}(x) = X_{j\Delta}1\{|X_{j\Delta}| \leq \theta_{n}\}K_{n}\left({x - X_{j\Delta} \over h}\right)$. Then, we can estimate $\sigma^{2}_{n}(x)$ by $\hat{\sigma}^{2}_{n}(x)$(see Lemma \ref{Var_approx} and the proof in Appendix A for details). 
\end{remark}

Now, we present the next multivariate central limit theorem. 
\begin{theorem}\label{CLT2}
Assume Assumption \ref{Ass1}. Then, for any $0<x_{1}<\hdots <x_{N}<\infty$, we have 
\[
\sqrt{n}h\left({\widehat{k}_{\sharp}(x_{1}) - k_{\sharp}(x_{1}) \over \widehat{\sigma}(x_{1})},\hdots, {\widehat{k}_{\sharp}(x_{N}) - k_{\sharp}(x_{N}) \over \widehat{\sigma}(x_{N})}\right)^{\top} \stackrel{d}{\to} N(0, I_{N}),
\] 
where $I_{N}$ is the $N$ by $N$ identity matrix and $\widehat{\sigma}_{n}(x) = \sqrt{\widehat{\sigma}^{2}_{n}(x)}$. 
\end{theorem}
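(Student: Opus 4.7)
The plan is to deduce Theorem \ref{CLT2} from Theorem \ref{CLT1} by rescaling and an application of the multivariate Slutsky lemma. The bridge between the two formulations is the observation made in the remark following (\ref{sigsq_est}): Propositions \ref{LemB1} and \ref{PrpB1}, together with Lemma \ref{L55}, yield
\[
\sigma_n^2(x_\ell) = \Var(\sqrt{n}hZ_n(x_\ell)) \sim d_\ell\, h^{1-2\alpha} L^{-2}(1/h),
\]
so that $\sqrt{n}h/\sigma_n(x_\ell) \sim \sqrt{nh}\,h^{\alpha}L(1/h)/\sqrt{d_\ell}$. Since the limiting covariance $\Sigma$ in Theorem \ref{CLT1} is the diagonal matrix with entries $d_\ell$, dividing the $\ell$-th coordinate there by $\sqrt{d_\ell}$ and inserting this scaling identity immediately gives
\[
\left(\frac{\sqrt{n}h(\widehat{k}_\sharp(x_\ell)-k_\sharp(x_\ell))}{\sigma_n(x_\ell)}\right)_{\ell=1}^{N} \stackrel{d}{\to} N(0, I_N).
\]

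Next I would invoke Lemma \ref{Var_approx}, which supplies $\widehat{\sigma}_n^2(x_\ell)/\sigma_n^2(x_\ell) \stackrel{P}{\to} 1$ for each $\ell$. The continuous mapping theorem applied jointly then produces
\[
\bigl(\sigma_n(x_1)/\widehat{\sigma}_n(x_1),\ldots, \sigma_n(x_N)/\widehat{\sigma}_n(x_N)\bigr)^\top \stackrel{P}{\to} (1,\ldots,1)^\top.
\]
Multiplying the vector in the previous display coordinate-wise by these ratios and applying the multivariate Slutsky lemma yields the desired $N(0, I_N)$ convergence.

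The real content of the theorem, and the place where the only serious work is required, is therefore Lemma \ref{Var_approx}. The main obstacle there is controlling the stochastic error introduced by substituting the empirical characteristic function $\widehat{\varphi}$ for $\varphi$ inside $\widehat{K}_n$, since $1/\widehat{\varphi}(t/h)$ is highly sensitive at frequencies where $|\varphi(t/h)|$ is small. My approach for that lemma would be: (i) combine the regular variation identity of Lemma \ref{L-1} with the kernel bound (\ref{bound_Kn}) to obtain usable uniform estimates on $K_n$ and $\widehat{K}_n$ over $|t|\leq 1/h$; (ii) exploit the exponential $\beta$-mixing of $(X_{j\Delta})_{j\geq 1}$, which is available under Assumption \ref{Ass1}(i)--(ii), together with an exponential inequality for bounded $\beta$-mixing sequences applied to the truncated variables $X_{j\Delta}\mathbf{1}\{|X_{j\Delta}|\leq \theta_n\}$, to control $|\widehat{\varphi}(u)-\varphi(u)|$ uniformly over the relevant frequency window; and (iii) use the lower bound on $h$ in Assumption \ref{Ass1}(vi) to guarantee that the remaining error $|\widehat{\sigma}_n^2(x)-\sigma_n^2(x)|$ is of strictly smaller order than $\sigma_n^2(x)$ itself. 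Given those inputs, the theorem collapses to the Slutsky argument outlined above.
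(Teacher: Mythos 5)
Your proposal is correct and coincides with the paper's (one-line) proof: Theorem~\ref{CLT2} is deduced from Theorem~\ref{CLT1} together with Lemma~\ref{L55} (which, via Propositions~\ref{LemB1} and~\ref{PrpB1}, gives $h^{2\alpha-1}L^2(1/h)\sigma_n^2(x_\ell)\to d_\ell$, so the rescaling you describe replaces $\Sigma$ by $I_N$) and Lemma~\ref{Var_approx} (which supplies the Slutsky step replacing $\sigma_n$ by $\widehat\sigma_n$; for fixed design points the separation condition~(\ref{design_points}) is trivially satisfied). You have merely spelled out the Slutsky argument that the paper leaves implicit.
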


\section{High-dimensional Central Limit Theorems}\label{HDCLT}
 In Section \ref{Main_results}, we present a multivariate (or finite-dimensional) central limit theorem for $\hat{k}_{\sharp}$. In this section, we present a high-dimensional central limit theorems as a refinement of Theorem \ref{CLT2}. Moreover, we propose some methods for constructing confidence bands for the $k$-function in Section \ref{Unif_CB} as an application of those results. 

\subsection{High-dimensional central limit theorems for $\widehat{k}_{\sharp}$}

For $1 \leq j \leq n$ and $1 \leq \ell \leq N$, let
\begin{align*}
Z_{n,j}(x_{\ell}) &= X_{j\Delta}1\{|X_{j\Delta}| \leq \theta_{n}\}K_{n}\left({x_{\ell} - X_{j\Delta} \over h}\right),\\
W_{n}(x_{\ell}) & = {1 \over \sigma_{n}(x_{\ell})\sqrt{n}}\sum_{j=1}^{n}(Z_{n,j}(x_{\ell}) - E[Z_{n,1}(x_{\ell})]) = {\sqrt{n}h \over \sigma_{n}(x_{\ell})}Z_{n}(x_{\ell}), 
\end{align*}
and let $I \subset (0,\infty)$ be an interval with finite Lebesgue measure $|I|$, $0<x_{1}<\cdots < x_{N}<\infty$, $x_{j} \in I$, $\ell=1,\hdots, N$. We assume that 
\begin{align}\label{design_points}
\min_{1 \leq k \neq \ell \leq N}|x_{k} - x_{\ell}| \gg h^{1-2\delta},
\end{align} 
and this implies that $N \ll h^{2\delta-1}$. Therefore, $N$ is allowed to go to infinity as $n \to \infty$. 

\begin{lemma}\label{Var_approx}
Under Assumption \ref{Ass1} and (\ref{design_points}), we have 
\begin{align*}
\max_{1 \leq \ell \leq N}\left|{\widehat{\sigma}_{n}^{2}(x_{\ell}) \over \sigma^{2}_{n}(x_{\ell})} -1\right| = o_{P}((\log n)^{-1}). 
\end{align*}
\end{lemma}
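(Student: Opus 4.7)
\textbf{Proof plan for Lemma \ref{Var_approx}.} The plan is to insert the infeasible statistic that uses the true kernel $K_n$,
\[
\widetilde{\sigma}_n^2(x) = \frac{1}{n}\sum_{j=1}^n Z_{n,j}^2(x) - \left(\frac{1}{n}\sum_{j=1}^n Z_{n,j}(x)\right)^2,
\]
and decompose
\[
\frac{\widehat{\sigma}_n^2(x_\ell)}{\sigma_n^2(x_\ell)} - 1 = \underbrace{\frac{\widehat{\sigma}_n^2(x_\ell) - \widetilde{\sigma}_n^2(x_\ell)}{\sigma_n^2(x_\ell)}}_{=:A_\ell} + \underbrace{\frac{\widetilde{\sigma}_n^2(x_\ell) - \Var(Z_{n,1}(x_\ell))}{\sigma_n^2(x_\ell)}}_{=:B_\ell} + \underbrace{\frac{\Var(Z_{n,1}(x_\ell)) - \sigma_n^2(x_\ell)}{\sigma_n^2(x_\ell)}}_{=:C_\ell},
\]
and show that $\max_\ell |A_\ell|$, $\max_\ell |B_\ell|$, and $\max_\ell |C_\ell|$ are each $o_P((\log n)^{-1})$.

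\textbf{Step 1 (mixing remainder $C_\ell$).} By stationarity, $\Var(Z_{n,1}) - \sigma_n^2 = -2\sum_{j=2}^n(1-(j-1)/n)\Cov(Z_{n,1},Z_{n,j})$. Since the summands are bounded by $\theta_n \|K_n\|_{\mathbb{R}}$ with $\|K_n\|_{\mathbb{R}} \lesssim h^{-\alpha}L^{-1}(1/h)$ from (\ref{bound_Kn}), Rio's covariance inequality for the exponentially $\beta$-mixing sequence $\{X_{j\Delta}\}$ yields $|\Cov(Z_{n,1},Z_{n,j})| \lesssim \theta_n^2 h^{-2\alpha}L^{-2}(1/h) e^{-\beta_1 j\Delta}$. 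The condition $\Delta \gtrsim \log n$ makes the resulting geometric sum negligible relative to $\sigma_n^2 \sim h^{-2\alpha+1}L^{-2}(1/h)$, uniformly in $\ell$.

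\textbf{Step 2 (sample-variance fluctuation $B_\ell$).} I would apply a Bernstein-type inequality for bounded $\beta$-mixing sequences (in the spirit of Lemma \ref{L1}) separately to the centered sums
\[
S_\ell^{(1)} = \frac{1}{n}\sum_{j=1}^n\{Z_{n,j}(x_\ell) - E[Z_{n,1}(x_\ell)]\}, \qquad S_\ell^{(2)} = \frac{1}{n}\sum_{j=1}^n\{Z_{n,j}^2(x_\ell) - E[Z_{n,1}^2(x_\ell)]\}.
\]
The summands are bounded by $\theta_n \|K_n\|_{\mathbb{R}}$ and $\theta_n^2 \|K_n\|_{\mathbb{R}}^2$ respectively, and their marginal variances are controlled via the density bound on $\pi$ together with the change-of-variables argument used to obtain $\Var(Z_{n,1}) \sim h^{-2\alpha+1}L^{-2}(1/h)$ (cf.\ the remark preceding the lemma). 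A union bound over $N \ll h^{2\delta-1}$ design points produces a rate of order $\theta_n\sqrt{\log n/(n h^{2\alpha+1}L^{2}(1/h))}$; the upper bandwidth bound in Condition (vi) combined with $\theta_n \sim n^{1/2}(\log n)^{-3}$ then makes $\max_\ell |B_\ell|$ of order $o_P((\log n)^{-1})$.

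\textbf{Step 3 (kernel-estimation error $A_\ell$).} Expanding the squares gives
\[
\widehat{\sigma}_n^2 - \widetilde{\sigma}_n^2 = \frac{1}{n}\sum_{j=1}^n X_{j\Delta}^2 1\{|X_{j\Delta}| \leq \theta_n\}\bigl\{\widehat{K}_n^2 - K_n^2\bigr\}\Bigl(\tfrac{x_\ell - X_{j\Delta}}{h}\Bigr) - \bigl(\text{cross term}\bigr),
\]
and the key tool is the identity
\[
\widehat{K}_n(y) - K_n(y) = \frac{1}{2\pi}\int_{\mathbb{R}} e^{-ity}\varphi_W(t)\,\frac{\varphi(t/h) - \widehat{\varphi}(t/h)}{\varphi(t/h)\widehat{\varphi}(t/h)}\,dt.
\]
Using the lower bound $|\varphi(t/h)| \gtrsim h^\alpha L(1/h)$ on $\supp \varphi_W \subset [-1,1]$ from Lemma \ref{L-1}, together with a uniform control of $|\widehat{\varphi}(u)-\varphi(u)|$ for $|u| \leq 1/h$ of order $\sqrt{\log n / n}$ (obtained from Lemma \ref{L1} plus an equicontinuity/chaining argument) and the event that $|\widehat{\varphi}|$ is bounded below on the same range with high probability, I can bound $\|\widehat{K}_n - K_n\|_{\mathbb{R}}$ in probability. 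Plugging this into the expansion and again using $\|K_n\|_{\mathbb{R}} \lesssim h^{-\alpha}L^{-1}(1/h)$ and boundedness of $|X_{j\Delta}|$ by $\theta_n$ makes $\max_\ell|A_\ell|$ shrink faster than $(\log n)^{-1}$ once divided by $\sigma_n^2$.

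\textbf{Main obstacle.} The delicate step is Step 3: the denominator $\varphi(t/h)$ may be as small as $h^\alpha L(1/h)$, so an adequate uniform bound on $|\widehat{\varphi}-\varphi|/|\varphi|$ requires carefully combining the rate $\sqrt{\log n / n}$ from the exponential inequality, the ill-posedness factor $h^{-\alpha}L^{-1}(1/h)$, and the truncation level $\theta_n \sim n^{1/2}(\log n)^{-3}$. This balance is exactly what the lower bandwidth bound in Condition (vi) of Assumption \ref{Ass1} is engineered to accommodate.
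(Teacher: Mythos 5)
Your three-way decomposition into a kernel-estimation error $A_\ell$, a sample-variance fluctuation $B_\ell$, and a mixing remainder $C_\ell$ matches the structure of the paper's proof: $C_\ell$ is Proposition \ref{PrpB1}, $A_\ell$ is the first block of the proof (where $\|\widehat{K}_n - K_n\|_{\mathbb{R}}$ is propagated through the sums), and $B_\ell$ is the pair of exponential-inequality bounds (\ref{sig_max1})--(\ref{sig_max2}). So the route is the same. However, there is a genuine gap in the way you propose to carry out Step 2, and a weaker version of the same issue in Step 1.

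In Step 2 you take the sup bound on the summands to be $\theta_n \|K_n\|_{\mathbb{R}}$ for $S_\ell^{(1)}$ and $\theta_n^2\|K_n\|_{\mathbb{R}}^2$ for $S_\ell^{(2)}$. For $S_\ell^{(2)}$ this is fatal: after normalizing by $\sigma_n^2(x_\ell) \sim h^{-2\alpha+\delta+1}$, the Bernstein ``boundedness'' contribution scales like $\theta_n^2 \|K_n\|_{\mathbb{R}}^2 (\log n)/(n\,\sigma_n^2) \sim (\log n)^{-5} h^{-1}$, using $\theta_n^2 \sim n(\log n)^{-6}$; since $h^{-1}$ grows polynomially in $n$ under Condition (vi), this does \emph{not} tend to zero, let alone at rate $o((\log n)^{-1})$. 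The rate you display, $\theta_n\sqrt{\log n/(nh^{2\alpha+1}L^2(1/h))}$, has the same problem: squaring and comparing to $(\log n)^{-2}$ would require $h^{-(2\alpha+1)} \ll (\log n)^3 L^2(1/h)$, which is impossible when $h^{-1}$ is a power of $n$. The ingredient you are missing is precisely the refinement supplied by Lemma \ref{L5}: because $h^{\alpha+1}L(1/h)|xK_n(x)| \lesssim \min(1,1/x^2)$, the product $y K_n\bigl((x-y)/h\bigr)$ is uniformly bounded by a constant times $h^{-\alpha}L^{-1}(1/h)$ for $x$ in the compact interval $I$ --- the kernel's decay absorbs the $X_{j\Delta}$ factor, so the truncation level $\theta_n$ never enters the sup bound of $Z_{n,j}(x)$. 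This is exactly why the paper can invoke Theorem~2.18 in \cite{FaYa03} with $b = h^{-\delta-1}$ (i.e.\ $\|Z_{n,j}^2/\sigma_n^2\|_{\mathbb{R}} \lesssim h^{-\delta-1}$, no $\theta_n$), which makes the boundedness term $h^{-\delta-1}\log n/n \cdot h^{2\alpha-\delta-1}L^2 = h^{2\alpha-2\delta-2}L^2\log n/n$ negligible under (vi). Without this observation, Step~2 does not close.

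Step 1 uses the same crude sup bound $\theta_n\|K_n\|_{\mathbb{R}}$ in Rio's covariance inequality, giving $|\Cov(Z_{n,1},Z_{n,j})| \lesssim \theta_n^2 h^{-2\alpha}L^{-2}e^{-\beta_1 j\Delta}$. The paper instead controls $E[|Z_{n,1}|^3]^{1/3} \lesssim h^{(1-3\alpha)/3}L^{-1}$ via $\|x^3\pi\|_{\mathbb{R}} \lesssim 1$ (Lemma \ref{L0}) together with Lemma \ref{L5}, and applies the $\beta$-mixing covariance inequality (Proposition 2.5 of \cite{FaYa03}) to obtain $|\Cov(Z_{n,1},Z_{n,j+1})| \lesssim h^{2/3-2\alpha}L^{-2}e^{-j\Delta\beta_1/3}$ with no $\theta_n$; only then does $\Delta \gtrsim \log n$ yield the $o(h^{-2\alpha+\delta+1})$ bound of Proposition \ref{PrpB1}. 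Your version would force the implicit constant in $\Delta \gtrsim \log n$ to be large enough to kill an extra factor of $\theta_n^2 h^{-1}$, which is not guaranteed by Assumption \ref{Ass1}. In short: the overall plan is right, but both Steps 1 and 2 must be run with the sharper pointwise/moment bounds on $Z_{n,j}(x)$ that come from Lemma \ref{L5} and Lemma \ref{L0}, not with the naive $\theta_n$-inflated bounds.
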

\begin{remark}
Since
\begin{align*}
\left|{\widehat{\sigma}_{n}^{2}(x) \over \sigma^{2}_{n}(x)} -1\right| = \left|{\widehat{\sigma}_{n}(x) \over \sigma_{n}(x)} -1\right|\left|{\widehat{\sigma}_{n}(x) \over \sigma_{n}(x)} + 1\right| \geq \left|{\widehat{\sigma}_{n}(x) \over \sigma_{n}(x)} -1\right|
\end{align*}
for any $0 < x < \infty$, Lemma \ref{Var_approx} implies 
\begin{align*}
\max_{1 \leq \ell \leq N}\left|{\widehat{\sigma}_{n}(x_{\ell}) \over \sigma_{n}(x_{\ell})} -1\right| = o_{P}((\log n)^{-1}). 
\end{align*}
\end{remark}
\begin{theorem}\label{Gauss_approx2}
Under Assumption \ref{Ass1} and (\ref{design_points}), we have 
\[
\sup_{t \in \mathbb{R}}\left|P\left(\max_{1 \leq \ell \leq N}|W_{n}(x_{\ell})| \leq t\right) - P\left(\max_{1 \leq \ell \leq N}|Y_{\ell}| \leq t\right)\right| \to 0,\ \text{as}\ n \to \infty, 
\]
where $Y = (Y_{1},\hdots, Y_{N})^{\top}$ is the standard normal random vector in $\mathbb{R}^{N}$. 
\end{theorem}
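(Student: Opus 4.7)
The plan is to reduce the statement to a high-dimensional central limit theorem for a sum of independent random vectors, using a big-block/small-block decomposition to handle the $\beta$-mixing dependence of $\{X_{j\Delta}\}$, then to apply a Chernozhukov-Chetverikov-Kato-type Gaussian comparison on max-type statistics, and finally to use anti-concentration to replace the covariance of the approximating Gaussian by the identity.

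First, in view of the approximation \eqref{Approx}-\eqref{EP approx} and Lemma \ref{Var_approx}, it suffices to work with
\[
\widetilde W_n(x_\ell)=\frac{1}{\sigma_n(x_\ell)\sqrt{n}}\sum_{j=1}^n\bigl(Z_{n,j}(x_\ell)-E[Z_{n,1}(x_\ell)]\bigr),
\]
and to show that the bias remainder $\max_{1\le\ell\le N}|W_n(x_\ell)-\widetilde W_n(x_\ell)|$ is $o_P((\log N)^{-1/2})$. Combined with the Gaussian anti-concentration inequality $P(|\max_\ell|Y_\ell|-t|\le\varepsilon)\lesssim \varepsilon\sqrt{\log N}$, any additive error of smaller order than $(\log N)^{-1/2}$ can be absorbed into the Kolmogorov distance. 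Using Assumption \ref{Ass1}(vi) and the undersmoothing bound in Remark \ref{remark_band}, the deterministic bias is of this order.

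Next I partition the index set $\{1,\ldots,n\}$ into alternating big blocks of length $a_n$ and small blocks of length $b_n$ with $b_n/a_n\to 0$, $a_n\Delta\to\infty$, and $a_n\ll n$; concretely I would take $a_n$ a power of $\log n$ times $1/\Delta$ and $b_n\sim \log^2 n/\Delta$. Because the process $X$ is exponentially $\beta$-mixing (see the remark after Assumption \ref{Ass1}), an application of Berbee's coupling lemma produces independent copies of the big-block sums with coupling error bounded by the total number of blocks times $\beta_X(b_n\Delta)\lesssim n e^{-\beta_1 b_n\Delta}$, which is negligible for our choice of $b_n$. The small-block contributions can be bounded by a Bernstein-type inequality for bounded $\beta$-mixing sequences (the truncation at $\theta_n$ is exactly what enables this, as noted in Remark 2.1), yielding that the small-block sums are uniformly $o_P((\log N)^{-1/2})$ in $\ell$. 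The same truncation controls $\|Z_{n,j}(x_\ell)\|_\infty\lesssim \theta_n\|K_n\|_\infty$, giving Lyapunov-type moments needed in the CLT.

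The main step is then the high-dimensional CLT for the independent big-block sums. Here I would apply the Chernozhukov-Chetverikov-Kato Gaussian approximation for maxima of sums of independent random vectors in growing dimension $N$. The hypotheses require uniform control of second, third, and fourth moments of the coordinates, together with a sub-exponential envelope. The moments $E[Z_{n,1}^2(x_\ell)]\sim \sigma_n^2(x_\ell)$ are computed exactly via Propositions \ref{LemB1} and \ref{PrpB1} (already invoked in the Remark following the definition of $\widehat\sigma_n^2(x)$), while the boundedness $|Z_{n,j}(x_\ell)|\lesssim \theta_n\|K_n\|_\infty$ and the bound \eqref{bound_Kn} on $h^\alpha L(1/h)|K_n|$ give the required Lyapunov ratios under Assumption \ref{Ass1}(vi). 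This produces a Gaussian vector $(G_1,\ldots,G_N)$ with $\Cov(G_k,G_\ell)=\Cov(W_n(x_k),W_n(x_\ell))$ such that
\[
\sup_{t\in\R}\Bigl|P\bigl(\max_{1\le\ell\le N}|W_n(x_\ell)|\le t\bigr)-P\bigl(\max_{1\le\ell\le N}|G_\ell|\le t\bigr)\Bigr|\to 0.
\]

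To finish, I replace $G$ by the standard normal $Y$ using a Gaussian comparison lemma (e.g. Chernozhukov-Chetverikov-Kato's max-stability bound): if $\Delta_{\Sigma}:=\max_{k,\ell}|\Cov(G_k,G_\ell)-\delta_{k\ell}|$, then the Kolmogorov distance between the laws of $\max_\ell|G_\ell|$ and $\max_\ell|Y_\ell|$ is $O(\Delta_\Sigma^{1/3}\log^{2/3}N)$. On the diagonal, $\Cov(G_\ell,G_\ell)=1$ by construction. Off the diagonal, one must show that the cross-covariances of $W_n(x_k)$ and $W_n(x_\ell)$ vanish rapidly. This is precisely where the separation condition \eqref{design_points} and the estimate \eqref{bound_Kn} enter: the product $K_n((x_k-y)/h)K_n((x_\ell-y)/h)$, integrated against $\pi$, is controlled by the tail decay $|K_n(u)|\lesssim u^{-2}h^{-\alpha}L^{-1}(1/h)$ for large $u$, so at lag $0$ the cross-covariance is $o((\log N)^{-1})$; for nonzero lags, $\beta$-mixing combined with the Davydov inequality kills them geometrically in the lag, with the prefactor handled because $a_n\Delta\to\infty$.

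The main obstacle is the simultaneous control of these three approximations at the scale $(\log N)^{-1/2}$ required by anti-concentration: the big-block/small-block coupling error, the CLT Lyapunov ratios under the ill-posedness governed by the regularly varying envelope $L(1/h)$ in \eqref{bound_Kn}, and the off-diagonal decay of the covariances. All three are exactly accommodated by the interval $\left((\log n)^{5}/n\right)^{1/(2+2\alpha-\delta)}\ll h\ll (1/(n\log n))^{1/(1+2r+2\alpha-\delta)}$ in Assumption \ref{Ass1}(vi); verifying that these constants fit together is where the bookkeeping lies.
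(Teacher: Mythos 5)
Your proposal takes essentially the same route as the paper: reduce to a high-dimensional Gaussian approximation for the normalized sums $W_n(x_\ell)$ using a big-block/small-block decomposition for the $\beta$-mixing chain, then pass to the standard normal via a Gaussian comparison lemma of Chernozhukov--Chetverikov--Kato type, with the off-diagonal covariances controlled by the separation condition \eqref{design_points} and the kernel tail bound \eqref{bound_Kn}. The only methodological difference is that you carry out the block decomposition and Berbee coupling by hand, whereas the paper invokes the off-the-shelf Theorem B.1 in \cite{ChChKa13} which performs exactly this coupling internally; the paper's corresponding steps are Theorem \ref{Gauss_approx} (verify the block, mixing, and moment conditions of the cited theorem to obtain a Gaussian $\check{Y}_n$ with block covariance $q^{-1}E[W_{I_1}W_{I_1}^\top]$), Lemma \ref{Gauss_Cov_compare} (show this block covariance is within $O(h^\delta)$ of the identity using Propositions \ref{LemB1}, \ref{PrpB1}, and \ref{PrpB4}), and Proposition \ref{Gauss_compare} (apply Theorem 2 of \cite{ChChKa15}).

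One point to flag: your opening paragraph about approximating $W_n$ by $\widetilde W_n$ and controlling a deterministic bias is out of place here. In Theorem \ref{Gauss_approx2} the object $W_n(x_\ell)$ is \emph{already} the centered, normalized sum $\sigma_n(x_\ell)^{-1}n^{-1/2}\sum_j(Z_{n,j}(x_\ell) - E[Z_{n,1}(x_\ell)])$, so there is no bias term to absorb and $\widetilde W_n = W_n$ exactly. The bias control via Lemma \ref{L4} and the variance-estimate consistency via Lemma \ref{Var_approx} enter only when passing from Theorem \ref{Gauss_approx2} to the feasible statement Theorem \ref{f_Gauss_approx} involving $\widehat{k}_\sharp$ and $\widehat\sigma_n$; you appear to have conflated the two statements. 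This does not affect the validity of the remainder of your argument, which correctly addresses the theorem at hand.
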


\begin{remark}\label{GA_proof}
Theorem \ref{Gauss_approx2} can be shown in two steps. In the first step, we approximate the distribution of $\max_{1 \leq \ell \leq N}|W_{n}(x_{\ell})|$ by that of $\max_{1 \leq \ell \leq N}|\check{Y}_{n, \ell}|$. Here, $\check{Y}_{n} = (\check{Y}_{n,1},\hdots, \check{Y}_{n,N})^{\top}$ is a centered normal random vector with covariance matrix $E[\check{Y}_{n}\check{Y}_{n}^{\top}] = q^{-1}E[W_{I_{1}}W_{I_{1}}^{\top}]$ where $q = q_{n}$ is a sequence of integers with $q_{n} \to \infty$ and $q_{n} = o(n)$ as $n \to \infty$, and 
\begin{align*}
W_{I_{1}} = \left(\sum_{k = 1}^{q}\left({Z_{n,k}(x_{1}) - E[Z_{n,1}(x_{1})] \over \sigma_{n}(x_{1})}\right),\hdots, \sum_{k = 1}^{q}\left({Z_{n,k}(x_{N}) - E[Z_{n,1}(x_{N})] \over \sigma_{n}(x_{N})}\right)\right)^{\top}.
\end{align*}
In the second step, we approximate the distribution of $\max_{1 \leq \ell \leq N}|\check{Y}_{n,\ell}|$ by that of $\max_{1 \leq \ell \leq N}|Y_{\ell}|$. For this, we compare the variance-covariance matrices $E[\check{Y}_{n}\check{Y}_{n}^{\top}]$ and $E[YY^{\top}] = I_{N}$ of two Gaussian random vectors $\check{Y}_{n}$ and $Y$ to establish 
\[
\sup_{t \in \mathbb{R}}\left|P\left(\max_{1 \leq \ell \leq N}|\check{Y}_{n}(x_{\ell})| \leq t\right) - P\left(\max_{1 \leq \ell \leq N}|Y_{\ell}| \leq t\right)\right| \to 0,\ \text{as}\ n \to \infty.
\]
Refer to proofs of Theorem \ref{Gauss_approx} and Proposition \ref{Gauss_compare} in Appendix A. 
\end{remark}

The well-known result in the extreme value theory shows that $\max_{1 \leq \ell \leq N}|Y_{\ell}| = O_{P}(\sqrt{\log N})$, for independent standard normal random variables $Y_{\ell}$, $\ell = 1,\hdots, N$ (see Example 1.1.7 in \cite{deFe06}). Then, Theorem \ref{Gauss_approx2} implies that $\max_{1 \leq \ell \leq N}|W_{n}(x_{\ell})| = O_{P}(\sqrt{\log n})$ since $\log N \lesssim \log (h^{2\delta-1}) \lesssim \log n$ under Assumption \ref{Ass1}. We can also show that 
\begin{align}\label{k_unif_approx}
{\sqrt{n}h(\hat{k}_{\sharp}(x_{\ell}) - k_{\sharp}(x_{\ell})) \over \sigma_{n}(x_{\ell})} &= W_{n}(x_{\ell}) + o_{P}((\log n)^{-1/2})
\end{align}
uniformly in $x \in \{x_{1},\hdots, x_{N}\}$. Therefore, together with Lemma \ref{Var_approx} and (\ref{k_unif_approx}), we have 
\begin{align*}
{\sqrt{n}h(\hat{k}_{\sharp}(x) - k_{\sharp}(x)) \over \hat{\sigma}_{n}(x)} &= {\sigma_{n}(x) \over \hat{\sigma}_{n}(x)} {\sqrt{n}h(\hat{k}_{\sharp}(x) - k_{\sharp}(x)) \over \sigma_{n}(x)}\\
&= {\sigma_{n}(x) \over \hat{\sigma}_{n}(x)}\{W_{n}(x) + o_{P}((\log n)^{-1/2})\}\ \text{(from (\ref{k_unif_approx}))}\\
&= \{1 + o_{P}((\log n)^{-1})\}\{W_{n}(x) + o_{P}((\log n)^{-1/2})\}\ \text{(from Lemma \ref{Var_approx})} \\
&= W_{n}(x) + o_{P}((\log n)^{-1/2})\ \text{(from $\max_{1 \leq \ell \leq N}|W_{n}(x_{\ell})| = O_{P}(\sqrt{\log n})$)}
\end{align*}
uniformly in $x \in  \{x_{1},\hdots,x_{N}\}$. 
This yields the following theorem.
\begin{theorem}\label{f_Gauss_approx}
Under Assumption \ref{Ass1} and (\ref{design_points}), we have 
\begin{align*}
\sup_{t \in \mathbb{R}}\left|P\left(\max_{1 \leq \ell \leq N}\left|{\sqrt{n}h(\widehat{k}_{\sharp}(x_{\ell}) - k_{\sharp}(x_{\ell})) \over \widehat{\sigma}_{n}(x_{\ell})}\right| \leq t\right) - P\left(\max_{1 \leq \ell \leq N}|Y_{\ell}| \leq t\right)\right| \to 0,\ \text{as}\ n \to \infty,
\end{align*}
where $Y = (Y_{1},\hdots, Y_{N})^{\top}$ is the standard normal random vector in $\mathbb{R}^{N}$. 
\end{theorem}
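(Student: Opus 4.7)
The work preceding the statement has effectively reduced the theorem to a calibration of rates. Write $T_n := \max_{1\le\ell\le N}|\sqrt{n}h(\widehat{k}_\sharp(x_\ell) - k_\sharp(x_\ell))/\widehat{\sigma}_n(x_\ell)|$ and $M_n := \max_{1\le\ell\le N}|W_n(x_\ell)|$. Combining Lemma \ref{Var_approx}, which gives $\max_\ell|\widehat{\sigma}_n(x_\ell)/\sigma_n(x_\ell) - 1| = o_P((\log n)^{-1})$, with the approximation (\ref{k_unif_approx}), yields $T_n = M_n + R_n$ where $R_n = o_P((\log n)^{-1/2})$; this is precisely the chain of identities displayed just above the theorem statement. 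Theorem \ref{Gauss_approx2} then supplies the Kolmogorov approximation $\sup_{t\in\mathbb{R}}|P(M_n\le t) - P(\|Y\|_\infty\le t)|\to 0$, where I write $\|Y\|_\infty := \max_{1\le\ell\le N}|Y_\ell|$ for brevity.

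The remaining task is to transfer the Gaussian approximation from $M_n$ to $T_n$ in the Kolmogorov metric. Pick a sequence $\eta_n\downarrow 0$ with $P(|R_n|>\eta_n(\log n)^{-1/2})\to 0$, which is possible since $R_n\sqrt{\log n}\to 0$ in probability, and set $\varepsilon_n := \eta_n^{1/2}(\log n)^{-1/2}$. Then $P(|R_n|>\varepsilon_n)\to 0$ while $\varepsilon_n\sqrt{\log n}\to 0$. For each $t\in\mathbb{R}$ the triangle inequality gives
\[
P(T_n\le t) \le P(M_n\le t+\varepsilon_n) + P(|R_n|>\varepsilon_n),
\]
together with the corresponding lower bound. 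Apply Theorem \ref{Gauss_approx2} at the points $t\pm\varepsilon_n$ and then invoke the Gaussian anti-concentration inequality of Chernozhukov, Chetverikov, and Kato, which for the maximum of the absolute values of the coordinates of a standard normal vector in $\mathbb{R}^N$ yields
\[
\sup_{t\in\mathbb{R}} P(|\|Y\|_\infty - t|\le\varepsilon_n) \lesssim \varepsilon_n\sqrt{1\vee\log N}.
\]
Because $N\ll h^{2\delta-1}$ by (\ref{design_points}) and $h$ satisfies the lower bound in Assumption \ref{Ass1}(vi), $\log N\lesssim \log n$, so the right-hand side is $o(1)$. Combining these estimates gives $\sup_t|P(T_n\le t) - P(\|Y\|_\infty\le t)|\to 0$, which is the assertion.

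The only delicate point is the rate calibration: the $(\log n)^{-1/2}$ slack from linearizing the studentized statistic has to be absorbed by the $\sqrt{\log N}$ factor from Gaussian anti-concentration, and the two are compatible precisely because $\log N\lesssim \log n$ under Assumption \ref{Ass1}(vi) and (\ref{design_points}). Everything else is a routine assembly of ingredients already established in the preceding text.
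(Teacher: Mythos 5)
Your argument is correct and follows the same route as the paper: Lemma \ref{Var_approx} and the uniform linearization (\ref{k_approx}) (equivalently (\ref{k_unif_approx})) reduce the studentized max to $\max_\ell|W_n(x_\ell)| + o_P((\log n)^{-1/2})$, and Theorem \ref{Gauss_approx2} supplies the Gaussian approximation for $\max_\ell|W_n(x_\ell)|$. The paper's own proof compresses the final transfer into ``Therefore, we finally obtain the desired result,'' whereas you usefully make explicit the step it tacitly relies on, namely Gaussian anti-concentration at scale $\varepsilon_n$ with $\varepsilon_n\sqrt{\log N}\to 0$, which is compatible with the $o_P((\log n)^{-1/2})$ slack precisely because $\log N\lesssim\log n$ under Assumption~\ref{Ass1}(vi) and (\ref{design_points}).
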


\subsection{Confidence bands for the $k$-function}\label{Unif_CB}
In this section, we discuss methods for constructing confidence bands for the $k$-function over $I = [a,b] \subset (0,\infty)$. Let $\xi_{1},\hdots, \xi_{N}$ be i.i.d. standard normal random variables, and, for $\tau \in (0,1)$, let $q_{\tau}$ satisfy 
\[
P\left(\max_{1 \leq j \leq N}|\xi_{j}| > q_{\tau}\right) = \tau.
\]
Then, 
\[
\widehat{C}_{1-\tau}(x_{\ell}) = \left[\widehat{k}_{\sharp}(x_{\ell}) \pm {\widehat{\sigma}_{n}(x_{\ell}) \over \sqrt{n}h}q_{\tau}\right],\ \ell = 1,\hdots, N 
\]
are joint asymptotic $100(1-\tau)$\% confidence intervals for $k_{\sharp}(x_{1}),\hdots, k_{\sharp}(x_{N})$. Theorem \ref{f_Gauss_approx} implies that we can construct confidence bands by linear interpolation of simultaneous confidence intervals $\{\hat{C}_{1-\tau}(x_{\ell})\}_{\ell = 1}^{N}$. 
If the sample size $n$ is sufficiently large, we can take a sufficiently large number of design points $N$. Therefore, proposed confidence bands can be arbitrary close to uniform confidence bands in such cases. We comment on the asymptotic validity of the confidence bands in Section \ref{Discuss}.

\section{Simulations}\label{Sec_sim}

\subsection{Simulation framework}
In this section, we present simulation results to see the finite-sample performance of the central limit theorems and the proposed confidence bands in Sections \ref{Main_results} and \ref{HDCLT}. We consider the following data generating process. 
\begin{align}\label{Levy_OU_DGP}
dX_{t} = -\lambda X_{t}dt + dJ_{\lambda t}
\end{align}
where $J_{t} = \sum_{j = 1}^{N_{t}}U_{j}$ is a compound Poisson process with intensity $\alpha$ and Gamma jump distribution with shape parameter $2$ and rate parameter $1$. Particularly, we consider three models, that is, $(\alpha, \lambda) = (2.1, 0.5), (3,0.5)$, and $(3,0.75)$.

As a kernel function, we use a flat-top kernel, which is defined by its Fourier transform
\begin{equation}
\varphi_{W}(u) =
\begin{cases}
1 &\text{if } \vert u \vert \leq c\\
\exp\left\{ \frac{-b \exp(-b/(\vert u \vert - c)^2)}{(\vert u \vert - 1)^2} \right\} &\text{if } c < \vert u \vert < 1\\
0 &\text{if } 1 \leq \vert u \vert 
\end{cases}
\label{eq: flap-top}
\end{equation}
where $0 < c < 1$ and $b > 0$. It must be noted that $\varphi_{W}$ is infinitely differentiable with $\varphi^{(\ell)}_{W}(0) = 0$ for all $\ell \geq 1$. This ensures that its inverse Fourier transform $W$ is of infinite order, that is, $\int_{\R} x^{\ell} W(x) dx=0$ for all integers $\ell \geq 1$ (cf. \cite{McPo04}). In our simulation study, we set $b = 1$ and $c = 0.05$. We also set the sample size $n$ and the time span $\Delta$ as $n = 500$ and $\Delta = 1$.

Now, we discuss bandwidth selection. We use a method that is similar to that proposed in \cite{KaKu17}. They adopt an idea of \cite{BiDuHoMu07} on bandwidth selection in density deconvolution. From a theoretical perspective, for our confidence bands to work, we have to choose bandwidths that are of a smaller order than the optimal rate for estimation under the loss function (or a ``discretized version'' of $L^{\infty}$-distance) $\max_{1 \leq \ell \leq N}| \hat{k}_{\sharp}(x_{\ell}) - k_{\sharp}(x_{\ell})|$. At the same time, choosing a very small bandwidth results in an extremely wide confidence band. Therefore, we should choose a bandwidth ``slightly'' smaller than the optimal one that minimizes $\max_{1 \leq \ell \leq N}| \hat{k}_{\sharp}(x_{\ell}) - k_{\sharp}(x_{\ell})|$. 
We employ the following rule for bandwidth selection. Let $\hat{k}_{h}$ be the spectral estimate with bandwidth $h$. 
\begin{enumerate}
\item Set a pilot bandwidth $h^{P} >0$ and make a list of candidate bandwidths $h_{j} = jh^{P}/J$ for $j=1,\dots,J$. 

\item Choose the smallest bandwidth $h_{j} \ (j \geq 2)$ such that the adjacent value $\max_{1 \leq \ell \leq N}| \hat k_{h_j}(x_{\ell}) - \hat k_{h_{j-1}}(x_{\ell})|$ is smaller than $\kappa \times \min \{ \max_{1 \leq \ell \leq N}| \hat{k}_{h_{k}}(x_{\ell}) - \hat{k}_{h_{k-1}}(x_{\ell}) | : k=2,\dots,J \}$ for some $\kappa > 1$. 
\end{enumerate}
In our simulation study, we set $h^{P} = 1, J = 20$, and $\kappa= 1.5$. This rule would choose a bandwidth ``slightly'' smaller than one 
that is intuitively the optimal bandwidth for the estimation of $k$ (as long as the threshold value $\kappa$ is reasonably chosen).

Figure \ref{fig:A2} shows five realizations of the discretized $L^{\infty}$-distance between the true $k$-function and estimates $\hat{k}_{\sharp}$ for different bandwidth values (left) and between the estimates of $k$ with adjacent bandwidth values (right) when $(\alpha, \lambda) = (2.1, 0.5)$. We find that the discretized $L^{\infty}$-distance between the estimates of $k$ with adjacent bandwidth values behave similarly to that between the true $k$-function and estimates $\hat{k}_{\sharp}$ for different bandwidth values. Hence, we can expect that, by using the proposed method for bandwidth selection, we can choose a ``good'' bandwidth for the construction of confidence bands.

\begin{remark}
In practice, it is also recommended to use visual information to find out on how $\max_{1\leq \ell \leq N}| \hat{k}_{h_{j}}(x_{\ell}) - \hat{k}_{h_{j-1}}(x_{\ell})|$ behaves as $j$ increases when determining the bandwidth. 
\end{remark}

Figure \ref{fig:A3} shows the normalized empirical distributions of $\widehat{k}_{\sharp}(x)$ at $x = 1.5$(left), $x = 2$(center), and $x = 2.5$(right) when $(\alpha, \lambda) = (2.1,0.5)$. The number of Monte Carlo iteration is 1,000 for each case. As seen from these figures, the central limit theorem implied by Theorem \ref{CLT2} holds true. 

Table \ref{table: T2_rev} presents simulation results of the cases when $(\alpha, \lambda) = (2.1, 0.5), (3,0.5)$, and $(3,0.75)$. 
We find that more accurate results are achieved when $\alpha = 3$ than when $\alpha = 2.1$. In general, the empirical coverage probabilities could be more accurate as the intensity of the Poisson process increases (see the comments on Figure \ref{fig:A4}). Overall, we can also find that the empirical coverage probabilities are reasonably close to the nominal coverage probabilities.

Figure \ref{fig:A4} shows the $85\%$(dark gray), $95\%$(gray), and $99\%$(light gray) confidence bands for the $k$-function when $(\alpha, \lambda) = (2.1,0.5)$. We find that the proposed confidence bands capture the monotonicity of the $k$-function and the width of confidence bands tend to increase as the design point becomes distant from the origin. The latter point can be partially attributed to the property of the L\'evy measure $\nu$ since the $k$-function is given by $k(x) = \nu((x,\infty))$ : For any (Borel) set $A \subset [0,\infty)$, $\nu(A)$ coincides with the expected number of jumps falling in $A$ in the unit time, that is, $\nu(A) = E[\sum_{0<t<1}1(J_{t} - J_{t-} \in A)]$, where $J_{t-} = \lim_{s \uparrow t}J_{s}$. Therefore, jumps of a larger size are less frequently observed since $\nu([0,\infty))<\infty$, in our simulation study. Further, the results also correspond to a well-known fact in nonparametric density estimation. Since few observations fall in the tail regions, the nonparametric estimation of a given density function tends to be less accurate in the tail area than in regions where the probability mass is concentrated. 

\begin{figure}[H]
  \begin{center}
    \begin{tabular}{cc}

      \begin{minipage}{0.5\hsize}
        \begin{center}
          \includegraphics[clip, width=6cm]{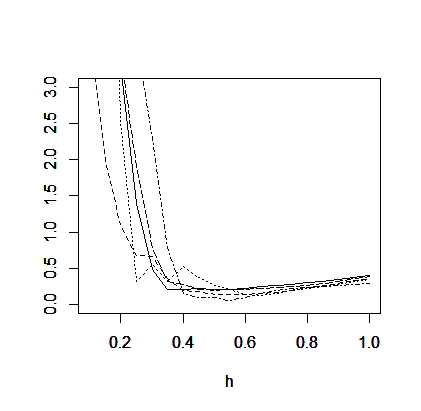}
        \end{center}
      \end{minipage}

      \begin{minipage}{0.5\hsize}
        \begin{center}
          \includegraphics[clip, width=6cm]{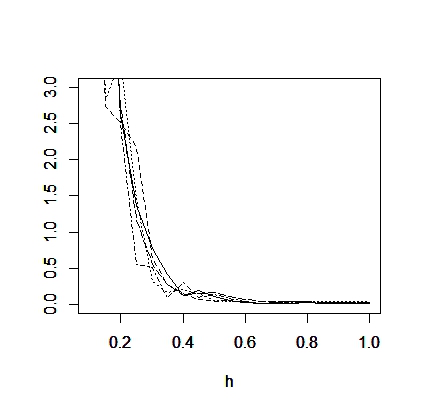}
        \end{center}
      \end{minipage}

    \end{tabular}
    \caption{Discrete $L^{\infty}$-distance between the true $k$-function and estimates $\hat{k}_{\sharp}$ (left) and between estimates of $k_{\sharp}$ (right) for different bandwidth values when $(\alpha, \lambda) = (2.1,0.5)$. We set $(n,\Delta) = (500, 1)$, $I = [1, 3]$, and $x_{\ell} = 1 + 0.2(\ell-1)$, $\ell = 1,\hdots, 11$. \label{fig:A2}}   
  \end{center}
\end{figure}

\begin{figure}[H]
  \begin{center}
    \begin{tabular}{ccc}

      \begin{minipage}{0.3\hsize}
        \begin{center}
          \includegraphics[clip, width=6cm]{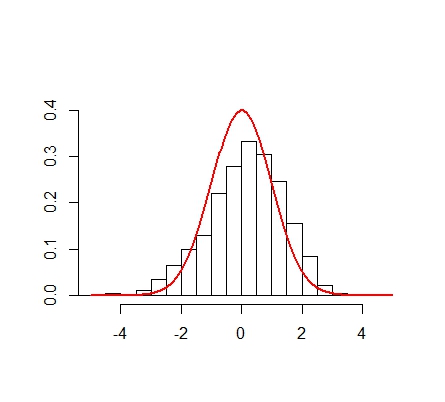}
        \end{center}
      \end{minipage}

      \begin{minipage}{0.3\hsize}
        \begin{center}
          \includegraphics[clip, width=6cm]{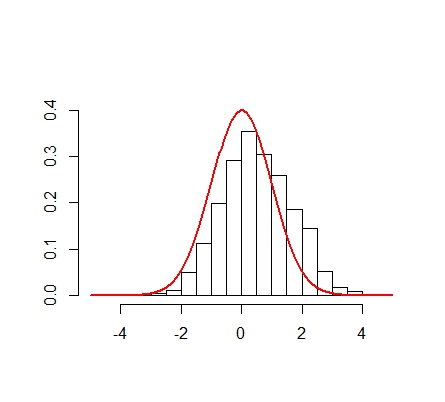}
        \end{center}
      \end{minipage}

      \begin{minipage}{0.33\hsize}
        \begin{center}
          \includegraphics[clip, width=6cm]{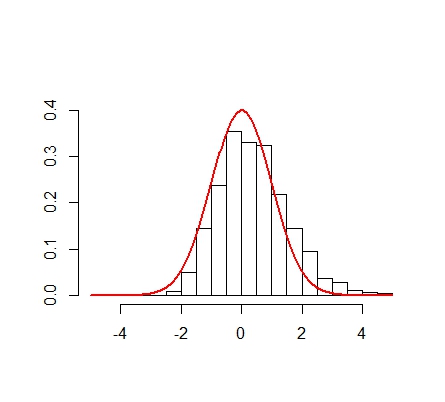}
        \end{center}
      \end{minipage}

    \end{tabular}
    \caption{Normalized empirical distributions of estimates at $x = 1.5$(left), $x=2$(center), and $x = 2.5$(right) when $(\alpha, \lambda) = (2.1,0.5)$. The red line is the density of the standard normal distribution. We set $(n, \Delta) = (500, 1)$.\label{fig:A3}}
  \end{center}
\end{figure}

{\small
\begin{table}[H]
\begin{center}
\begin{tabular}{cccccc}
\hline \hline
\multirow{2}{*}{\shortstack{Cov. Prob. \\$(1-\tau)$}}       &                          &         & \multicolumn{3}{c}{Model} \\ \cline{4-6}
                            &      & ($\alpha, \lambda$) & (2.1, 0.5) & ($3, 0.5$) & ($3, 0.75$)\\ \hline

\multirow{2}{*}{0.85}  & $I_{1}$  &   &  0.768       &  0.892        &0.848                 \\
                              & $I_{2}$   &  &   0.808      &  0.904        &0.888                 \\ \hline
\multirow{2}{*}{0.95}  & $I_{1}$  &  &  0.896       &0.976          &0.964           \\
                              & $I_{2}$   &  &   0.908      &0.972          &0.980           \\ \hline
\multirow{2}{*}{0.99}  & $I_{1}$  &  & 0.952        &0.988          &0.992        \\
                              & $I_{2}$  &   &0.956         &0.984          &0.996        \\ \hline \hline \\[5pt]
\end{tabular}
\caption{Empirical coverage probabilities of the confidence bands on $I_{1} = [1.5, 3.5]$ with $x_{\ell} = 1.5 + 0.2(\ell-1)$ and $I_{2} = [2,4]$ with $x_{\ell} = 2 + 0.2(\ell-1)$, $\ell = 1,\hdots,11$, based on 250 Monte Carlo repetitions. \label{table: T2_rev}}
\end{center}
\end{table}
}

\begin{figure}[H]
  \begin{center}
      \includegraphics[clip, width=6.5cm]{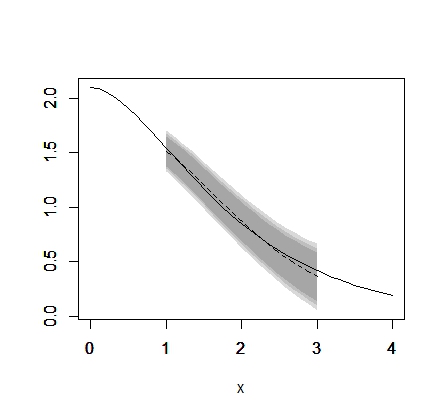}
     \caption{Estimates of $k$ with $85\%$(dark gray), $95\%$(gray), and $99\%$(light gray) confidence bands. The solid line corresponds to the true $k$-function. We set $(n,\Delta) = (500, 1)$, $I = [1,3]$, and $x_{\ell} = 1 + 0.2(\ell-1)$, $\ell = 1,\hdots, 11$. }
    \label{fig:A4}
  \end{center}
\end{figure}

\section{Discussions}\label{Discuss}

In this section, we discuss (1) the regularity condition on the $k$-function (Condition (iii) in Assumption \ref{Ass1}) and its relationship with the construction of our estimator, and (2) asymptotic properties of the proposed confidence bands. 

\subsection{Discussion on Condition (iii) in Assumption \ref{Ass1}}

We considered a symmetrized version of the $k$-function $k_{\sharp}$ and presented asymptotic properties of its estimator $\hat{k}_{\sharp}$. We also assumed a ``global'' regularity condition of $k_{\sharp}$ (Condition (iii) in Assumption \ref{Ass1}) to obtain a suitable bound of the deterministic bias of $\hat{k}_{\sharp}$. It must be noted that $k_{\sharp}$ is continuous at the origin, and if $k_{\sharp}$ has bounded $r$th derivative on $\mathbb{R}$ for some $r \geq 0$, then the deterministic bias of $\widehat{k}_{\sharp}$, which is given by $\|[k_{\sharp}*(h^{-1}W(\cdot/h))] - k_{\sharp}\|_{\R}$, is $O(h^{r})$ (Lemma \ref{L4} in Appendix A). However, if we restrict the class of kernel functions, which satisfy Condition (v) in Assumption \ref{Ass1}, then we can relax the ``global'' H\"older continuity. 

(i) When $1/2<r\leq 2$, we can use the symmetric second-order kernel functions. In this case, we can replace Condition (iii) in Assumption \ref{Ass1} with a ``local'' H\"older continuity of $k$ on $I^{\epsilon_{0}} = \{y \in \mathbb{R} :  |x - y| <\epsilon_{0}, \forall x \in I\}$, which does not include the origin. In fact, by taking a symmetric second-order kernel function $W_{2}$, we have, for any $x \in I$,
\begin{align*}
&\left | \int_{\R} \{ k_{\sharp}(x-yh) - k_{\sharp}(x)  \} W_{2}(y) dy \right |\\
&\quad = \left | \int_{|y| \leq \epsilon_{0}h^{-1}} \{ k(x-yh) - k(x)  \} W_{2}(y) dy \right |+ \left | \int_{|y|>\epsilon_{0}h^{-1}} \{ k(x-yh) - k(x)  \} W_{2}(y) dy \right |\\
&\quad \leq \left | \int_{|y| \leq \epsilon_{0}h^{-1}} \left [ \{ k(x-yh) - k(x) - \sum_{\ell=1}^{p} \frac{k^{(\ell)}(x)}{\ell !} (-yh)^{\ell}\} \right ] W_{2}(y) dy \right | +  2\|k\|_{\mathbb{R}}\int_{|y| > \epsilon_{0}h^{-1}}|W_{2}(y)|dy\\
&\quad \leq H_{0}h^{r} \int_{\R} |y|^{r} |W_{2}(y)| dy + {2h^{2}\|k\|_{\R} \over \epsilon_{0}^{2}}\int_{\mathbb{R}}|y|^{2}|W_{2}(y)|dy \lesssim h^{r}, 
\end{align*}
where $H_{0}:= \sup_{x,y \in I^{\epsilon_{0}}, x \neq y} \frac{|k^{(p)} (x) - k^{(p)}(y)|}{|x-y|^{r-p}} < \infty$, $\sum_{\ell=1}^{0} = 0$ and $0!=1$ by convention. We note that $k_{\sharp} = k$ on $I^{\epsilon_{0}}$. Hence, we can bound $\|[k_{\sharp}*(h^{-1}W_{2}(\cdot/h))] - k_{\sharp}\|_{I} = \|[k*(h^{-1}W_{2}(\cdot/h))] - k\|_{I}$. 

(ii) When $r> 2$, it would be difficult to weaken the global H\"older continuity assumption on $k_{\sharp}$ since symmetric ``finite order'' kernel functions do not satisfy higher-order properties. However, we can use the flat-top kernel function $W_{\infty}$, which is of ``infinite order,'' defined by its Fourier transform $\varphi_{W_{\infty}}$ to relax Condition (iii) in Assumption 3.1. Refer to (\ref{eq: flap-top}) for the definition. Indeed, $\varphi_{W_{\infty}}$ is infinitely differentiable and supported in $[-1,1]$; this implies that $|W_{\infty}(x)| = o(|x|^{-\ell})$ as $|x| \to \infty$ for all $\ell \geq 1$ (this follows from changes of variables) and $|x|^{r}|W(x)|$ is integrable. Then, we have 
\begin{align*}
&\left | \int_{\R} \{ k_{\sharp}(x-yh) - k_{\sharp}(x)  \} W_{\infty}(y) dy \right |\\
&\quad \leq \left | \int_{|y| \leq \epsilon_{0}h^{-1}} \left [ \{ k(x-yh) - k(x) - \sum_{\ell=1}^{p} \frac{k^{(\ell)}(x)}{\ell !} (-yh)^{\ell}\} \right ] W_{\infty}(y) dy \right | +  2\|k\|_{\mathbb{R}}\int_{|y| > \epsilon_{0}h^{-1}}|W_{\infty}(y)|dy\\
&\quad \leq H_{0}h^{r} \int_{\R} |y|^{r} |W_{\infty}(y)| dy + {2h^{r}\|k\|_{\R} \over \epsilon_{0}^{2}}\int_{\mathbb{R}}|y|^{r}|W_{\infty}(y)|dy \lesssim h^{r}.
\end{align*}
It is also shown that $\|[k*(h^{-1}W_{\infty}(\cdot/h))] - k\|_{I} \lesssim h^{r}$ for $1/2<r \leq 2$. 

Based on the discussion above, if we set the kernel function as the flat-top kernel $W_{\infty}$, then we can replace the global H\"older continuity (Condition (iii) in Assumption \ref{Ass1}) with the following local H\"older continuity. 

\textbf{Condition (iii)'} Let $r>1/2$, and let $p$ be the integer such that $p<r\leq p+1$. The function $k$ is $p$-times differentiable on $I^{\epsilon_{0}}$, which does not include the origin. Additionally, $k^{(p)}$ is $(r-p)$-H\"older continuous, that is, 
\[
H_{0}:=\sup_{x,y \in I^{\epsilon_{0}},x \neq y}{|k^{(p)}(x) - k^{(p)}(y)| \over |x-y|^{r-p}} <\infty. 
\]

Now, we set the kernel function $W = W_{\infty}$. In this case, we can use another natural (and simple) estimator for $k$ at $x>0$, which is given by
\[
\widehat{k}_{0}(x) = {-i \over 2\pi}\int_{\mathbb{R}}e^{-itx}{\widehat{\varphi}'_{\theta_{n}}(t) \over \widehat{\varphi}(t)}\varphi_{W_{\infty}}(th)dt.
\]
Additionally, Theorems \ref{CLT2} and \ref{f_Gauss_approx} hold by replacing $\hat{k}_{\sharp}$ with $\hat{k}_{0}$. We summarize the discussion so far as the following theorem. 

\begin{theorem}\label{non_symmetric_CLT}
Suppose Conditions (i), (ii), (iv), (v), and (vi) in Assumption \ref{Ass1}, and Condition (iii) hold true. Set the kernel function $W = W_{\infty}$. 
\begin{itemize}
\item[(i)] Then, for any $0<x_{1}<\hdots <x_{N}<\infty$, we have 
\[
\sqrt{n}h\left({\widehat{k}_{0}(x_{1}) - k(x_{1}) \over \widehat{\sigma}(x_{1})},\hdots, {\widehat{k}_{0}(x_{N}) - k(x_{N}) \over \widehat{\sigma}(x_{N})}\right)^{\top} \stackrel{d}{\to} N(0, I_{N}),
\] 
where $I_{N}$ is the $N$ by $N$ identity matrix and $\widehat{\sigma}_{n}(x) = \sqrt{\widehat{\sigma}^{2}_{n}(x)}$.
\item[(ii)] Additionally, suppose that (\ref{design_points}) holds. Then, we have 
\begin{align*}
\sup_{t \in \mathbb{R}}\left|P\left(\max_{1 \leq \ell \leq N}\left|{\sqrt{n}h(\widehat{k}_{0}(x_{\ell}) - k(x_{\ell})) \over \widehat{\sigma}_{n}(x_{\ell})}\right| \leq t\right) - P\left(\max_{1 \leq \ell \leq N}|Y_{\ell}| \leq t\right)\right| \to 0,\ \text{as}\ n \to \infty,
\end{align*}
where $Y = (Y_{1},\hdots, Y_{N})^{\top}$ is the standard normal random vector in $\mathbb{R}^{N}$. 
\end{itemize}
\end{theorem} 
We omit the proofs of Theorem \ref{non_symmetric_CLT} (i) and (ii) since the proofs are specializations of the proofs of Theorems \ref{CLT2} and \ref{f_Gauss_approx}.

\subsection{Discussion on the confidence bands}

Our method can be seen as an alternative method for constructing confidence bands based on a functional central limit theorem (FCLT) if the FCLT for the L\'evy measure $\nu$ is available (but to the best of our knowledge, such a result has not been achieved in the literature on nonparametric inference of L\'evy-driven SDEs). Moreover, the proofs clarify that if we strengthen the condition 
\[
h \ll \left({1 \over n\log n}\right)^{1/(1 + 2r + 2\alpha -\delta)}
\]
in Assumption 3.1 (vi) to $h^{r}\sqrt{nh^{2\alpha + 1 -\delta}(\log n)} = o(n^{-c})$ for some (sufficiently small) constant $c>0$, then there would exist a positive constant $c'$ such that the approximation of the high-dimensional central limit theorem holds at the rate $n^{-c'}$. This shows an advantage of our method to construct confidence bands based on the intermediate Gaussian approximation when compared to a method based on the Gumbel approximation. The coverage error of the latter is known to be logarithmically slow because of the slow convergence of normal extrema; refer to \cite{Ha91}. The proposed method is inspired by the idea developed in \cite{HoLe12}. If we take $x_{\ell} \in I$, $\ell = 1,\hdots, N$ to satisfy $\min_{1 \leq k \neq \ell \leq N}|x_{k} - x_{\ell}| = O(h^{1/2})$ (in this case, the condition (4.1) is satisfied), then $|x_{\ell} - x_{\ell-1}| \to 0$ uniformly for $\ell = 2,\hdots, N$. Therefore, for $x$ in $I$, 
\[
c_{L}(x) \leq k(x) \leq c_{U}(x)
\]
where 
\begin{align*}
c_{L}(x) &= \left({\widehat{k}_{\sharp}(x_{\ell}) - \widehat{k}_{\sharp}(x_{\ell-1}) - (\widehat{\sigma}_{n}(x_{\ell}) - \widehat{\sigma}_{n}(x_{\ell}))q_{\tau}/\sqrt{n}h \over x_{\ell}-x_{\ell-1}}\right)(x - x_{\ell-1}) +  \widehat{k}_{\sharp}(x_{\ell-1}) - {\widehat{\sigma}_{n}(x_{\ell-1}) \over \sqrt{n}h}q_{\tau},\\
c_{U}(x) &= \left({\widehat{k}_{\sharp}(x_{\ell}) - \widehat{k}_{\sharp}(x_{\ell-1}) + (\widehat{\sigma}_{n}(x_{\ell}) - \widehat{\sigma}_{n}(x_{\ell}))q_{\tau}/\sqrt{n}h \over x_{\ell}-x_{\ell-1}}\right)(x - x_{\ell-1}) +  \widehat{k}_{\sharp}(x_{\ell-1}) + {\widehat{\sigma}_{n}(x_{\ell-1}) \over \sqrt{n}h}q_{\tau}
\end{align*}
(if $x_{\ell-1} \leq x \leq x_{\ell}$ ($\ell = 2,\hdots,N$)) can be interpreted as an ``asymptotic'' $100(1-\tau)$\% uniform confidence band for $k$ on $I$. In fact, we can show that, as $n \to \infty$, 
\begin{align*}
P\left(\max_{1 \leq \ell \leq N}\left|{\sqrt{n}h(\widehat{k}_{\sharp}(x_{\ell}) - k_{\sharp}(x_{\ell})) \over \widehat{\sigma}_{n}(x_{\ell})}\right| \leq q_{\tau} \right) \to 1-\tau. 
\end{align*}
The same comments apply even if we replace $\hat{k}_{\sharp}$ with $\hat{k}_{0}$. See Appendix B for the asymptotic validity of the proposed confidence bands.

\section*{Acknowledgements}
I am grateful to the Editor Domenico Marinucci, an associate editor, and anonymous referees for their constructive comments that helped improve the quality of the paper. One referee kindly pointed out some relevant references that I had overlooked. I am also grateful to Kengo Kato for carefully reading the manuscript and for his helpful suggestions and encouragements. In addition, I thank Hiroki Masuda for his useful comments. This work is partially supported by Grant-in-Aid for Research Activity Start-up (19K20881) from the JSPS and the Research Institute of Mathematical Sciences, a Joint Usage/Research Center located in Kyoto University.

\appendix


\section{Proofs}

\subsection{Proofs for Section 3}\label{Proof_Sec3}
\begin{proof}[\textbf{Proof of Lemma \ref{L-1}}]
Observe that
\[
|\varphi(u)| = |\varphi(-u)| = \exp\left(\int_{0}^{\infty}(\cos(ux) -1){k(x) \over x}dx\right).
\]
For $x>1$, define 
\[
L(x) = \exp\left(\int_{1/x}^{1}(\alpha - k(y)){dy \over y}\right). 
\]
For any $\lambda >0$, 
\begin{align*}
{L(\lambda x) \over L(x)} &= \exp\left(\int_{1/\lambda x}^{1/x}(\alpha - k(y)){dy \over y}\right) = \exp\left(\int_{1/\lambda}^{1}(\alpha - k(z/x)){dz \over z}\right) \to 1,\ \text{as}\ x \to \infty. 
\end{align*}
Therefore, $L$ is a slowly varying function at $\infty$. Consider the following decomposition of $I(u):=\int_{0}^{\infty}(\cos (ux)-1)k(x)x^{-1}dx$.
\begin{align*}
I(u) &= \left(\int_{0}^{1/u} + \int_{1/u}^{1} + \int_{1}^{\infty}\right)(\cos(ux)-1){k(x) \over x}dx\\
&=: I_{1}(u) + I_{2}(u) + I_{3}(u). 
\end{align*} 
Now we evaluate three terms $I_{j}(u)$, $j=1,2,3$. 
First, by Riemann-Lebesgue theorem, 
\[
I_{3}(u) \to -\int_{1}^{\infty}{k(x) \over x}dx,\ \text{as}\ u \to \infty. 
\]
Moreover, 
\[
I_{1}(u) = \int_{0}^{1}(\cos(y)-1){k(y/u) \over y}dy \to \alpha \int_{0}^{1}(\cos(y)-1){dy \over y},\ \text{as}\ u \to \infty. 
\]
We also have that
\begin{align*}
I_{2}(u) + \alpha \log u - \log L(u) &= \int_{1/u}^{1}\cos(ux){k(x) \over x}dx\\
&= \int_{1}^{u}\cos(y){k(y/u) \over y}dy =: \widetilde{I}_{2}(u).
\end{align*}
Since $\int_{1}^{u}\cos(y)y^{-1}dy$ is convergent as $u \to \infty$ and $k$ is monotone decreasing function, we have that 
\[
\limsup_{u \to \infty}|\widetilde{I}_{2}(u)| \lesssim \left|\int_{1}^{\infty}{\cos(y) \over y}dy\right|<\infty.
\]
So, we complete the proof. 
\end{proof}

For the proof of Theorem \ref{CLT2}, we prepare some auxiliary results. 
\begin{lemma}\label{L0}
Assume Conditions (i), (ii) and (iv) in Assumption \ref{Ass1}. Then we have that
the measure $\pi$ and $x^{3}\pi(dx)$ has a bounded Lebesgue density on $\mathbb{R}$.
\end{lemma}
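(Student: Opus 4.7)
The plan is to establish both densities by Fourier inversion: the density of $\pi$ itself from integrability of $\varphi$, and the density of $x^{3}\pi(dx)$ from integrability of $\varphi'''$.

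For the density of $\pi$, Lemma \ref{L-1} gives $|\varphi(t)| \sim B L(|t|)|t|^{-\alpha}$ at infinity, and since a slowly varying $L$ satisfies $L(|t|) = o(|t|^{\delta})$ for every $\delta > 0$ (Remark \ref{tail_decay}), I obtain $|\varphi(t)| \lesssim (1+|t|)^{-\alpha+\delta}$ for some small $\delta>0$. Condition (ii) provides $\alpha > 2$, so $\varphi \in L^{1}(\mathbb{R})$, and Fourier inversion yields a bounded continuous density.

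For the density of $x^{3}\pi(dx)$, I would start from $\varphi = e^{\psi}$ with
\[
\psi(t) = \int_{0}^{\infty}(e^{itx}-1)\frac{k(x)}{x}\,dx,
\]
noting that $\varphi$ is nonvanishing because $\pi$ is infinitely divisible. Condition (i) permits differentiation under the integral up to order three and gives $\psi^{(j)}(t) = i\varphi_{k}^{(j-1)}(t)$ for $j=1,2,3$, and the chain rule produces
\[
\varphi'''(t) = \bigl\{\psi'''(t) + 3\psi'(t)\psi''(t) + [\psi'(t)]^{3}\bigr\}\varphi(t).
\]
By condition (iv) the bracketed factor is $O((1+|t|)^{-2})$, with the slowest-decaying summand being $\psi''' = i\varphi_{k}''$, so
\[
|\varphi'''(t)| \lesssim (1+|t|)^{-2-\alpha+\delta},
\]
which lies in $L^{1}(\mathbb{R})$ since $\alpha > 2$. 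Meanwhile, condition (i) combined with the moment correspondence $\int x^{3}\pi(dx)<\infty \iff \int_{1}^{\infty}x^{2}k(x)dx<\infty$ for the infinitely divisible $\pi$ (with Lévy density $k(x)/x$) shows that $\mu(dx) := x^{3}\pi(dx)$ is a finite positive measure whose Fourier transform is $i\varphi'''(t)$. Fourier inversion of this $L^{1}$ function then yields the required bounded continuous density.

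The main obstacle I anticipate is controlling the decay of $\varphi'''$: the dominant contribution is $\psi'''(t)\varphi(t) = i\varphi_{k}''(t)\varphi(t)$, so the proof really hinges on combining condition (iv)'s decay $|\varphi_{k}''| \lesssim (1+|t|)^{-2}$ with the polynomial decay of $\varphi$ from Lemma \ref{L-1}. The strict inequality $\alpha>2$ in condition (ii) is used precisely to absorb the slowly varying slack $L(|t|) = O(|t|^{\delta})$ while still giving an integrable bound.
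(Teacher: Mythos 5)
Your proof is correct, but it takes a genuinely different route from the paper's. You establish both densities by Fourier inversion: $\varphi\in L^{1}$ follows from the decay $|\varphi(t)|\lesssim(1+|t|)^{-\alpha+\delta}$ supplied by Lemma~\ref{L-1} (via Remark~\ref{tail_decay}) and $\alpha>2$, and $\varphi'''\in L^{1}$ follows from combining the chain-rule expansion $\varphi'''=(\varphi_{k}''+3\varphi_{k}\varphi_{k}'+\varphi_{k}^{3})\varphi$ with condition (iv) and the same decay of $\varphi$. The paper proceeds differently: it quotes Theorem~28.4 of Sato to get a bounded continuous density for $\pi$ directly (using $k(0)=\alpha>2$), and for $x^{2}\pi$ and $x^{3}\pi$ it reads the same chain-rule identities for $\varphi''$ and $\varphi'''$ \emph{backwards} as convolution identities, namely $x^{2}\pi=(k\ast\pi)\ast k+(xk)\ast\pi$ and an analogous expression for $x^{3}\pi$, and then bounds $\|x^{3}\pi\|_{\mathbb{R}}$ by elementary Young-type estimates $\|f\ast g\|_{\mathbb{R}}\leq\|f\|_{\mathbb{R}}\|g\|_{L^{1}}$ using only $\|k\|_{\mathbb{R}},\|k\|_{L^{1}},\|xk\|_{\mathbb{R}},\|x^{2}k\|_{\mathbb{R}}<\infty$. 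The trade-off: your Fourier-inversion argument is conceptually uniform (one mechanism handles both claims) but leans on the polynomial-decay estimate for $\varphi$, which is nontrivial machinery; the paper's convolution argument is more elementary for the $x^{3}\pi$ part (no decay of $\varphi$ needed beyond what Sato's theorem already absorbs) at the cost of invoking an external theorem for $\pi$ itself. Your argument is complete modulo routine bookkeeping (the sign $-i$ vs. $i$ in $\widehat{x^{3}\pi}=-i\varphi'''$, and the justification of differentiation under the integral from $\int x^{3}\pi(dx)<\infty$, which you correctly reduce to $\int_{1}^{\infty}x^{2}k(x)\,dx<\infty$ from condition (i)).
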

\begin{proof}
By Theorem 28.4 in \cite{Sa99}, $\pi$ has a bounded continuous Lebesgue density on $\mathbb{R}$. Also from the relation 
\begin{align*}
\varphi''(u) &= \varphi(u)\varphi_{k}^{2}(u) + \varphi(u)\varphi_{k}'(u),\\
\varphi'''(u) &= \varphi(u)\varphi_{k}^{3}(u) + 3\varphi(u)\varphi_{k}(u)\varphi'_{k}(u) + \varphi(u)\varphi''_{k}(u)\\
&= \left(\varphi(u)\varphi_{k}^{2}(u)\right)\varphi_{k}(u) + 3\left(\varphi(u)\varphi'_{k}(u)\right)\varphi_{k}(u) + \varphi(u)\varphi''_{k}(u),
\end{align*}
we see that
\begin{align*}
x^{2}\pi &= (k \ast \pi) \ast k + (xk) \ast \pi, \\
x^{3}\pi &= ((x^{2}\pi) - (xk) \ast \pi) \ast k + 3((xk) \ast \pi) \ast k + (x^{2}k) \ast \pi.
\end{align*} 
Therefore $x^{2}\pi$ has a Lebesgue density $x^{2}\pi(x)$ with 
\[
\|x^{2}\pi\|_{\mathbb{R}} \lesssim \|k\|_{\mathbb{R}}\|k\|_{L^{1}} + \|xk\|_{L^{1}} \lesssim 1.
\]
Here, $\|f\|_{L^{p}} = \left(\int_{\mathbb{R}}|f(x)|^{p}dx\right)^{1/p}$. Moreover, $x^{3}\pi$ has a Lebesgue density $x^{3}\pi(x)$ with 
\[
\|x^{3}\pi\|_{\mathbb{R}} \lesssim (\|x^{2}\pi\|_{\mathbb{R}} + \|xk\|_{L^{1}})\|k\|_{L^{1}} + 3\|xk\|_{L^{1}}\|k\|_{L^{1}} + \|x^{2}k\|_{L^{1}} \lesssim 1.
\]
\end{proof}

\begin{lemma}\label{L1}
Assume Conditions (i) and 
(vi) in Assumption \ref{Ass1}. Then we have  
\[
\|f_{1} - f_{2}\|_{[-h^{-1}, h^{-1}]} = O_{P}(n^{-1/2}\log n)
\]
for $(f_{1},f_{2}) = (\widehat{\varphi}, \varphi), (\widehat{\varphi}'_{\theta_{n}}, \varphi'_{\theta_{n}})$ where $\varphi'_{\theta_{n}}(u) := E[\widehat{\varphi}'_{\theta_{n}}(u)] $ and 
\begin{align*}
\|\varphi'_{\theta_{n}} - \varphi'\|_{[-h^{-1}, h^{-1}]} &= o(n^{-1/2}\log n),\\
\|\widehat{\varphi}'_{\theta_{n}} - \widehat{\varphi}'\|_{[-h^{-1}, h^{-1}]} &= o_{P}(n^{-1/2}\log n).
\end{align*}
\end{lemma}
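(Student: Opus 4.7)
The plan is to prove the four claims in order: first the two uniform stochastic rates (1 and 2), then the two truncation errors (3 and 4), which are essentially deterministic moment computations once the right moment bound is in hand.

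Step 1 (pointwise concentration via mixing Bernstein). For each fixed $u$, I decompose both $\widehat{\varphi}(u) - \varphi(u)$ and $\widehat{\varphi}'_{\theta_n}(u) - \varphi'_{\theta_n}(u)$ into real and imaginary parts, giving centered sums $n^{-1}\sum_j Y_j(u)$ of summands from the stationary exponentially $\beta$-mixing sequence $\{X_{j\Delta}\}$ (mixing established by Theorem 4.3 of \cite{Ma04} as noted in the paper). For the characteristic function, $|Y_j(u)| \leq 2$ and $\mathrm{Var}(Y_j(u)) \leq 1$; for the truncated derivative, $|Y_j(u)| \leq 2\theta_n$ and $\mathrm{Var}(Y_j(u)) \leq E[X_1^2 1\{|X_1|\leq \theta_n\}] \leq E[X_1^2]$, which is finite by Condition (i) (the Lévy measure moment $\int_1^\infty x^{2+\epsilon}k(x)\,dx<\infty$ translates into $E[|X_1|^{3+\epsilon}]<\infty$ for the self-decomposable stationary law). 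Applying a Bernstein-type inequality for bounded $\beta$-mixing sequences (e.g., the block coupling / Merlevède–Peligrad–Rio type estimate, as used in the references \cite{Be10} cited in Remark~3.1 of the paper) gives, for fixed $u$,
\begin{equation*}
P\bigl(|\widehat{\varphi}(u)-\varphi(u)| > Cn^{-1/2}\log n\bigr) \lesssim \exp(-c(\log n)^2),
\end{equation*}
and, because with $t = Cn^{1/2}\log n$ the ratio $t^2/(n\sigma^2 + \theta_n t\log n)\gtrsim (\log n)^2$ under $\theta_n \sim n^{1/2}(\log n)^{-3}$,
\begin{equation*}
P\bigl(|\widehat{\varphi}'_{\theta_n}(u)-\varphi'_{\theta_n}(u)| > Cn^{-1/2}\log n\bigr) \lesssim \exp(-c(\log n)^2).
\end{equation*}
The truncation at $\theta_n$ is precisely what makes the second bound possible: without it the summands would be unbounded and the Bernstein exponent would deteriorate.

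Step 2 (upgrade to uniform on $[-h^{-1},h^{-1}]$ by a grid + Lipschitz argument). Both maps $u \mapsto \widehat{\varphi}(u)-\varphi(u)$ and $u \mapsto \widehat{\varphi}'_{\theta_n}(u)-\varphi'_{\theta_n}(u)$ have derivatives in $u$ bounded (in modulus) by $n^{-1}\sum_j|X_{j\Delta}| + E|X_1|$ and $n^{-1}\sum_j X_{j\Delta}^2 + E[X_1^2 1\{|X_1|\leq\theta_n\}]$ respectively. Both bounds are $O_P(1)$ by the stationary ergodic theorem (finite first and second moments). Picking a grid of spacing $\eta_n = n^{-1/2}\log n$, the number of grid points on $[-h^{-1},h^{-1}]$ is $N_h \lesssim h^{-1} n^{1/2}/\log n$, which is polynomial in $n$ since $h \gtrsim ((\log n)^5/n)^{1/(2+2\alpha-\delta)}$ by Condition (vi). A union bound combined with Step 1 gives
\begin{equation*}
P\bigl(\|\widehat{\varphi}-\varphi\|_{[-h^{-1},h^{-1}]} > C n^{-1/2}\log n\bigr) \leq N_h e^{-c(\log n)^2} + o(1) \to 0,
\end{equation*}
and similarly for $\widehat{\varphi}'_{\theta_n}-\varphi'_{\theta_n}$; the Lipschitz envelopes are absorbed on high-probability events because $\eta_n \cdot O_P(1) \lesssim n^{-1/2}\log n$.

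Step 3 (truncation bias). Write $\varphi'_{\theta_n}(u)-\varphi'(u) = -E[iX_1 e^{iuX_1}1\{|X_1|>\theta_n\}]$, whence uniformly in $u$,
\begin{equation*}
|\varphi'_{\theta_n}(u)-\varphi'(u)| \leq E\bigl[|X_1|1\{|X_1|>\theta_n\}\bigr] \leq \theta_n^{-(1+\epsilon)}E\bigl[|X_1|^{2+\epsilon}\bigr] \lesssim n^{-(1+\epsilon)/2}(\log n)^{3(1+\epsilon)} = o(n^{-1/2}\log n).
\end{equation*}
For the stochastic counterpart, $|\widehat{\varphi}'_{\theta_n}(u)-\widehat{\varphi}'(u)| \leq n^{-1}\sum_j |X_{j\Delta}|1\{|X_{j\Delta}|>\theta_n\}$ uniformly in $u$, whose expectation equals the deterministic bias just computed and is $o(n^{-1/2}\log n)$; Markov's inequality then gives $o_P(n^{-1/2}\log n)$.

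The main obstacle is Step 1 combined with Step 2: locating a Bernstein-type deviation inequality for exponentially $\beta$-mixing sequences whose exponent is strong enough that, after the polynomial-in-$n$ union bound over the grid, the probability still vanishes, while keeping the summand magnitude controlled by the truncation threshold $\theta_n$. This is exactly the purpose of the $\theta_n$-truncation as flagged in Remark~2.1, and why $\theta_n$ is taken as large as $n^{1/2}(\log n)^{-3}$ but no larger: any larger choice would spoil the Bernstein exponent $(\log n)^2$, and any smaller choice would inflate the truncation bias in Step 3.
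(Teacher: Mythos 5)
Your proof is correct, but it takes a genuinely different route from the paper's. For the two $O_P(n^{-1/2}\log n)$ rates, the paper gives a one-line proof: it simply invokes Proposition 9.4 of \cite{Be11}, which is a ready-made uniform concentration result for empirical (weighted) characteristic functions of mixing sequences over growing intervals. You instead reconstruct such a result from scratch: pointwise Bernstein-type concentration for bounded $\beta$-mixing sums, followed by a grid-plus-Lipschitz discretization. This buys you a self-contained argument and makes explicit why $\theta_{n}\sim n^{1/2}(\log n)^{-3}$ is the sweet spot (large enough that the truncation bias in Step~3 is negligible, small enough that the Bernstein exponent stays $(\log n)^{2}$ and survives the polynomial union bound), whereas the paper's citation hides this trade-off inside \cite{Be11}. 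For the truncation errors, your argument and the paper's are Markov-inequality computations of the same kind; you use the $(1+\epsilon)$-power ($E[|X_{1}|^{2+\epsilon}]$) where the paper uses the cube ($E[|X_{1}|^{3}]$, giving $\theta_{n}^{-2}$), and both suffice under Condition~(i).

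One small caution, which applies equally to the paper's citation-based proof and to your reconstruction: the lemma as stated assumes only Conditions (i) and (v), but both routes tacitly rely on (a) exponential $\beta$-mixing of $\{X_{j\Delta}\}$, which the paper establishes under (i) and (ii) via \cite{Ma04}, and (b) $h^{-1}$ being at most polynomial in $n$ so that the union bound over the grid (or, in the paper's case, the interval length in Be11's Proposition 9.4) does not overwhelm the $\exp(-c(\log n)^{2})$ tail, which comes from the lower bound on $h$ in Condition~(vi). You explicitly invoke (vi) in Step~2; strictly speaking neither proof closes under the stated hypotheses alone. This is an imprecision inherited from the paper rather than a flaw in your argument.
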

\begin{proof}
The first result follows from Proposition 9.4 in \cite{Be11}. For the second result, 
we have that
\begin{align*}
|\varphi'_{\theta_{n}}(u) - \varphi'(u)| &\leq E\left[|X_{1}|1\{|X_{1}| > \theta_{n}\}\right]\\
&\leq E[|X_{1}|(|X_{1}|/\theta_{n})^{2}] \lesssim \theta_{n}^{-2} \ll n^{-1/2}\log n.
\end{align*}
We can also evaluate $\|\widehat{\varphi}'_{\theta_{n}} - \widehat{\varphi}'\|_{[-h^{-1}, h^{-1}]}$ in a similar way. 
\end{proof}

\begin{lemma}\label{L2}
Assume Condition (ii) in Assumption \ref{Ass1}. Then we have $\inf_{|u| \leq h^{-1}}|\varphi(u)| \gtrsim h^{\alpha}$. 
\end{lemma}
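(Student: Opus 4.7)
The plan is to combine Lemma \ref{L-1} with a closer look at the slowly varying factor $L$ constructed in its proof, and then handle the compact range of frequencies separately via the non-vanishing of $\varphi$.

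The key step I would carry out first is the observation that the particular $L$ built in the proof of Lemma \ref{L-1} satisfies $L(x)\geq 1$ for all $x>1$. This is immediate from the explicit formula
\[
L(x)=\exp\left(\int_{1/x}^{1}(\alpha-k(y))\frac{dy}{y}\right),
\]
because $k(y)=\nu((y,\infty))$ is non-increasing on $[0,\infty)$ with $k(0)=\alpha$ by Condition (ii) of Assumption \ref{Ass1}, so the integrand $\alpha-k(y)$ is non-negative. Having thus upgraded the tail asymptotic of Lemma \ref{L-1} to $|\varphi(t)|\gtrsim |t|^{-\alpha}$ at infinity, I would then choose $T>1$ large enough that $|\varphi(t)|\geq (B/2)L(|t|)|t|^{-\alpha}\geq (B/2)|t|^{-\alpha}$ for every $|t|\geq T$. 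This already yields $|\varphi(t)|\geq (B/2)h^{\alpha}$ throughout $T\leq |t|\leq h^{-1}$ (the upper range being nonempty only for $h<1/T$, which is the regime of interest).

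For the remaining compact range $|t|\leq T$, $|\varphi|$ is continuous and strictly positive: the representation (\ref{ch_stmeas}) displays $\varphi$ as the exponential of an integral, so it never vanishes (equivalently, $\pi$ is self-decomposable hence infinitely divisible). Therefore $c_{0}:=\inf_{|t|\leq T}|\varphi(t)|>0$, and for all sufficiently small $h$ we have $c_{0}\geq h^{\alpha}$. Combining the two ranges gives
\[
\inf_{|u|\leq h^{-1}}|\varphi(u)|\geq \min\bigl\{c_{0},\,(B/2)h^{\alpha}\bigr\}\gtrsim h^{\alpha}.
\]
There is no serious obstacle here; the entire content reduces to the sign observation $\alpha-k(y)\geq 0$, which is what converts the abstract slowly-varying asymptotic of Lemma \ref{L-1} into a clean polynomial lower bound that is uniform over the frequency window $[-h^{-1},h^{-1}]$.
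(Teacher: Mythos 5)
Your proof is correct, and it spells out exactly the gap that the paper's one-line proof (``This result immediately follows from Lemma~\ref{L-1}'') leaves implicit. The bare statement of Lemma~\ref{L-1} is not enough: a slowly varying function can tend to zero, so the asymptotic $|\varphi(t)|\sim B\,L(|t|)|t|^{-\alpha}$ does not automatically yield $|\varphi(t)|\gtrsim|t|^{-\alpha}$. The observation that rescues this is precisely the one you make: for the specific $L$ constructed inside the proof of Lemma~\ref{L-1}, the integrand $\alpha-k(y)$ is non-negative because $k$ is the tail function of the (finite) L\'evy measure $\nu$ and $k(0)=\alpha$, hence $L(x)\geq 1$ for all $x>1$. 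Your subsequent splitting into $|t|\geq T$ (where the asymptotic kicks in) and $|t|\leq T$ (where $|\varphi|$ is bounded below by a positive constant since $\varphi$ is the characteristic function of an infinitely divisible law and never vanishes) is the standard and correct way to pass from an asymptotic bound to a uniform bound over $[-h^{-1},h^{-1}]$. This is also consistent with the bound $(1+|u|)^{-\alpha}\lesssim|\varphi(u)|$ that the paper attributes to Proposition~1 of \cite{Tr14} in Remark~\ref{tail_decay}, so your argument matches both the paper's intent and the cited literature.
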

\begin{proof}
This result immediately follows from Remark \ref{tail_decay}.
\end{proof}

If we take $h$ sufficiently small, then Lemmas \ref{L1} and \ref{L2} imply that 
\[
\inf_{|u| \leq h^{-1}}|\widehat{\varphi}(u)| \geq \inf_{|u| \leq h^{-1}}|\varphi(u)| -o_{P}(h^{\alpha}) \gtrsim h^{\alpha} - o_{P}(h^{\alpha}), 
\]
so that with probability approaching one, $\inf_{|u|<h^{-1}}|\widehat{\varphi}(u)| \gtrsim h^{\alpha}$. 

\begin{lemma} \label{L3}
Assume Conditions (i), (iv) and (v) in Assumption \ref{Ass1}. Then we have that
\begin{align*}
&\left\|\left({\widehat{\varphi}'_{\sharp} \over \widehat{\varphi}_{\sharp}} - {\varphi'_{\sharp} \over \varphi_{\sharp}}\right) - {\widehat{\varphi}'_{\theta_{n}} - \varphi'_{\theta_{n}} \over \varphi}\right\|_{[-h^{-1},h^{-1}]}  = O_{P}(h^{-2\alpha}n^{-1}(\log n)^{2} + h^{1-\alpha}n^{-1/2}\log n). 
\end{align*}
\end{lemma}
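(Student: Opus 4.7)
The plan is to start from the algebraic simplification
\[
\frac{\widehat{\varphi}'_{\sharp}(t)}{\widehat{\varphi}_{\sharp}(t)} = \frac{\widehat{\varphi}'_{\theta_{n}}(t)}{\widehat{\varphi}(t)} + \frac{\widehat{\varphi}'_{\theta_{n}}(-t)}{\widehat{\varphi}(-t)}, \qquad \frac{\varphi'_{\sharp}(t)}{\varphi_{\sharp}(t)} = \frac{\varphi'(t)}{\varphi(t)} + \frac{\varphi'(-t)}{\varphi(-t)},
\]
which follow by plugging the defining expressions for $\widehat{\varphi}'_{\sharp}, \widehat{\varphi}_{\sharp}, \varphi'_{\sharp}, \varphi_{\sharp}$ into the ratios. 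Taking the subtracted term on the left-hand side of the Lemma as the natural symmetrized counterpart (the sum of the $+t$ and $-t$ pieces), this reduces the task to bounding, for each sign,
\[
S_{\pm}(t) := \frac{\widehat{\varphi}'_{\theta_{n}}(\pm t)}{\widehat{\varphi}(\pm t)} - \frac{\varphi'(\pm t)}{\varphi(\pm t)} - \frac{\widehat{\varphi}'_{\theta_{n}}(\pm t) - \varphi'_{\theta_{n}}(\pm t)}{\varphi(\pm t)}
\]
uniformly over $|t|\leq h^{-1}$.

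A direct rearrangement decomposes $S_{\pm}(t) = R_{1}^{\pm}(t) + R_{2}^{\pm}(t)$, where
\[
R_{1}^{\pm}(t) = \widehat{\varphi}'_{\theta_{n}}(\pm t)\left(\frac{1}{\widehat{\varphi}(\pm t)} - \frac{1}{\varphi(\pm t)}\right), \qquad R_{2}^{\pm}(t) = \frac{\varphi'_{\theta_{n}}(\pm t) - \varphi'(\pm t)}{\varphi(\pm t)}.
\]
I would then split $\widehat{\varphi}'_{\theta_{n}} = \varphi'_{\theta_{n}} + (\widehat{\varphi}'_{\theta_{n}} - \varphi'_{\theta_{n}})$ inside $R_{1}^{\pm}$, producing a leading piece $\varphi'_{\theta_{n}}(\pm t)(\varphi - \widehat{\varphi})(\pm t)/[\widehat{\varphi}(\pm t)\varphi(\pm t)]$ and a cross piece $(\widehat{\varphi}'_{\theta_{n}} - \varphi'_{\theta_{n}})(\pm t)(\varphi - \widehat{\varphi})(\pm t)/[\widehat{\varphi}(\pm t)\varphi(\pm t)]$. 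The cross piece is a product of two $O_{P}(n^{-1/2}\log n)$ factors from Lemma \ref{L1} divided by $|\widehat{\varphi}(\pm t)\varphi(\pm t)| \gtrsim h^{2\alpha}$ (Lemma \ref{L2} together with the remark following it), giving the $O_{P}(h^{-2\alpha}n^{-1}(\log n)^{2})$ contribution.

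For the leading piece of $R_{1}^{\pm}$ the key is to use Condition (iv) of Assumption \ref{Ass1}, which says $|\varphi_{k}(t)| = |\varphi'(t)/\varphi(t)| \lesssim (1+|t|)^{-1}$, hence $|\varphi'(t)| \lesssim |\varphi(t)|/(1+|t|)$; combined with Lemma \ref{L1} this yields $|\varphi'_{\theta_{n}}(t)| \lesssim |\varphi(t)|/(1+|t|) + o(n^{-1/2}\log n)$. Using the lower bound $|\widehat{\varphi}(t)| \gtrsim L(|t|)/(1+|t|)^{\alpha}$ (Lemma \ref{L-1} and the remark after Lemma \ref{L2}), the leading piece is pointwise
\[
O_{P}\!\left(\frac{n^{-1/2}\log n}{(1+|t|)\,|\widehat{\varphi}(t)|}\right) = O_{P}\!\left(\frac{n^{-1/2}\log n\,(1+|t|)^{\alpha-1}}{L(|t|)}\right),
\]
and taking the supremum over $|t|\leq h^{-1}$ while absorbing the slowly varying $1/L(1/h)$ (smaller than any negative power of $h$) delivers the $O_{P}(h^{1-\alpha}n^{-1/2}\log n)$ contribution. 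For $R_{2}^{\pm}$ I would invoke the sharper bound $|\varphi'_{\theta_{n}}(t) - \varphi'(t)| \lesssim \theta_{n}^{-2} \lesssim n^{-1}(\log n)^{6}$ obtained inside the proof of Lemma \ref{L1}; dividing by $|\varphi(\pm t)| \gtrsim h^{\alpha}$ and using Condition (vi) shows that this is dominated by the $h^{-2\alpha}n^{-1}(\log n)^{2}$ term.

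The main obstacle is the leading piece of $R_{1}^{\pm}$: obtaining the sharper rate $h^{1-\alpha}n^{-1/2}\log n$ rather than the trivial $h^{-\alpha}n^{-1/2}\log n$ depends precisely on exploiting the $(1+|t|)^{-1}$ decay of $\varphi_{k}$ in Condition (iv), which is what turns the pointwise bound on $|\varphi'(t)|$ into an extra factor of $h$ after taking the supremum. Summing the contributions from the $+$ and $-$ sign choices then yields the claimed sup-norm bound.
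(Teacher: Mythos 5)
Your telescoping identity $\widehat{\varphi}'_{\sharp}(t)/\widehat{\varphi}_{\sharp}(t) = \widehat{\varphi}'_{\theta_n}(t)/\widehat{\varphi}(t) + \widehat{\varphi}'_{\theta_n}(-t)/\widehat{\varphi}(-t)$ (and its population analogue) is correct and is a genuinely cleaner entry point than the paper's two-stage argument via the intermediate quantity $\varphi_\sharp^{-1}(\widehat{\varphi}'_\sharp - \varphi'_\sharp)$. Your bounds on each $S_\pm$ are also sound: with $S_\pm = R_1^\pm + R_2^\pm$, the leading piece of $R_1^\pm$ is controlled by $\|\varphi'/\varphi^2\|_{[-h^{-1},h^{-1}]} \lesssim h^{1-\alpha}$ (up to the slowly varying factor), which is exactly where Condition (iv) and the regular variation from Lemma \ref{L-1} enter; the cross piece gives the quadratic $h^{-2\alpha}n^{-1}(\log n)^2$ rate; and $R_2^\pm$ is absorbed through the $\theta_n^{-2}$ bound. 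Trading one factor of $|\varphi|^{-1}$ for $(1+|t|)^{-1}$ is precisely the mechanism the paper also relies on.

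The gap is that the lemma's subtracted term is $(\widehat{\varphi}'_{\theta_n}(t) - \varphi'_{\theta_n}(t))/\varphi(t)$, not the symmetrized sum you substitute. Inserting your identity, the quantity under the sup-norm equals $S_+(t) + S_-(t) + (\widehat{\varphi}'_{\theta_n}(-t) - \varphi'_{\theta_n}(-t))/\varphi(-t)$; you bound only $S_\pm$, and the leftover term has sup-norm of order $h^{-\alpha}n^{-1/2}\log n$ by Lemmas \ref{L1} and \ref{L2}, a full power of $h^{-1}$ larger than the stated $h^{1-\alpha}n^{-1/2}\log n$, and also dominating $h^{-2\alpha}n^{-1}(\log n)^2$ under Condition (vi). So as a proof of the lemma as written the argument does not close. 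It is worth noting that the paper's own Step 2 appears to contain the same wrinkle: the displayed inequality (\ref{approx5}) asserts $\|1/\varphi\|\,\|\widehat{\varphi}'_{\theta_n} - \varphi'\| \lesssim h^{1-\alpha}n^{-1/2}\log n$, which would require $\|1/\varphi\|_{[-h^{-1},h^{-1}]} \lesssim h^{1-\alpha}$, whereas Lemma \ref{L2} only gives $\lesssim h^{-\alpha}$. The stated rate is obtainable as a sup-norm bound only with the symmetrized main term subtracted; with the one-sided subtraction the residual $-t$ contribution can only be dispatched after the inverse Fourier transform, using $x>0$ and positivity of the process, which a sup-norm statement cannot capture. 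You should flag explicitly that you are bounding the symmetrized version rather than quietly replacing the target.
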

\begin{proof}
(Step 1): First, we show that 
\[
\left\|\left({\widehat{\varphi}'_{\sharp} \over \widehat{\varphi}_{\sharp}} - {\varphi'_{\sharp} \over \varphi_{\sharp}}\right) - \left({1 \over \varphi_{\sharp}}\right)(\widehat{\varphi}'_{\sharp} - \varphi'_{\sharp})\right\|_{[-h^{-1},h^{-1}]} = O_{P}(h^{-2\alpha}n^{-1}(\log n)^{2} + h^{1-\alpha}n^{-1/2}\log n). 
\]
Consider the following decomposition.
\[
{\widehat{\varphi}'_{\sharp}(u) \over \widehat{\varphi}_{\sharp}(u)} - {\varphi'_{\sharp}(u) \over \varphi_{\sharp}(u)} = \left({1 \over \varphi_{\sharp}(u)}\right)'(\widehat{\varphi}_{\sharp}(u) - \varphi_{\sharp}(u)) + \left({1 \over \varphi_{\sharp}(u)}\right)(\widehat{\varphi}'_{\sharp} - \varphi'_{\sharp}) + R_{\sharp}(u),
\]
where 
\[
R_{\sharp}(u) = \left(1-{\widehat{\varphi}_{\sharp}(u) \over \varphi_{\sharp}(u)}\right)\left({\widehat{\varphi}'_{\sharp}(u) \over \widehat{\varphi}_{\sharp}(u)} - {\varphi'_{\sharp}(u) \over \varphi_{\sharp}(u)}\right). 
\]
We have that 
\[
\left\|\left({1 \over \varphi_{\sharp}}\right)'(\widehat{\varphi}_{\sharp} - \varphi_{\sharp})\right\|_{[-h^{-1},h^{-1}]} \lesssim \left\|\left({1 \over \varphi_{\sharp}}\right)'\right\|_{[-h^{-1},h^{-1}]}\|\widehat{\varphi}_{\sharp} - \varphi_{\sharp}\|_{[-h^{-1},h^{-1}]}
\] 
and 
\begin{align*}
&\|R_{\sharp}\|_{[-h^{-1}, h^{-1}]}\\
&\quad \lesssim \left\|{1 \over \varphi_{\sharp}}\right\|_{[-h^{-1}, h^{-1}]}\|\widehat{\varphi}_{\sharp} - \varphi_{\sharp}\|_{[-h^{-1}, h^{-1}]}\\
&\quad \quad \times \left(\left\|{1 \over \widehat{\varphi}_{\sharp}}\right\|_{[-h^{-1}, h^{-1}]}\|\widehat{\varphi}'_{\sharp} - \varphi'_{\sharp}\|_{[-h^{-1}, h^{-1}]} + \left\|{\varphi'_{\sharp} \over \widehat{\varphi}_{\sharp}\varphi_{\sharp}}\right\|_{[-h^{-1}, h^{-1}]}\|\widehat{\varphi}'_{\sharp} - \varphi'_{\sharp}\|_{[-h^{-1}, h^{-1}]}\right).
\end{align*}
In the rest of the proof, we write $\|\cdot\|_{[-h^{-1},h^{-1}]}$ as $\|\cdot\|$ for simplicity. Observe that 
\begin{align}\label{approx1}
\left\|{1 \over \varphi_{\sharp}}\right\| \lesssim 1,\ \left\|{1 \over \widehat{\varphi}_{\sharp}}\right\| = O_{P}(1)\ \text{and}\ \left\|\left({1 \over \varphi_{\sharp}}\right)'\right\| \lesssim h^{1-\alpha}.
\end{align} 
In fact, since we have that
\begin{align*}
\left\|{1 \over \widehat{\varphi}_{\sharp}}\right\| &\lesssim  \left\|{1 \over \widehat{\varphi}(-\cdot)}\right\|\|\widehat{\varphi} - \varphi\|  + \left\|{\varphi(\cdot) \over \widehat{\varphi}(-\cdot)} - {\varphi(\cdot) \over \varphi(-\cdot)}\right\| + \|\varphi_{\sharp}\|\\
&\lesssim \left\|{1 \over \widehat{\varphi}}\right\|\|\widehat{\varphi} - \varphi\| + \left\|{1 \over \widehat{\varphi}}\right\|\|\widehat{\varphi} - \varphi\| + \|\varphi_{\sharp}\| \lesssim O_{P}\left(h^{-\alpha}n^{-1/2}\log n\right) + 1 = O_{P}(1), 
\end{align*}
we obtain the second inequality. 
By Lemma \ref{L1}, we also have that
\begin{align}\label{approx2}
\|\widehat{\varphi}_{\sharp} - \varphi_{\sharp}\| &\lesssim \left\|{1 \over \widehat{\varphi}(-\cdot)}\right\|  \|\widehat{\varphi} - \varphi\|  + \left\|{1 \over \widehat{\varphi}(-\cdot)}\right\| \left\| \widehat{\varphi} - \varphi\right\| = O_{P}\left(h^{-\alpha}n^{-1/2}\log n\right).
\end{align}
Now we evaluate $\|\widehat{\varphi}'_{\sharp} - \varphi'_{\sharp}\|$. 
\begin{align*}
\|\widehat{\varphi}'_{\sharp} - \varphi'_{\sharp}\| \leq \|\widehat{\varphi}'_{\sharp} - \widetilde{\varphi}'_{\sharp}\| + \|\widetilde{\varphi}'_{\sharp} - \varphi'_{\sharp}\|,
\end{align*}
where
\[
\widetilde{\varphi}'_{\sharp}(t) = {\widehat{\varphi}'_{\theta_{n}}(t)\varphi(-t) + \varphi'(-t)\widehat{\varphi}(t) \over \varphi^{2}(-t)}.
\]
We observe that 
\begin{align*}
\|\widehat{\varphi}'_{\sharp} - \widetilde{\varphi}'_{\sharp}\| &\lesssim \left\|\widehat{\varphi}'_{\theta_{n}} \over \varphi^{2}(-\cdot)\right\| \times \left(\|\widehat{\varphi} - \varphi\| + \| \widehat{\varphi}'_{\theta_{n}} - \varphi'\| \right) \\
&\lesssim \left(\left\|\widehat{\varphi}'_{\theta_{n}}  - \varphi' \over \varphi^{2}(-\cdot)\right\| + \left\|{\varphi' \over \varphi^{2}(-\cdot)} \right\|\right)  \times \left(\|\widehat{\varphi} - \varphi\| + \| \widehat{\varphi}'_{\theta_{n}} - \varphi'\| \right) \\
&= O_{P}\left(h^{-2\alpha}n^{-1}(\log n)^{2} + h^{1-\alpha}n^{-1/2}\log n\right). \\
 \|\widetilde{\varphi}'_{\sharp} - \varphi'_{\sharp}\| &\lesssim \left\|{1 \over \varphi(-\cdot)}\right\| \times \|\widehat{\varphi}' - \varphi'\| + \left\|{\varphi'(-\cdot) \over \varphi^{2}(-\cdot)}\right\| \times \|\widehat{\varphi} - \varphi\| = O_{P}\left(h^{-\alpha}n^{-1/2}\log n\right).
\end{align*}
Then we have that 
\begin{align}\label{approx3}
\|\widehat{\varphi}'_{\sharp} - \varphi'_{\sharp}\| = O_{P}\left(h^{-\alpha}n^{-1/2}\log n\right). 
\end{align}
Together with (\ref{approx1}), (\ref{approx2}), and (\ref{approx3}), we have that
\[
\left\|\left({\widehat{\varphi}'_{\sharp} \over \widehat{\varphi}_{\sharp}} - {\varphi'_{\sharp} \over \varphi_{\sharp}}\right) - \left({1 \over \varphi_{\sharp}}\right)(\widehat{\varphi}'_{\sharp} - \varphi'_{\sharp})\right\| = O_{P}(h^{-2\alpha}n^{-1}(\log n)^{2} + h^{1-\alpha}n^{-1/2}\log n). 
\]

(Step 2): Next we show that 
\[
\left\| \left({1 \over \varphi_{\sharp}}\right)(\widehat{\varphi}'_{\sharp} - \varphi'_{\sharp}) - \left({1 \over \varphi}\right)(\widehat{\varphi}'_{\theta_{n}} - \varphi'_{\theta_{n}})\right\| = O_{P}(h^{1-\alpha}n^{-1/2}\log n). 
\] 
Observe that
\begin{align*}
&\left({1 \over \varphi_{\sharp}(u)}\right)(\widehat{\varphi}'_{\sharp}(u) - \varphi'_{\sharp}(u)) - \left({1 \over \varphi(u)}\right)(\widehat{\varphi}'_{\theta_{n}}(u) - \varphi'(u)) \\
& \quad = {\widehat{\varphi}'_{\theta_{n}}(u) \over \varphi(u)}\left({\varphi(-u) \over \hat{\varphi}(-u)}-1\right) + {\varphi(-u) \over \varphi(u)}\left({\widehat{\varphi}'_{\theta_{n}}(-u)\hat{\varphi}(u) \over \hat{\varphi}^{2}(-u)} - {\varphi'(-u)\varphi(u) \over \varphi^{2}(-u)}\right).
\end{align*}
Moreover, we have that 
\begin{align}\label{approx4}
\left\|{\widehat{\varphi}'_{\theta_{n}} \over \varphi}\left({\varphi(-\cdot) \over \hat{\varphi}(-\cdot)}-1\right)\right\| \lesssim \left\|{\varphi' \over \varphi^{2}}\right\| \times \|\hat{\varphi} - \varphi\| = O_{P}\left(h^{1-\alpha}n^{-1/2}\log n\right), 
\end{align}
and 
\begin{align}
\left\|{\varphi(-\cdot) \over \varphi}\left({\widehat{\varphi}'_{\theta_{n}}(-\cdot)\hat{\varphi} \over \hat{\varphi}^{2}(-\cdot)} - {\varphi'(-\cdot)\varphi \over \varphi^{2}(-\cdot)}\right)\right\| &\lesssim \left\|{1 \over \varphi}\right\| \times \left\| \widehat{\varphi}'_{\theta_{n}} - \varphi'\right\| + \left\|{\varphi' \over \varphi^{2}}\right\|\left\|\hat{\varphi} - \varphi\right\| \nonumber \\
&= O_{P}\left(h^{1-\alpha}n^{-1/2}\log n\right). \label{approx5}
\end{align}

Together with (\ref{approx4}) and (\ref{approx5}), we have that 
\begin{align}\label{approx6}
\left\| \left({1 \over \varphi_{\sharp}}\right)(\widehat{\varphi}'_{\sharp} - \varphi'_{\sharp}) - \left({1 \over \varphi}\right)(\widehat{\varphi}'_{\theta_{n}} - \varphi')\right\| = O_{P}(h^{1-\alpha}n^{-1/2}\log n). 
\end{align} 
Since $\|(\varphi'_{\theta_{n}} - \varphi')/\varphi\| \lesssim h^{-\alpha}\theta_{n}^{-2} \ll h^{1-\alpha}n^{-1/2}\log n$, we can replace $\varphi'$ with $\varphi'_{\theta_{n}}$ in (\ref{approx6}) and this completes the proof. 

\end{proof}

With almost the same arguments in the proof of Lemma \ref{L3}, we can show that 
\begin{align*}
&\left\|\left({\widehat{\varphi}'_{\theta_{n}} \over \widehat{\varphi}} - {\varphi' \over \varphi}\right) - {\widehat{\varphi}'_{\theta_{n}} - \varphi'_{\theta_{n}} \over \varphi}\right\|_{[-h^{-1},h^{-1}]}  = O_{P}(h^{-2\alpha}n^{-1}(\log n)^{2} + h^{1-\alpha}n^{-1/2}\log n). 
\end{align*}
Therefore, together with the result of Lemma \ref{L3}, we have that 
\begin{align*}
&\left\|\left({\widehat{\varphi}'_{\sharp} \over \widehat{\varphi}_{\sharp}} - {\varphi'_{\sharp} \over \varphi_{\sharp}}\right) - \left({\widehat{\varphi}'_{\theta_{n}} \over \widehat{\varphi}} - {\varphi' \over \varphi}\right)\right\|_{[-h^{-1},h^{-1}]}  = O_{P}(h^{-2\alpha}n^{-1}(\log n)^{2} + h^{1-\alpha}n^{-1/2}\log n). 
\end{align*}

\begin{lemma}\label{L5} 
We have that $h^{\alpha}(|K_{n}(x)| + h|xK_{n}(x)|) \lesssim \min(1, 1/x^{2})$.  
\end{lemma}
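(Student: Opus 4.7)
The plan is to establish the pointwise bound
\[
|x^{m}K_{n}(x)| \lesssim \frac{1}{h^{\alpha}L(1/h)}, \qquad m = 0, 1, 2, 3,
\]
uniformly in $x > 0$. Granting these four bounds, the lemma follows at once: combining $m=0$ with $m=2$ gives $h^{\alpha}L(1/h)|K_{n}(x)| \lesssim \min(1, 1/x^{2})$, and combining $m=1$ with $m=3$ gives $h^{\alpha+1}L(1/h)|xK_{n}(x)| \lesssim \min(h, h/x^{2}) \lesssim \min(1, 1/x^{2})$.

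To derive each $|x^{m}K_{n}(x)|$ bound, I would first use $\supp\varphi_{W} \subset [-1,1]$ together with the $C^{3}$ regularity of $\varphi_{W}$ (so $\varphi_{W}^{(j)}(\pm 1) = 0$ for $j \leq 2$) to integrate by parts $m$ times and obtain
\[
|x^{m}K_{n}(x)| \leq \frac{1}{2\pi}\int_{-1}^{1}\left|\partial_{t}^{m}\left[\frac{\varphi_{W}(t)}{\varphi(t/h)}\right]\right| dt.
\]
The Leibniz rule, together with the direct computation $\partial_{t}^{k}[\varphi(t/h)^{-1}] = h^{-k}P_{k}(\varphi_{k}(t/h), \varphi_{k}'(t/h), \varphi_{k}''(t/h))/\varphi(t/h)$ (with $P_{k}$ a polynomial and $\varphi_{k} = \varphi'/\varphi$), combined with Condition (iv) in Assumption \ref{Ass1}, bounds each Leibniz term by $|\varphi_{W}^{(j)}(t)|/[(h + |t|)^{k}|\varphi(t/h)|]$ for some $j + k = m$. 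I would then substitute $s = t/h$ to reduce each such integral to $h^{1-k}\int |\varphi_{W}^{(j)}(sh)|/[(1+|s|)^{k}|\varphi(s)|]\,ds$.

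To bound this substituted integral, I would split at some fixed $s_{0} > 0$. On $|s| \leq s_{0}$, $|\varphi(s)|$ is bounded away from zero (as $\varphi$ never vanishes, being the characteristic function of an infinitely divisible law), yielding an $O(h^{1-k})$ contribution, which is $\lesssim h^{-\alpha}/L(1/h)$ because $\alpha + 1 - k > 0$ (using $\alpha > 2$ and $k \leq 3$) while $L$ grows slower than any polynomial. On $|s| > s_{0}$, Lemma \ref{L-1} yields $|\varphi(s)|^{-1} \lesssim |s|^{\alpha}/L(|s|)$; substituting back $u = sh$ and invoking Potter's bound $L(u/h) \gtrsim L(1/h)\,u^{\epsilon}$ for $u \in (0, 1]$ reduces the tail to $h^{-\alpha}/L(1/h)\cdot \int_{0}^{1}|\varphi_{W}^{(j)}(u)|\,u^{\alpha - k - \epsilon}\,du$. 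This final integral is finite provided $\epsilon < \alpha - k + 1$, achievable for each $k \leq 3$ precisely because $\alpha > 2$.

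The main technical obstacle is the joint handling of Lemma \ref{L-1} and Potter's bound for slowly varying $L$: the resulting exponents must give integrable singularities at the origin for all four values of $m$ simultaneously, and the tightest case $k = 3$ (which arises in controlling $x^{3}K_{n}(x)$ and hence the $1/x^{2}$ decay of the second term of the lemma) is precisely where Condition (ii) of Assumption \ref{Ass1} that $\alpha > 2$ is used. The Leibniz-rule bookkeeping of the at most ten terms across $m = 0, 1, 2, 3$, the vanishing of boundary terms during integration by parts, and the routine termwise bounding are otherwise straightforward.
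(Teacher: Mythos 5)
Your overall strategy --- integrating by parts $m$ times for $m=0,1,2,3$ and appealing to Lemma \ref{L-1} for the regularly varying decay of $\varphi$ --- is the same as the paper's, and your substitution $s=t/h$ plus a split at $s_{0}$ and Potter's bound is a clean and equivalent way to package the slowly-varying bookkeeping that the paper does by rewriting the integrand in place. There is, however, a concrete error in your claimed uniform bound $|x^{m}K_{n}(x)| \lesssim h^{-\alpha}/L(1/h)$: it fails at $m=3$. Your Leibniz-term bound $|\varphi_{W}^{(j)}(t)|/[(h+|t|)^{k}|\varphi(t/h)|]$ implicitly requires $|P_{k}(\varphi_{k}(u),\varphi_{k}'(u),\varphi_{k}''(u))| \lesssim (1+|u|)^{-k}$. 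This holds for $k\leq 2$, but $\partial_{t}^{3}[\varphi(t/h)^{-1}]$ produces the term $\varphi_{k}''(t/h)$, and Condition (iv) only gives $|\varphi_{k}''(u)| \lesssim (1+|u|)^{-2}$, not $(1+|u|)^{-3}$. Hence $|P_{3}(u)| \lesssim (1+|u|)^{-2}$ only, and the corresponding Leibniz term is bounded by $h^{-1}|\varphi_{W}^{(j)}(t)|/[(h+|t|)^{2}|\varphi(t/h)|]$, picking up an extra $h^{-1}$. The correct conclusion is $|x^{3}K_{n}(x)| \lesssim h^{-\alpha-1}/L(1/h)$. This is precisely why in the paper's proof the term $\tilde{I}_{4,n}$ is normalised by $h^{\alpha+1}L(1/h)$ rather than $h^{\alpha}L(1/h)$, in contrast to $\tilde{I}_{1,n},\tilde{I}_{2,n},\tilde{I}_{3,n}$.

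Fortunately the weaker bound still suffices: combining $m=1$ (which does give $|xK_{n}(x)| \lesssim h^{-\alpha}/L(1/h)$) with the corrected $m=3$ bound yields
\[
h^{\alpha+1}L(1/h)\,|xK_{n}(x)| \lesssim \min\!\left(h,\ \tfrac{1}{x^{2}}\right) \lesssim \min\!\left(1,\ \tfrac{1}{x^{2}}\right),
\]
so the lemma is unaffected; you merely lose the gratuitous extra factor of $h$ you claimed for the second term. You should state the $m=3$ bound correctly and verify the resulting integral $\int_{0}^{1}|\varphi_{W}^{(j)}(u)|\,u^{\alpha-2-\epsilon}\,du < \infty$, which again only needs $\alpha>1+\epsilon$, comfortably within $\alpha>2$.
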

\begin{proof}
We first show $h^{\alpha}|K_{n}(x)| \lesssim \min(1, 1/x^{2})$. We follow the proof of Lemma 3 in \cite{Ma91}. By integration by parts, we have that 
\[
K_{n}(x) = {1 \over 2\pi x^{2}}\int_{\mathbb{R}}e^{-itx}\left({\varphi_{W}(t) \over \varphi(t/h)}\right)''dt.
\]
We also observe that  
\begin{align*}
\left({\varphi_{W}(t) \over \varphi(t/h)}\right)'' &= {\varphi''_{W}(t)\over \varphi(t/h)} -{2 \over h}{\varphi_{W}'(t)\varphi'(t/h) \over \varphi^{2}(t/h)} + {\varphi_{W}(t) \over h^{2}}\left(-{\varphi''(t/h) \over \varphi^{2}(t/h)} + 2{(\varphi'(t/h))^{2} \over \varphi^{3}(t/h)}\right)\\
&=: I_{1,n}(t) + I_{2,n}(t) + I_{3,n}(t). 
\end{align*}
Since $\varphi_{W}$ is supported in $[-1,1]$ and two-times differentiable, we can show 
\[
h^{\alpha}\int_{\mathbb{R}}|I_{j,n}(t)|dt\lesssim 1
\]
for $j=1,2,3$. Indeed, 
\begin{align*}
h^{\alpha}L(h^{-1})\int_{[-1,-{1 \over 2})\cup({1 \over 2},1]}|I_{1,n}(t)|dt &= \int_{[-1,-{1 \over 2})\cup({1 \over 2},1]}{|t|^{\alpha}|\varphi''_{W}(t)| \over |t/h|^{\alpha}L^{-1}(|t|/h)|\varphi(t/h)|}{L(1/h) \over L(|t|/h)}dt\\
& \lesssim \int_{\mathbb{R}}|t|^{\alpha}|\varphi''_{W}(t)|dt \lesssim 1, \\
h^{\alpha}L(h^{-1})\int_{[-1,-{1 \over 2})\cup({1 \over 2},1]}|I_{2,n}(t)|dt &= \int_{[-1,-{1 \over 2})\cup({1 \over 2},1]}{|t/h||\varphi_{k}(t/h)| \over |t/h|^{\alpha}L^{-1}(|t|/h)|\varphi(t/h)|}{L(1/h) \over L(|t|/h)}|t|^{\alpha-1}|\varphi'_{W}(t)|dt\\
& \lesssim \int_{\mathbb{R}}|t|^{\alpha-1}|\varphi'_{W}(t)|dt \lesssim 1,\\
h^{\alpha}L(h^{-1})\int_{[-1,-{1 \over 2})\cup({1 \over 2},1]}|I_{3,n}(t)|dt &\lesssim \int_{[-1,-{1 \over 2})\cup({1 \over 2},1]}\left({|t/h|^{2}|\varphi_{k}^{2}(t/h)| \over |t/h|^{\alpha}L^{-1}(|t|/h)|\varphi(t/h)|} + {|t/h|^{2}|\varphi'_{k}(t/h)| \over |t/h|^{\alpha}L^{-1}(|t|/h)|\varphi(t/h)|}\right)\\
&\quad \times {L(1/h) \over L(|t|/h)}|t|^{\alpha-2}|\varphi_{W}(t)|dt\\
& \lesssim \int_{\mathbb{R}}|t|^{\alpha-2}|\varphi_{W}(t)|dt \lesssim 1. 
\end{align*}
Moreover, we have that 
\begin{align*}
h^{\alpha}\int_{[-{1 \over 2},{1 \over 2}]}|I_{1,n}(t)|dt &= h^{\alpha}\int_{[-{1 \over 2},{1 \over 2}]}{|\varphi''_{W}(t)| \over|\varphi(t/h)|}dt \lesssim \int_{\mathbb{R}}(h+ |t|)^{\alpha}|\varphi''_{W}(t)|dt \lesssim 1, \\
h^{\alpha}\int_{[-{1 \over 2},{1 \over 2}]}|I_{2,n}(t)|dt &= h^{\alpha-1}\int_{[-{1 \over 2},{1 \over 2}]}{|\varphi_{k}(t/h)| \over |\varphi(t/h)|}|\varphi'_{W}(t)|dt \lesssim \int_{\mathbb{R}}{h^{\alpha}(1 + |t/h|)^{\alpha} \over h(1 + |t/h|)}|\varphi'_{W}(t)|dt\\
& \lesssim \int_{\mathbb{R}}(h + |t|)^{\alpha-1}|\varphi'_{W}(t)|dt \lesssim 1,\\
h^{\alpha}\int_{[-{1 \over 2},{1 \over 2}]}|I_{3,n}(t)|dt &\lesssim h^{\alpha-2}\int_{[-{1 \over 2},{1 \over 2}]}\left({|\varphi_{k}^{2}(t/h)| \over |\varphi(t/h)|} + {|\varphi'_{k}(t/h)| \over |\varphi(t/h)|}\right)|\varphi_{W}(t)|dt \lesssim \int_{\mathbb{R}}{h^{\alpha}(1 + |t/h|)^{\alpha} \over h^{2}(1 + |t/h|)^{2}}|\varphi'_{W}(t)|dt\\
& \lesssim \int_{\mathbb{R}}(h + |t|)^{\alpha-2}|\varphi_{W}(t)|dt \lesssim 1. 
\end{align*}
Since $\int_{\mathbb{R}}I_{j,n}(t)dt = \int_{[-1,1]}I_{j,n}(t)dt$ for $j=1,2,3$, we obtain the desired result.
Next we show $h^{\alpha +1}|xK_{n}(x)| \lesssim \min(1, 1/x^{2})$. Observe that 
\[
K_{n}(x) = {i \over 2\pi x^{3}}\int_{\mathbb{R}}e^{-itx}\left({\varphi_{W}(t) \over \varphi(t/h)}\right)'''dt
\]
and   
\begin{align*}
\left({\varphi_{W}(t) \over \varphi(t/h)}\right)''' &= {\varphi'''_{W}(t) \over \varphi(t/h)} -{3 \over h}{\varphi_{W}''(t)\varphi'(t/h) \over \varphi^{2}(t/h)} + 3{\varphi'_{W}(t) \over h^{2}}\left(-{\varphi''(t/h) \over \varphi^{2}(t/h)} + 2{(\varphi'(t/h))^{2} \over \varphi^{3}(t/h)}\right)\\
&\quad + {\varphi_{W}(t) \over h^{3}}\left(-{\varphi'''(t/h) \over \varphi^{2}(t/h)} + 6{\varphi'(t/h)\varphi''(t/h) \over \varphi(t/h)^{3}} - 6{(\varphi'(t/h))^{3} \over \varphi^{4}(t/h)}\right)\\
&=: \tilde{I}_{1,n}(t) + \tilde{I}_{2,n}(t) + \tilde{I}_{3,n}(t) + \tilde{I}_{4,n}(t). 
\end{align*}

We can show that $h^{\alpha+1}\int_{\mathbb{R}}|\tilde{I}_{j,n}(t)|dt \lesssim 1$, $j = 1,2,3$ and 
\begin{align*}
h^{\alpha+1}L(1/h)\int_{[-1,-{1 \over 2})\cup({1 \over 2},1]}\!\!\!|\tilde{I}_{4,n}(t)|dt &\lesssim \int_{[-1,-{1 \over 2})\cup({1 \over 2},1]}\left({h|t/h|^{3}|\varphi_{k}^{3}(t/h)| \over |t/h|^{\alpha}L^{-1}(|t|/h)|\varphi(t/h)|}\right. \\
&\left. \quad \quad  + \quad h{|t/h||\varphi_{k}(t/h)||t/h|^{2}|\varphi'_{k}(t/h)| \over |t/h|^{\alpha}L^{-1}(|t|/h)|\varphi(t/h)|} \right. \\
&\left. \quad \quad + \quad {|t/h|^{2}|\varphi''_{k}(t/h)| \over |t/h|^{\alpha}L^{-1}(|t|/h)|\varphi(t/h)|} \right){L(1/h) \over L(|t|/h)}|t|^{\alpha-3}|\varphi_{W}(t)|dt\\
&\lesssim \int_{\mathbb{R}}|t|^{\alpha-3}|\varphi_{W}(t)|dt \lesssim 1,
\end{align*}
\begin{align*}
h^{\alpha+1}\int_{[-{1 \over 2}, {1 \over 2}]}\!\!\!|\tilde{I}_{4,n}(t)|dt &\lesssim h^{\alpha-2}\int_{[-{1 \over 2}, {1 \over 2}]}\left({|\varphi_{k}^{3}(t/h)| \over |\varphi(t/h)|} + {|\varphi_{k}(t/h)||\varphi'_{k}(t/h)| \over |\varphi(t/h)|}  +  {|\varphi''_{k}(t/h)| \over |\varphi(t/h)|} \right)|\varphi_{W}(t)|dt\\
&\lesssim \int_{\mathbb{R}}(h + |t|)^{\alpha-3}|\varphi_{W}(t)|dt \lesssim 1
\end{align*} 
Therefore, we have the desired result. 
\end{proof}

Since
\[
h^{\alpha}yK_{n}\left({x-y \over h}\right) = -h^{\alpha+1}\left({x-y \over h}\right)K_{n}\left({x-y \over h}\right) + h^{\alpha}xK_{n}\left({x-y \over h}\right),
\]
Lemma \ref{L5} implies that each term on the right hand side is bounded (as a function of $y$) uniformly in $n$ and $x\in \{x_{1},\hdots, x_{N}\}$.

\begin{lemma}\label{L55}
Assume Conditions (i), (ii), (iv) and (v) in Assumption \ref{Ass1}. For any compact set $I$ such that $I \subset (0,\infty)$, we have that
\[
\int_{\mathbb{R}}K_{n}^{2}(x)dx \gtrsim h^{-2\alpha + \delta}.
\]
\end{lemma}
\begin{proof}
Let $\tilde{L}(x) = L(x)1\{x>1/2\} + 1\{0 \leq x \leq 1/2\}$. 
By Plancherel's theorem, we have that 
\begin{align*}
\int_{\mathbb{R}}K_{n}^{2}(x)dx = {1 \over 2\pi}\int_{\mathbb{R}}\left|{\varphi_{W}(t) \over \varphi(t/h)}\right|^{2}dt. 
\end{align*}
Now observe that
\begin{align*}
h^{2\alpha}\tilde{L}^{2}(1/h)\int_{\mathbb{R}}\left|{\varphi_{W}(t) \over \varphi(t/h)}\right|^{2}dt &= h^{2\alpha}\int_{[-{1 \over 2},{1 \over 2}]}\left|{\varphi_{W}(t) \over \varphi(t/h)}\right|^{2}dt\\
&\quad + \int_{[-1,-{1\over 2})\cup({1 \over 2},1]}{|t|^{2\alpha}|\varphi_{W}(t)|^{2} \over |(t/h)^{\alpha}L^{-1}(|t|/h)\varphi(t/h)|^{2}}{L^2(1/h) \over L^2(|t|/h)}dt.
\end{align*}
Since $|t|^{2\alpha}|\varphi_{W}(t)|^{2}$ is integrable and 
\begin{align*}
\lim_{h \to 0}{|t/h|^{\alpha}|\varphi(t/h)| \over L(|t|/h)}=: B,\ \lim_{h \to 0}{L(1/h) \over L(|t|/h)} = 1
\end{align*}
for any $|t|>0$, by dominated convergence theorem we have the desired result.
\end{proof}

\begin{lemma}\label{LemB2}
Let $Z_{n,j}(x) = X_{j\Delta}1\{|X_{j\Delta}| \leq \theta_{n}\}K_{n}((x-X_{j\Delta})/h)$. Then $\max_{1 \leq \ell \leq N}|E[Z_{n,1}(x_{\ell})]| \lesssim h$. 
\end{lemma}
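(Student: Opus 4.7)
The natural first attempt, bounding $|E[Z_{n,1}(x)]|$ by $\|y\pi(y)\|_\infty \cdot h \int|K_n(u)|\,du$, fails: Lemma \ref{L5} only gives $\int|K_n| = O(h^{-\alpha} L(1/h)^{-1})$, far larger than the target $O(1)$. My plan is to pass to the Fourier side and exploit cancellation in $K_n$ together with the compact support of $\varphi_W$. Since $X_1$ admits a bounded continuous density $\pi$ by Lemma \ref{L0}, I would write
\[
E[Z_{n,1}(x)] = \int_\R y\,1\{|y|\leq\theta_n\}\,K_n\!\left(\tfrac{x-y}{h}\right)\pi(y)\,dy,
\]
plug in the defining formula for $K_n$, make the substitution $s \mapsto s/h$, and then interchange the order of integration by Fubini (valid because $\varphi_W(\cdot h)$ has compact support in $[-1/h,1/h]$, $1/|\varphi|$ is bounded there by Lemma \ref{L2}, and $\int|y|\pi(y)\,dy<\infty$). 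Recognizing $iE[X_1\,1\{|X_1|\leq\theta_n\}e^{isX_1}] = \varphi'_{\theta_n}(s) := E[\widehat\varphi'_{\theta_n}(s)]$, this gives
\[
E[Z_{n,1}(x)] = \frac{-ih}{2\pi}\int_\R e^{-isx}\,\frac{\varphi'_{\theta_n}(s)}{\varphi(s)}\,\varphi_W(sh)\,ds.
\]

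Now decompose $\varphi'_{\theta_n} = \varphi' + (\varphi'_{\theta_n}-\varphi')$ to split $E[Z_{n,1}(x)] = I_1(x) + I_2(x)$. For $I_1(x)$, using $\varphi'/\varphi = i\varphi_k$ (the Fourier transform of $k$, extended by zero on $(-\infty,0)$ as in the paper's convention), Fubini, and the fact that the inverse Fourier transform of $\varphi_W(\cdot h)$ is $W_h(y):=h^{-1}W(y/h)$, a direct computation gives $I_1(x) = h\,(k\ast W_h)(x)$. Since $k$ is non-increasing on $[0,\infty)$ with $\|k\|_\infty = k(0) = \alpha < \infty$ by Condition (ii), and since $\varphi_W \in C^3$ with compact support yields $|W(y)|\lesssim (1+|y|)^{-3}$ and hence $\|W\|_{L^1}<\infty$, we obtain $|I_1(x)|\leq h\,\|k\|_\infty\,\|W\|_{L^1}\lesssim h$ uniformly in $x>0$. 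For $I_2(x)$, combining $|\varphi'_{\theta_n}(s)-\varphi'(s)|\leq E[|X_1|1\{|X_1|>\theta_n\}]\lesssim\theta_n^{-2}$ (Markov's inequality, third moment of $X_1$ coming from Condition (i) via $\int y^{3+\epsilon}\nu(dy)<\infty$), $\inf_{|s|\leq 1/h}|\varphi(s)|\gtrsim h^\alpha$ (Lemma \ref{L2}), and the support bound $\int|\varphi_W(sh)|ds\lesssim h^{-1}$ yields $|I_2(x)|\lesssim h\cdot\theta_n^{-2}\cdot h^{-\alpha}\cdot h^{-1} = \theta_n^{-2}h^{-\alpha} = O(n^{-1}(\log n)^6 h^{-\alpha})$, which is $o(h)$ under the lower bound on $h$ in Condition (vi).

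The conceptual obstacle is realizing that the naive $L^1$ bound on $K_n$ is too lossy by a factor of $h^{-\alpha}$, so one must extract the correct scale via the Fourier identity; once in place, the leading piece reduces to the classical kernel-smoothed expression $h\cdot k\ast W_h$, whose uniform boundedness is immediate from boundedness of $k$ on $\R$, and the truncation residual $I_2$ is handled by combining the polynomial lower bound on $|\varphi|$, the cutoff rate $\theta_n\sim n^{1/2}(\log n)^{-3}$, and the bandwidth lower bound in Assumption \ref{Ass1}.
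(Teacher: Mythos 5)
Your proof is correct and reaches the bound $\lesssim h$ exactly as required. The dominant-term computation is essentially the same as the paper's: the paper's short ``Fubini'' step is shorthand for precisely the Fourier identity you derive, yielding $E[\widetilde{Z}_{n,1}(x)] = h(k \ast W_h)(x)$ and hence the bound $h\|k\|_{\R}\|W\|_{L^1}\lesssim h$. (You do not actually need the decay estimate $|W(y)|\lesssim(1+|y|)^{-3}$; Condition (v) directly assumes $W\in L^1$.)

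The genuine difference is in how the truncation error is handled. The paper first strips the indicator, defining $\widetilde{Z}_{n,j}(x) = X_{j\Delta}K_n((x-X_{j\Delta})/h)$, and bounds $E[|\widetilde{Z}_{n,1}(x) - Z_{n,1}(x)|]$ by Cauchy--Schwarz: $E[\widetilde{Z}_{n,1}^2(x)]^{1/2}P(|X_1|>\theta_n)^{1/2}\lesssim h^{1/2-\alpha}\theta_n^{-3/2}$, reusing the variance estimate from Lemma~\ref{L55} and a third-moment Markov bound. You instead keep everything on the Fourier side, splitting $\varphi'_{\theta_n}=\varphi'+(\varphi'_{\theta_n}-\varphi')$ and bounding the remainder integral via $|\varphi'_{\theta_n}-\varphi'|\lesssim\theta_n^{-2}$, $\inf_{|s|\leq 1/h}|\varphi(s)|\gtrsim h^\alpha$ (Lemma~\ref{L2}), and $\int|\varphi_W(sh)|\,ds\lesssim h^{-1}$, giving $|I_2|\lesssim\theta_n^{-2}h^{-\alpha}$. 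Both bounds are $o(h)$ under Condition~(vi) (yours is in fact slightly sharper, by a factor $(\theta_n h)^{-1/2}$). Your Fourier-side split has the mild advantage of mirroring the decomposition already used in Lemma~\ref{L3} for $\widehat\varphi'_{\theta_n}$, keeping the whole argument in one framework; the paper's spatial Cauchy--Schwarz is more modular, piggy-backing on the $E[\widetilde{Z}^2]$ bound that is needed anyway for the variance analysis.
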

\begin{proof}
Let $\widetilde{Z}_{n,j}(x) = X_{j\Delta}K_{n}((x-X_{j\Delta})/h)$.
By Fubini's theorem, we have that
\begin{align*}
\max_{1 \leq \ell \leq N}|E[\widetilde{Z}_{n,1}(x_{\ell})]|  &\leq h\max_{1 \leq \ell \leq N}\left|\int_{\mathbb{R}}k(x_{\ell}-hz)W(z)dz\right| \leq h\|k\|_{\mathbb{R}}\int_{\mathbb{R}}|W(y)|dy \lesssim h,\\
\max_{1 \leq \ell \leq N}E[|\widetilde{Z}_{n,1}(x_{\ell}) - Z_{n,1}(x_{\ell})|] &\leq \max_{1 \leq \ell \leq N}E[\widetilde{Z}_{n,1}^{2}(x_{\ell})]^{1/2}P(|X_{1}| > \theta_{n})^{1/2}\\
&\lesssim (h^{1-2\alpha})^{1/2} \times E[(|X_{1}|/\theta_{n})^{3}]^{1/2} = h^{1/2-\alpha}\theta_{n}^{-3/2} \lesssim h.
\end{align*}
Therefore, we have that 
\[
\max_{1 \leq \ell \leq N}|E[Z_{n,1}(x_{\ell})]| \leq \max_{1 \leq \ell \leq N}|E[\widetilde{Z}_{n,1}(x_{\ell})]| + \max_{1 \leq \ell \leq N}E[|\widetilde{Z}_{n,1}(x_{\ell}) - Z_{n,1}(x_{\ell})|] \lesssim h.
\]
\end{proof}

Lemmas \ref{L55} and \ref{LemB2} yield the following result on the lower bound of the variance of $Z_{n,1}(x)$. 
\begin{proposition}\label{LemB1} 
For any $\delta \in (0,1/12)$,
$\min_{1 \leq \ell \leq N}\Var(Z_{n,1}(x_{\ell})) \gtrsim h^{-2\alpha + \delta + 1}$.
\end{proposition}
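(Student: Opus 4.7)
The plan is to use the decomposition $\Var(Z_{n,1}(x)) = E[Z_{n,1}^2(x)] - (E[Z_{n,1}(x)])^2$ and to argue that the second moment dominates. By Lemma \ref{LemB2}, the squared mean is $O(h^2)$, which is negligible compared to the target lower bound $h^{-2\alpha+1+\delta}$ since $2\alpha - \delta - 1 > 3$ by Condition (ii). So the main task is to produce a matching lower bound for $E[Z_{n,1}^2(x_{\ell})]$ uniformly in $\ell$.

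First I would pass to the untruncated version $\widetilde{Z}_{n,1}(x) = X_{\Delta} K_n((x-X_{\Delta})/h)$. Using $\|K_n\|_{\mathbb{R}} \lesssim h^{-\alpha} L^{-1}(1/h)$ (immediate from Lemma \ref{L5}) together with $E[X_{\Delta}^2 \mathbf{1}\{|X_{\Delta}|>\theta_n\}] \lesssim \theta_n^{-1-\epsilon}$ (available because Condition (i) implies that $\pi$ has $(3+\epsilon)$-th absolute moment), the truncation contribution is at most $h^{-2\alpha} L^{-2}(1/h) \theta_n^{-1-\epsilon}$, which is negligible relative to $h^{-2\alpha+1+\delta}$ under the choice $\theta_n \sim n^{1/2}(\log n)^{-3}$ together with Condition (vi).

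Next, writing
\begin{equation*}
E[\widetilde{Z}_{n,1}^2(x)] = \int_{\mathbb{R}} y^2\, \pi(y)\, K_n^2\!\left(\frac{x-y}{h}\right) dy,
\end{equation*}
I would apply the corollary of Lemma \ref{L55} (noted immediately after its proof) with $g(y) = y^2 \pi(y)$, which is bounded and continuous by Lemma \ref{L0}. This gives
\begin{equation*}
h^{2\alpha-1} L^2(1/h)\, E[\widetilde{Z}_{n,1}^2(x)] \longrightarrow D_1\, x^2 \pi(x) \quad \text{as } h \to 0.
\end{equation*}
Since $\pi$ is continuous and strictly positive on $(0,\infty)$ (as noted in the remark following Theorem \ref{CLT1}) and the design points lie in a compact subinterval $I \subset (0,\infty)$, we have $\inf_{x \in I} x^2 \pi(x) > 0$, hence $\min_{\ell} E[\widetilde{Z}_{n,1}^2(x_\ell)] \gtrsim h^{-2\alpha+1} L^{-2}(1/h)$ for $n$ large. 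The slowly varying property (Remark \ref{tail_decay}) then yields $L^{-2}(1/h) \gtrsim h^{\delta}$ for any $\delta > 0$, giving the claimed bound.

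The main subtlety will be verifying uniformity of the convergence in Lemma \ref{L55} over the design points $x_\ell$, which may grow in number with $n$. I expect this to follow from the uniform continuity of $y^2\pi(y)$ on the compact interval $I$ together with the localization estimate $h^{2\alpha} L^2(1/h)\, K_n^2(x) \lesssim \min(1, 1/x^4)$ from Lemma \ref{L5}, since the latter guarantees that the mass of $K_n^2((x-y)/h)$ concentrates in an $h$-neighborhood of $x$ at a rate that is independent of the specific $x_\ell$.
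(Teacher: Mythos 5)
Your proposal is correct and follows essentially the same route as the paper: the paper's own proof also splits off the truncated-out part $Z'_{n,1}$, bounds $E[(Z'_{n,1})^2] \lesssim h^{-2\alpha}\theta_n^{-1} \ll h^{-2\alpha+\delta+1}$, and then invokes the corollary of Lemma \ref{L55} applied to $g(y)=y^2\pi(y)$ for the matching lower bound on $E[\widetilde Z_{n,1}^2]$. You additionally make explicit the step that $(E[Z_{n,1}])^2 = O(h^2)$ is negligible (via Lemma \ref{LemB2}) and flag the uniformity-over-design-points issue, both of which the paper leaves implicit, but these are small clarifications rather than a different argument.
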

\begin{proof}
Let $Z'_{n,j}(x) = X_{j\Delta}1\{|X_{j\Delta}| > \theta_{n}\}K_{n}((x-X_{j\Delta})/h)$. 
Observe that 
\[
\min_{1 \leq \ell \leq N}E[(Z'_{n,j})^{2}(x_{\ell})] \lesssim h^{-2\alpha}\theta_{n}^{-1}E[|X_{1}|^{3}] \lesssim h^{-2\alpha}\theta_{n}^{-1} \ll h^{-2\alpha + \delta + 1}. 
\]
Since $\min_{1 \leq \ell \leq N}E[\widetilde{Z}_{n,1}^{2}(x_{\ell})] \gtrsim h^{-2\alpha + \delta + 1}$ by Lemma \ref{L55}, we have that 
\[
\min_{1 \leq \ell \leq N}E[Z_{n,1}^{2}(x_{\ell})]  = \min_{1 \leq \ell \leq N}E[(\widetilde{Z}_{n,1}(x_{\ell}) - Z'_{n,1}(x_{\ell}))^{2}]  \sim \min_{1 \leq \ell \leq N}E[\widetilde{Z}_{n,1}^{2}(x_{\ell})]). 
\]
\end{proof}
 
\begin{lemma}\label{LemB3}
$\max_{1 \leq k,\ell \leq N}|\Cov(Z_{n,1}(x_{k}),Z_{n,j+1}(x_{\ell}))| \lesssim e^{-j\Delta\beta_{1}/3}h^{2/3-2\alpha}$.
\end{lemma}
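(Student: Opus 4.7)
The plan is to combine a covariance inequality for $\beta$-mixing sequences with the uniform bound on $K_n$ from Lemma \ref{L5}. Since $\{X_{j\Delta}\}_{j \geq 1}$ inherits the exponential $\beta$-mixing of the continuous-time process $X$ under Condition (ii) (i.e.\ $\beta_X(j\Delta) \lesssim e^{-\beta_1 j\Delta}$), I would invoke the Davydov/Rio-type covariance bound for $\beta$-mixing random variables: for $U$ and $V$ measurable with respect to two $\sigma$-fields whose $\beta$-mixing coefficient is $\beta$, and for any $p,q > 1$ with $1/p + 1/q < 1$,
\[
|\Cov(U,V)| \lesssim \beta^{\,1 - 1/p - 1/q}\, \|U\|_p\, \|V\|_q.
\]
Taking $p = q = 3$ produces the exponent $1/3$ that matches the $e^{-j\Delta\beta_1/3}$ in the target bound. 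Thus the problem reduces to controlling $\|Z_{n,1}(x_\ell)\|_3$ uniformly in $\ell$.

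The second step is the third-moment bound. Writing
\[
E\bigl[|Z_{n,1}(x_\ell)|^3\bigr] \leq \int_{\R}\Bigl|x\, K_n\!\Bigl(\tfrac{x_\ell-x}{h}\Bigr)\Bigr|^3 \pi(x)\,dx = h\int_{\R}\Bigl|(x_\ell - hu)K_n(u)\Bigr|^3 \pi(x_\ell - hu)\,du,
\]
I would split $(x_\ell - hu)K_n(u) = x_\ell K_n(u) - h\,u K_n(u)$ and apply the two estimates in Lemma \ref{L5} (plus the comment immediately following it), each of which gives $h^{\alpha}L(1/h)$ times a factor $\min(1, u^{-2})$. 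Since the $x_\ell$'s lie in a fixed compact subset of $(0,\infty)$ and $\pi$ is uniformly bounded by Lemma \ref{L0}, this yields
\[
\bigl|(x_\ell - hu)K_n(u)\bigr| \lesssim h^{-\alpha} L^{-1}(1/h)\min(1, u^{-2}),
\]
and integrating the cube $\min(1,u^{-6})$ gives a finite constant. Therefore
\[
E\bigl[|Z_{n,1}(x_\ell)|^3\bigr] \lesssim h^{\,1 - 3\alpha} L^{-3}(1/h) \quad\Longrightarrow\quad \|Z_{n,1}(x_\ell)\|_3 \lesssim h^{\,1/3 - \alpha} L^{-1}(1/h),
\]
uniformly in $\ell$.

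Finally, I would combine the two displays, using stationarity to replace $\|Z_{n,j+1}(x_\ell)\|_3$ by $\|Z_{n,1}(x_\ell)\|_3$, and plug in $\beta_X(j\Delta) \lesssim e^{-\beta_1 j\Delta}$ to obtain
\[
\max_{1 \leq k,\ell \leq N}\bigl|\Cov(Z_{n,1}(x_k), Z_{n,j+1}(x_\ell))\bigr| \lesssim e^{-j\Delta\beta_1/3}\, h^{\,2/3 - 2\alpha} L^{-2}(1/h),
\]
which is the desired inequality. The main obstacle I expect is the bookkeeping in the third-moment estimate: one has to insist on using \emph{both} parts of Lemma \ref{L5} (the $|K_n|$ bound and the $|xK_n|$ bound) to absorb the factor $x$ coming from $Z_{n,1}(x_\ell) = X_{j\Delta}1\{|X_{j\Delta}|\le \theta_n\} K_n(\cdot)$ without losing a factor of $\theta_n$ (which would otherwise wreck the stated rate). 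The truncation at $\theta_n$ plays no essential role here and is harmless since the bound $|xK_n((x_\ell-x)/h)|$ from Lemma \ref{L5} is integrable in $x$ under $\pi$ without using the indicator.
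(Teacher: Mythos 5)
Your proposal is correct and follows essentially the same route as the paper: a Davydov-type covariance bound for $\beta$-mixing variables with $p=q=3$ (this is exactly Proposition 2.5 of Fan and Yao (2003), which the paper invokes), combined with a uniform third-moment bound $E[|Z_{n,1}(x_\ell)|^3] \lesssim h^{1-3\alpha}L^{-3}(1/h)$. The only minor difference is in how the third-moment bound is executed: the paper pulls out $\|y^3\pi\|_{\mathbb{R}}$ (bounded by Lemma \ref{L0}) and then uses $\|K_n^3\|_{L^1} \leq \|K_n\|_{\mathbb{R}}\|K_n^2\|_{L^1}$, whereas you use the pointwise $\min(1,u^{-2})$ bounds on $K_n$ and $hxK_n$ from Lemma \ref{L5} and integrate against $\pi$ alone; both are correct, and contrary to what you suggest at the end, the $|xK_n|$ bound from Lemma \ref{L5} is not the only way to absorb the factor $x$ — pulling out $\|y^3\pi\|_{\mathbb{R}}$ achieves the same thing.
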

\begin{proof}
Since $x^{3}\pi$ has a bounded Lebesgue density on $\mathbb{R}$ by Lemma \ref{L0} and $h^{2\alpha}|K_{n}|^{2}$ is integrable by Lemma \ref{L5}, we first observe that 
\begin{align*}
\max_{1 \leq \ell \leq N}E[|Z_{n,1}|^{3}(x_{\ell})] &\leq \max_{1 \leq \ell \leq N}\int_{\mathbb{R}}|y|^{3}\left|K_{n}\left({x_{\ell}-y \over h}\right)\right|^{3}\pi(y)dy \\
&\leq h\|y^{3}\pi\|_{\mathbb{R}}\|K_{n}^{3}\|_{L^{1}} \leq h\|y^{3}\pi\|_{\mathbb{R}}\|K_{n}\|_{\mathbb{R}}\|K_{n}^{2}\|_{L^{1}} \lesssim h^{1-3\alpha}. 
\end{align*}
Therefore, by Proposition 2.5 in \cite{FaYa03}, we obtain 
\begin{align*}
\max_{1 \leq k,\ell \leq N}|\Cov(Z_{n,1}(x_{k}), Z_{n,j+1}(x_{\ell}))| &\lesssim e^{-j\Delta \beta_{1}/3}\max_{1 \leq k \leq N}E[|Z_{n,1}(x_{k})|^{3}]^{1/3}\max_{1 \leq \ell \leq N}E[|Z_{n,j+1}(x_{\ell})|^{3}]^{1/3}\\ &\lesssim e^{-j\Delta \beta_{1}/3}h^{2/3-2\alpha}. 
\end{align*}
Then we have the desired result. 
\end{proof}

\begin{proposition}\label{PrpB1}
Let $\tilde{S}_{n}(x) = \sum_{j=1}^{n}Z_{n,j}(x)$. Then for any $\delta \in (0,1/12)$, we have that 
\begin{align*}
\max_{1 \leq \ell \leq N}\left({1 \over n}\Var(\tilde{S}_{n}(x_{\ell})) - \Var(Z_{n,1}(x_{\ell}))\right) = o(h^{-2\alpha + \delta +1}). 
\end{align*}
\end{proposition}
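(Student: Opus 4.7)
My plan is to exploit stationarity of the sequence $(X_{j\Delta})_{j\geq 1}$ and then apply the exponential $\beta$-mixing covariance bound from Lemma \ref{LemB3} in combination with the macroscopic-sampling hypothesis $\Delta \gtrsim \log n$. By stationarity, for each fixed $x$ we have
\begin{equation*}
\frac{1}{n}\Var(\tilde{S}_n(x)) - \Var(Z_{n,1}(x)) = 2\sum_{j=1}^{n-1}\left(1 - \frac{j}{n}\right)\Cov(Z_{n,1}(x), Z_{n,j+1}(x)),
\end{equation*}
so using $|1 - j/n| \leq 1$ and passing to the maximum over $\ell$, it suffices to prove
\begin{equation*}
\max_{1 \leq \ell \leq N}\sum_{j=1}^{n-1}\left|\Cov(Z_{n,1}(x_\ell), Z_{n,j+1}(x_\ell))\right| = o(h^{-2\alpha+\delta+1}).
\end{equation*}

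The next step is to invoke Lemma \ref{LemB3} uniformly in $\ell$ (it is stated for the maximum over pairs $(k,\ell)$, so setting $k=\ell$ suffices), which gives a covariance bound of order $e^{-j\Delta\beta_1/3}\,h^{2/3-2\alpha}L^{-2}(1/h)$. Summing the geometric tail yields
\begin{equation*}
\max_{1 \leq \ell \leq N}\sum_{j=1}^{n-1}\left|\Cov(Z_{n,1}(x_\ell), Z_{n,j+1}(x_\ell))\right| \lesssim h^{2/3-2\alpha}L^{-2}(1/h) \cdot \frac{e^{-\Delta\beta_1/3}}{1-e^{-\Delta\beta_1/3}} \lesssim h^{2/3-2\alpha}L^{-2}(1/h)\,e^{-\Delta\beta_1/3},
\end{equation*}
where the final step uses $\Delta\to\infty$.

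It then remains to verify that this upper bound is $o(h^{-2\alpha+\delta+1})$, which reduces to
\begin{equation*}
e^{-\Delta\beta_1/3} = o\!\left(h^{1/3+\delta}L^2(1/h)\right).
\end{equation*}
Under Assumption \ref{Ass1}(vi), $h^{-1}$ grows at most polynomially in $n$ (like $n^{1/(2+2\alpha-\delta)}$ up to logarithmic factors), so $h^{1/3+\delta}$ is polynomially small in $n$, while $\Delta \gtrsim \log n$ makes $e^{-\Delta\beta_1/3}$ polynomially small in $n$; the slowly varying factor $L^{2}(1/h)$ is harmless because Karamata's theorem gives $h^{\epsilon}\ll L(1/h) \ll h^{-\epsilon}$ for every $\epsilon>0$. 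The main obstacle is exactly this last bookkeeping: the covariance bound in Lemma \ref{LemB3} loses a factor of $h^{-1/3-\delta}$ relative to the target variance size $h^{-2\alpha+\delta+1}$, so one must lean heavily on the fact that $\Delta \gtrsim \log n$ converts exponential mixing into sufficiently fast polynomial-in-$n$ decay, with the implicit constant in $\gtrsim$ coordinated with $\beta_1$ and the polynomial rate of $h\to 0$ so that the estimate closes.
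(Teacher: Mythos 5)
Your proof is correct and follows essentially the same route as the paper: expand the variance via stationarity, bound the covariance sum uniformly over design points using Lemma \ref{LemB3}, sum the geometric tail, and close using $\Delta \gtrsim \log n$. If anything, your final bookkeeping is slightly more explicit than the paper's, which only writes $h^{2\alpha-1}L^{2}(1/h)\max_{\ell}|\sum_{j}\Cov(\cdot)| \lesssim h^{-1/3}e^{-\Delta\beta_1/3} \to 0$ and leaves implicit the conversion from the $L^{-2}(1/h)$-normalized statement to the target $o(h^{-2\alpha+\delta+1})$ (which, as you correctly observe, requires using the explicit rate $h^{-1/3}e^{-\Delta\beta_1/3}$ together with the fact that $L$ is slowly varying and that $\Delta \gtrsim \log n$ makes $e^{-\Delta\beta_1/3}$ decay polynomially in $n$ with a large enough exponent).
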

\begin{proof}
It is easy to show that 
\[
{1 \over n}\Var(\tilde{S}_{n}(x)) = \Var(Z_{n,1}(x)) + 2\sum_{j=1}^{n-1}(1-j/n)\Cov(Z_{n,1}(x), Z_{n,j+1}(x)).  
\]
By Lemma \ref{LemB3}, we have that 
\begin{align*}
h^{2\alpha -1-\delta}\max_{1 \leq \ell \leq N}\left|\sum_{j=1}^{\infty}\Cov(Z_{n,1}(x_{\ell}), Z_{n,j+1}(x_{\ell}))\right| &\leq h^{2\alpha-1-\delta}\max_{1 \leq \ell \leq N}\sum_{j=1}^{\infty}\left|\Cov(Z_{n,1}(x_{\ell}), Z_{n,j+1}(x_{\ell}))\right|\\
& \lesssim h^{2\alpha-1-\delta} \times h^{2/3-2\alpha}\sum_{j=1}^{\infty}e^{-j\Delta \beta_{1}/3}\\
& \lesssim h^{-\delta - 1/3}e^{-\Delta \beta_{1}/3} \lesssim e^{{5 \over 12}\log(1/h) - \Delta \beta_{1}/3}.
\end{align*}
Since $\log(1/h) < {C_{0} \over 2+2\alpha -\delta}\log n$ for sufficiently large $n$ and ${5C_{0} \over 4\beta_{1}(2 + 2\alpha -\delta)}\log n \leq \Delta$, we have that
\[
{5 \over 12}\log(1/h) - \Delta \beta_{1}/3 = -c_{0}\log n
\]
for some positive constant $c_{0}$. Therefore, we have the desired result. 
\end{proof}

Proposition \ref{PrpB1} implies that the dependence between $Z_{n,1}(x)$ and $Z_{n,j+1}(x)$ is negligible. This enables us to estimate $\sigma_{n}^{2}(x) = n^{-1}\Var(S_{n}(x)) = \Var(\sqrt{n}hZ_{n}(x))$ by the sample variance (\ref{sigsq_est}). 
Moreover Propositions \ref{LemB1} and \ref{PrpB1}, and Lemma \ref{L55} yield that $\min_{1 \leq \ell \leq N}\sigma^{2}_{n}(x_{\ell}) \gtrsim h^{-2\alpha + \delta + 1}$. 

Observe that 
\begin{align}
\widehat{k}_{\sharp}(x) - k_{\sharp}(x) &= {-i \over 2\pi}\int_{\mathbb{R}}e^{-iux}{\varphi'_{\sharp}(u) \over \varphi_{\sharp}(u)}\varphi_{W}(uh)du - k_{\sharp}(x) \nonumber  \\
&\quad + {-i \over 2\pi}\int_{\mathbb{R}}e^{-iux}\left({\widehat{\varphi}'_{\sharp}(u) \over \widehat{\varphi}_{\sharp}(u)} - {\varphi'_{\sharp}(u) \over \varphi_{\sharp}(u)}\right)\varphi_{W}(uh)du \nonumber  \\
&= [k_{\sharp} \ast (h^{-1}W(\cdot/h))](x) - k_{\sharp}(x) \nonumber  \\
&\quad +  {-i \over 2\pi}\int_{\mathbb{R}}e^{-iux}\left({\widehat{\varphi}'_{\sharp}(u) \over \widehat{\varphi}_{\sharp}(u)} - {\varphi'_{\sharp}(u) \over \varphi_{\sharp}(u)}\right)\varphi_{W}(uh)du \nonumber  \\
&=: I_{n} + II_{n}. \label{Decomp_k}
\end{align}
For the first term, we have that $\|I_{n}\|_{\mathbb{R}} \lesssim h^{r}$ (by Lemma \ref{L4}). For the second term $II_{n}$, Lemma \ref{L3} yields that 
\begin{align*}
II_{n} &= {-i \over 2\pi}\int_{\mathbb{R}}e^{-itx}\left({\widehat{\varphi}'_{\theta_{n}}(t) - \varphi'_{\theta_{n}}(t) \over \varphi(t)}\right)\varphi_{W}(th)dt + O_{P}(h^{-2\alpha-1}n^{-1}(\log n)^{2}  + h^{-\alpha}n^{-1/2}\log n)\\
&=  {-i \over 2\pi}\int_{\mathbb{R}}e^{-itx}\left({\widehat{\varphi}'_{\theta_{n}}(t) - \varphi'_{\theta_{n}}(t) \over \varphi(t)}\right)\varphi_{W}(th)dt + o_{P}((nh^{2\alpha + 1 -\delta}\log n)^{-1/2})
\end{align*}
uniformly in $x \in \{x_{1},\hdots, x_{N}\}$. Therefore, since $\min_{1 \leq \ell \leq N}\sigma_{n}(x_{\ell}) \gtrsim \sqrt{h^{-2\alpha + \delta +1}}$ (see the comment after Proposition \ref{PrpB1}), we have that
\begin{align}\label{k_approx}
{\sqrt{n}h(\hat{k}_{\sharp}(x) - k_{\sharp}(x)) \over \sigma_{n}(x)} &= W_{n}(x) + o_{P}((\log n)^{-1/2})
\end{align} 
uniformly in $x \in \{x_{1},\hdots, x_{N}\}$.

\begin{lemma}\label{L4}
Assume Conditions (iii), (v), and (vi) in Assumption \ref{Ass1}. Then we have that 
\[
\|[k_{\sharp} \ast (h^{-1}W(\cdot/h))] - k_{\sharp}\|_{\mathbb{R}} \lesssim h^{r} = o((nh^{2\alpha + 1 -\delta}\log n)^{-1/2}).
\] 
\end{lemma}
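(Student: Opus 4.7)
The plan is to bound the deterministic bias $[k_{\sharp}\ast (h^{-1}W(\cdot/h))](x) - k_{\sharp}(x)$ pointwise by a standard Taylor-plus-Hölder argument using the $(p+1)$-th order moment conditions on $W$, and then to verify the $o(\cdot)$ claim directly from Condition (vi) in Assumption \ref{Ass1}.

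First I would write, by the change of variables $z = (x-y)/h$,
\[
[k_{\sharp}\ast (h^{-1}W(\cdot/h))](x) - k_{\sharp}(x) = \int_{\mathbb{R}}\bigl(k_{\sharp}(x-hz) - k_{\sharp}(x)\bigr)W(z)\,dz,
\]
using $\int W = 1$. Since $k_{\sharp}$ is $p$-times differentiable with $k_{\sharp}^{(p)}$ being $(r-p)$-Hölder continuous, Taylor's theorem with integral remainder gives
\[
k_{\sharp}(x-hz) = \sum_{\ell=0}^{p}\frac{(-hz)^{\ell}}{\ell!}k_{\sharp}^{(\ell)}(x) + \frac{(-hz)^{p}}{p!}\bigl(k_{\sharp}^{(p)}(\xi_{x,z}) - k_{\sharp}^{(p)}(x)\bigr)
\]
for some $\xi_{x,z}$ between $x$ and $x-hz$. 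The vanishing moment conditions $\int z^{\ell}W(z)dz = 0$ for $\ell=1,\ldots,p$ in Condition (v) annihilate all terms in the Taylor polynomial except the $\ell=0$ term, which cancels with $k_{\sharp}(x)$. Hence
\[
[k_{\sharp}\ast(h^{-1}W(\cdot/h))](x) - k_{\sharp}(x) = \frac{(-h)^{p}}{p!}\int_{\mathbb{R}}z^{p}\bigl(k_{\sharp}^{(p)}(\xi_{x,z}) - k_{\sharp}^{(p)}(x)\bigr)W(z)\,dz.
\]

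Applying the Hölder bound $|k_{\sharp}^{(p)}(\xi_{x,z}) - k_{\sharp}^{(p)}(x)| \lesssim |hz|^{r-p}$ uniformly in $x$, we obtain
\[
\bigl|[k_{\sharp}\ast(h^{-1}W(\cdot/h))](x) - k_{\sharp}(x)\bigr| \lesssim h^{r}\int_{\mathbb{R}}|z|^{r}|W(z)|\,dz,
\]
and the final integral is finite because $|z|^{r} \leq 1 + |z|^{p+1}$ and $\int |z|^{p+1}|W(z)|dz < \infty$ by Condition (v). Taking the supremum in $x$ yields $\|[k_{\sharp}\ast(h^{-1}W(\cdot/h))] - k_{\sharp}\|_{\mathbb{R}} \lesssim h^{r}$.

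It remains to check $h^{r} = o((nh^{2\alpha+1-\delta}\log n)^{-1/2})$. This is equivalent to $h^{1+2r+2\alpha-\delta}n\log n \to 0$, which is precisely the upper bound on $h$ stated in Condition (vi) of Assumption \ref{Ass1}. There is no real obstacle here; the only point to be slightly careful about is that the Taylor remainder argument uses the $(p+1)$-th order vanishing moments of $W$ (which is allowed to also have $\int z^{p+1}W(z)dz = 0$), but the estimate only needs the Hölder continuity of $k_{\sharp}^{(p)}$ of exponent $r-p$, so we do not rely on that optional vanishing.
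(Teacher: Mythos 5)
Your proof is correct and follows essentially the same route as the paper: change of variables, Taylor expansion to order $p$, cancellation of the Taylor polynomial using the vanishing moments of $W$, and the $(r-p)$-Hölder bound on $k_{\sharp}^{(p)}$ to obtain the $h^{r}$ rate, with the $o(\cdot)$ bound read off directly from Condition (vi). The only cosmetic difference is that the paper writes the remainder in Lagrange form with explicit $\theta \in [0,1]$ rather than the equivalent form you use, and your observation that the optional $(p+1)$-th moment vanishing is not actually needed is a nice clarification.
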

\begin{proof}
Observe that by a change of variables, $[k_{\sharp}*(h^{-1}W(\cdot/h))] (x) - k_{\sharp}(x) = \int_{\R} \{ k_{\sharp}(x-yh) - k_{\sharp}(x) \} W(y) dy$.
If $p \geq 1$, then
by Taylor's theorem, for any $x,y \in \R$, 
\[
k_{\sharp}(x-yh) - k_{\sharp}(x) 
=
 \sum_{\ell=1}^{p-1} \frac{k_{\sharp}^{(\ell)}(x)}{\ell !} (-yh)^{\ell} + \frac{k_{\sharp}^{(p)}(x-\theta yh)}{p!} (-yh)^{p}
\]
for some $\theta \in [0,1]$, where $\sum_{\ell=1}^{0} = 0$ by convention.
Since $k_{\sharp}^{(p)}$ is $(r-p)$-H\"{o}lder continuous, we have that $H:= \sup_{x,y \in \R, x \neq y} \frac{|k_{\sharp}^{(p)} (x) - k_{\sharp}^{(p)}(y)|}{|x-y|^{r-p}} < \infty$.
Now, since $\int_{\R} y^{\ell} W(y) dy = 0$ for $\ell=1,\dots,p$, we have that for any $x \in \R$, 
\begin{align*}
\left | \int_{\R} \{ k_{\sharp}(x-yh) - k_{\sharp}(x)  \} W(y) dy \right | &=\left | \int_{\R} \left [ \{ k_{\sharp}(x-yh) - k_{\sharp}(x)  \} - \sum_{\ell=1}^{p} \frac{k_{\sharp}^{(\ell)}(x)}{\ell !} (-yh)^{\ell} \right ] W(y) dy \right | \\
&\leq \frac{Hh^{r}}{p!} \int_{\R} |y|^{r} |W(y)| dy,
\end{align*}
where $0!=1$ by convention. 
This completes the proof. 
\end{proof}
Let $Q_{n}(x) = {1 \over \sqrt{n}}\sum_{j=1}^{n}Z_{n,j}(x)$ with $Z_{n,j}(x) = X_{j\Delta}1\{|X_{j\Delta}| \leq \theta_{n}\}K_{n}((x-X_{j\Delta})/h)$. We use the following result to show that the asymptotic variances which appear in Theorem
\ref{CLT2} is a diagonal matrix. 
\begin{proposition}\label{PrpB4}
For any $\delta \in (0,1/12)$, we have that 
\[
\max_{1 \leq k \neq \ell \leq N}|\Cov(Q_{n}(x_{k}),Q_{n}(x_{\ell}))| = o(h^{-2\alpha + \delta + 1}).
\] 
\end{proposition}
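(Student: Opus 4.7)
The plan is to use stationarity of $\{X_{j\Delta}\}$ and split
\[
\Cov(Q_{n}(x_{k}),Q_{n}(x_{\ell})) = \Cov(Z_{n,1}(x_{k}),Z_{n,1}(x_{\ell})) + \sum_{j=1}^{n-1}\bigl(1-\tfrac{j}{n}\bigr)\bigl\{\Cov(Z_{n,1}(x_{k}),Z_{n,j+1}(x_{\ell})) + \Cov(Z_{n,1}(x_{\ell}),Z_{n,j+1}(x_{k}))\bigr\}
\]
into a ``diagonal'' (same-time) term and a ``lagged'' sum, and bound each separately by $o(h^{-2\alpha+\delta+1})$.

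For the lagged sum, I would invoke Lemma \ref{LemB3} (which bounds cross-covariances between $Z_{n,1}(x_{k})$ and $Z_{n,j+1}(x_{\ell})$ irrespective of whether $k=\ell$) to get $|\Cov(Z_{n,1}(x_{k}),Z_{n,j+1}(x_{\ell}))| \lesssim e^{-j\Delta \beta_{1}/3}h^{2/3-2\alpha}L^{-2}(1/h)$, sum the geometric series to obtain an overall bound of order $h^{-1/3}e^{-\Delta \beta_{1}/3}\cdot h^{-2\alpha+\delta+1}\cdot h^{-\delta}L^{-2}(1/h)$, and conclude via $\Delta \gtrsim \log n$ (with the implicit constant taken large enough, exactly as in the proof of Proposition \ref{PrpB1}) together with the slow growth of $L$ from Remark \ref{tail_decay}.

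For the diagonal term, the key is to exploit the spatial separation $\min_{k\neq \ell}|x_{k}-x_{\ell}|\gg h^{1-2\delta}$ in (\ref{design_points}). Discarding the truncation indicator gives
\[
|E[Z_{n,1}(x_{k})Z_{n,1}(x_{\ell})]| \leq \int y^{2}\bigl|K_{n}\bigl(\tfrac{x_{k}-y}{h}\bigr)\bigr|\,\bigl|K_{n}\bigl(\tfrac{x_{\ell}-y}{h}\bigr)\bigr|\pi(y)dy.
\]
Using $\|y^{2}\pi\|_{\R}\lesssim 1$ (Lemma \ref{L0}) and the change of variables $z=(x_{k}-y)/h$ with $\eta_{k,\ell}=(x_{\ell}-x_{k})/h$, this is at most $h\int|K_{n}(z)K_{n}(z+\eta_{k,\ell})|dz$. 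The pointwise bound $|K_{n}(u)|\lesssim h^{-\alpha}L^{-1}(1/h)\min(1,1/u^{2})$ from Lemma \ref{L5} then reduces matters to showing $\int \min(1,1/z^{2})\min(1,1/(z+\eta)^{2})dz\lesssim 1/\eta^{2}$ for large $|\eta|$, which follows by splitting the integration region into $\{|z|<|\eta|/2\}$ (on which $|\eta+z|\geq|\eta|/2$) and its complement (on which $|z|\geq|\eta|/2$), and noting that each of $\int \min(1,z^{-2})dz$ and $\int \min(1,(\eta+z)^{-2})dz$ is $O(1)$. Since the separation yields $\eta_{k,\ell}^{-2}\ll h^{4\delta}$ and $L^{-2}(1/h)\lesssim 1$ by Remark \ref{tail_decay}, we obtain $|E[Z_{n,1}(x_{k})Z_{n,1}(x_{\ell})]|=o(h^{1-2\alpha+4\delta})$, which is $o(h^{1-2\alpha+\delta})$ as required. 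The product of means is controlled by $h^{2}=o(h^{1-2\alpha+\delta})$ via Lemma \ref{LemB2}.

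The main obstacle is the diagonal bound: getting the correct decay in $|\eta_{k,\ell}|$ requires using the quadratic decay of $K_{n}$ in Lemma \ref{L5} together with the two-region splitting of the convolution integral, and then converting the separation assumption (\ref{design_points}) into the power of $h$ needed to beat the natural variance scale $h^{1-2\alpha+\delta}$; the mixing tail has essentially already been handled in the proof of Proposition \ref{PrpB1}.
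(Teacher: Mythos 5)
Your proof is correct and follows essentially the same route as the paper: the same decomposition of $\Cov(Q_n(x_k),Q_n(x_\ell))$ into a same-time term, a lagged sum handled via Lemma~\ref{LemB3} and $\Delta\gtrsim\log n$ (as in Proposition~\ref{PrpB1}), and the square of the means controlled by Lemma~\ref{LemB2}, with the diagonal term bounded by the change of variables and the $\min(1,u^{-2})$ decay of $h^\alpha K_n$ from Lemma~\ref{L5}. The only difference is cosmetic: you split the convolution integral at $|z|=|\eta_{k,\ell}|/2$ (yielding the slightly sharper $\eta_{k,\ell}^{-2}$), whereas the paper splits at $|z|=h^{-2\delta}$; both comfortably give the required $o(h^{-2\alpha+\delta+1})$.
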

\begin{proof}
Since $\max_{1 \leq \ell \leq N}|E[Z_{n,1}(x_{\ell})]| \lesssim h$ by Lemma \ref{LemB2}, we have that 
\begin{align*}
\Cov(Q_{n}(x_{1}),Q_{n}(x_{2})) &= {1 \over n}\sum_{j,\ell=1}^{n}E[Z_{n,j}(x_{1})Z_{\ell,n}(x_{2})] - E[Z_{n,1}(x_{1})]E[Z_{n,1}(x_{1})]\\
&= E[Z_{n,1}(x_{1})Z_{n,1}(x_{2})] + 2\sum_{j=1}^{n-1}\left(1 - {j \over n}\right)E[Z_{n,1}(x_{1})Z_{n,j+1}(x_{2})] + O(h^{2}).
\end{align*}
With almost the same arguments in the proof of Proposition \ref{PrpB1} yields that 
\[
\max_{1 \leq k,\ell \leq N}\left(\sum_{j=1}^{n-1}E[|Z_{n,1}(x_{k})Z_{n,j+1}(x_{\ell})|]\right) = o(h^{-2\alpha + \delta + 1}). 
\]
Hence it is sufficient to show that $\max_{1 \leq k,\ell \leq N}E[|Z_{n,1}(x_{k})Z_{n,1}(x_{\ell})|] = o(h^{-2\alpha + \delta + 1})$. Let $0<x_{1}<x_{2}<\infty$. Since $h^{\alpha}|K_{n}(x)| \lesssim \min(1, 1/x^{2})$ by Lemma \ref{L5}, 
\begin{align*}
h^{2\alpha-1-\delta}E[|Z_{n,1}(x_{1})Z_{n,1}(x_{2})|] &= h^{2\alpha-1-\delta}\int_{\mathbb{R}}y^{2}\left|K_{n}\left({x_{1}-y \over h}\right)\right|\left|K_{n}\left( {x_{2}-y \over h}\right)\right|\pi(y)dy\\
&\leq h^{-\delta}\|x^{2}\pi\|_{\mathbb{R}}\int_{\mathbb{R}}|h^{\alpha}K_{n}(z)|\left|h^{\alpha}K_{n}\left(z + {x_{2} - x_{1} \over h}\right)\right|dz\\
&\lesssim h^{-\delta}\int_{\mathbb{R}}(1 \wedge z^{-2})\left(1\wedge {h^{2} \over (zh + (x_{2} - x_{1}))^{2}} \right)dz.
\end{align*}
If $|z|\leq h^{-2\delta}$ and take $h$ sufficiently small, then we have that 
\begin{align*}
\int_{|z| \leq h^{-2\delta}}(1 \wedge z^{-2})\left(1\wedge {h^{2} \over (zh + (x_{2} - x_{1}))^{2}} \right)dy &\leq \int_{|z| \leq h^{-2\delta}}(1 \wedge z^{-2}){h^{2} \over (x_{2} - x_{1})^{2}}dz\\
&\lesssim {h^{2} \over \min_{1 \leq k \neq \ell \leq N}|x_{k} - x_{\ell}|^{2}} \ll h^{4\delta}.  
\end{align*} 
Moreover, 
\begin{align*}
\int_{|z|>h^{-2\delta}}(1 \wedge z^{-2})\left(1\wedge {h^{2} \over (zh + (x_{2} - x_{1}))^{2}} \right)dy \leq \int_{|z|>h^{-2\delta}}(1 \wedge z^{-2})dz \lesssim h^{2\delta}. 
\end{align*}
Therefore we have that 
\[
h^{2\alpha-1-\delta}\max_{1 \leq k \neq \ell \leq N}E[|Z_{n,1}(x_{k})Z_{n,1}(x_{\ell})|] \lesssim h^{-\delta} (h^{4\delta} + h^{2\delta}) \lesssim h^{\delta} \ll 1. 
\]
\end{proof}

\begin{proof}[\textbf{Proof of Theorem \ref{CLT2}}]
Now we prove Theorem \ref{CLT2}. 
Let $S_{n}(x) = \sum_{j=1}^{n}Y_{n,j}(x)$ with $Y_{n,j}(x) = (Z_{n,j}(x) - E[Z_{n,1}(x)])$. First we will show that 
\[
{S_{n}(x) \over \sigma_{n}(x)\sqrt{n}} \stackrel{d}{\to} N(0,1)
\]
for $0<x<\infty$. We consider the following decomposition of $S_{n}(x)$.
\[
S_{n}(x) = \sum_{j=1}^{k_{n}}\xi_{n,j}(x) + \sum_{j=1}^{k_{n}}\eta_{n,j}(x) + \zeta_{n}(x),
\]
where
\begin{align*}
\xi_{n,j}(x) &= \sum_{k = (j-1)(l_{n} + s_{n})+1}^{jl_{n} + (j-1)s_{n}}Y_{n,k}(x),\ \eta_{n,j}(x) = \sum_{k = jl_{n} + (j-1)s_{n}+1}^{j(l_{n} + s_{n})}Y_{n,k}(x),\\
\zeta_{n}(x) &= \sum_{j=k_{n}(l_{n} + s_{n})}^{n}Y_{n,j}(x).
\end{align*}
We take $l_{n} = [\sqrt{nh}/(\log n)]$, $s_{n} = [(\sqrt{n/h} \log n)^{1/6}]$. Since $(\log n)^{4} \ll nh^{7/5}$,  we have that 
\[
{s_{n} \over l_{n}} = O\left(\left({1 \over nh^{7/5}}\right)^{5/12}(\log n)^{5/3}\right) \to 0
\] 
and $k_{n} = [n/(l_{n} + s_{n})] = O(\sqrt{n/h}\log n)$. 
We show the desired result in several steps. 

(Step1): In this step, we will show that 
\[
{S_{n}(x) \over \sigma_{n}(x)\sqrt{n}} = {1 \over \sigma_{n}(x)\sqrt{n}}\sum_{j=1}^{k_{n}}\xi_{n,j}(x) + o_{P}(1). 
\] 
Note that $\beta$-mixing coefficients satisfy $n^{6}\beta(n) \to 0$ as $n \to \infty$, we have that $k_{n}\beta(s_{n}) \to 0$ as $n \to \infty$. By the definition of $\eta_{n,1}(x)$, we have that 
\begin{align*}
{1 \over s_{n}\sigma_{n}^{2}(x)}\Var(\eta_{n,1}(x)) &\leq {\Var(Z_{n,1}(x)) \over \sigma_{n}^{2}(x)} + {1 \over \sigma_{n}^{2}(x)}\left|\sum_{j=1}^{s_{n}}\left(1-{j \over s_{n}}\right)\Cov(Z_{n,1}(x), Z_{n,j+1}(x))\right| \lesssim 1. 
\end{align*}
Since $|\eta_{n,j}(x)|/(s_{n}h^{-(1+\delta)/2}\sigma_{n}(x))$ is bounded (see the comment after the proof of Lemma \ref{L5}), by Proposition 2.6 in \cite{FaYa03}, $|\Cov(\eta_{n,1}(x), \eta_{n,j+1}(x))| \lesssim s_{n}^{2}h^{-(1+\delta)}\sigma_{n}^{2}(x)\beta(jl_{n}\Delta)$. Then we have that 
\[
{1 \over s_{n}\sigma_{n}^{2}(x)}\sum_{j=1}^{k_{n}}|\Cov(\eta_{n,1}(x), \eta_{n,j+1}(x))| \lesssim s_{n}h^{-(1+\delta)}\sum_{j=1}^{k_{n}}\beta(jl_{n}\Delta) \leq s_{n}h^{-(1+\delta)}\sum_{j=1}^{\infty}\beta(jl_{n}\Delta) \ll 1. 
\]
Therefore, we have that
\begin{align*}
{1 \over n \sigma_{n}^{2}(x)}\Var\left(\sum_{j=1}^{k_{n}}\eta_{n,j}(x)\right) &\lesssim {k_{n} \Var(\eta_{n,1}) \over n\sigma_{n}^{2}(x)} + {2 \over n\sigma_{n}^{2}(x)}\sum_{j=1}^{k_{n}-1}\left|\Cov(\eta_{n,1}(x), \eta_{n,j+1}(x))\right|\\
&\lesssim {k_{n}s_{n} \over n} + {2 \over n\sigma_{n}^{2}(x)}\sum_{j=1}^{k_{n}-1}\left|\Cov(\eta_{n,1}(x), \eta_{n,j+1}(x))\right| \to 0,\ \text{as}\ n \to \infty. 
\end{align*}
Likewise, we have that 
\[
{1 \over n\sigma_{n}^{2}(x)}\Var(\zeta_{n}(x)) = {l_{n} + s_{n} \over n}{1 \over (l_{n} + s_{n})\sigma_{n}^{2}(x)}\Var(\zeta_{n}(x)) \to 0,\ \text{as}\ n \to \infty
\]
since $n-k_{n}(l_{n} + s_{n}) \lesssim (l_{n} + s_{n})$.

(Step2): We set $T_{n}(x) = \sum_{j=1}^{k_{n}}\xi_{n,j}(x)$. In this step we show that
\[
{T_{n}(x) \over \sigma_{n}(x)\sqrt{n}} \stackrel{d}{\to} N(0,1). 
\]

Define $M_{n} = \left|E\left[\exp\left(itT_{n}(x)/\sqrt{n\sigma^{2}_{n}(x)}\right)\right] - \exp\left(-t^{2}/2\right)\right|$, where $i = \sqrt{-1}$. Then it is sufficient to show that for any $\epsilon>0$, $\lim_{n \to \infty}M_{n} < \epsilon$. Note that 
\begin{align*}
M_{n} &\leq \left|E\left[\exp(itT_{n}(x)/\sqrt{n\sigma^{2}_{n}(x)})\right] - \prod_{j=1}^{k_{n}}E\left[\exp(it\xi_{j,n}(x)/\sqrt{n\sigma^{2}_{n}(x)})\right]\right|\\
& + \left|\prod_{j=1}^{k_{n}}E\left[\exp(it\xi_{j,n}(x)/\sqrt{n\sigma^{2}_{n}(x)})\right] - \exp(-t^{2}/2)\right|\\
& =: A_{n,1} + A_{n,2}. 
\end{align*}
By Lemma 2.4 in \cite{FaMa92} and $k_{n}\beta(s_{n}) \to 0$ as $n \to \infty$, we have that $A_{n,1} \lesssim k_{n}\beta(s_{n}) \to 0$ as $n \to \infty$.  

Finally we show $\lim_{n \to \infty}A_{n,2} = 0$. This  is equivalent to showing that
\begin{align}\label{CLT_TL}
{1 \over \sqrt{n}}\widetilde{T}_{n}(x) \stackrel{d}{\to} N(0,1),
\end{align}
where $\widetilde{T}_{n}(x) = \sum_{j=1}^{n}\widetilde{\xi}_{n,j}$ and $\{\widetilde{\xi}_{n,j}(x)\}$ are independent random variables such that $\widetilde{\xi}_{n,j}(x) \stackrel{d}{=} \xi_{n,j}(x)/\sigma_{n}(x)$. It is easy to show that  $\{\xi_{n,j}(x)/\sigma_{n}(x)\}$ is a sequence of bounded random variables. To show (\ref{CLT_TL}), it is sufficient to check the following Lindeberg condition.
\[
{1 \over nh}\sum_{j=1}^{k_{n}}E[|\widetilde{\xi}_{n,j}(x)|^{2}1\{|\widetilde{\xi}_{n,j}(x)|> \omega \sqrt{n} \}] \to 0,\ \text{as}\ n \to \infty 
\]  
for any $\omega>0$. By H\"older's inequality, Markov's inequality and Proposition 2.7 in \cite{FaYa03}, we have that 
\begin{align*}
E[|\widetilde{\xi}_{n,j}|^{2}1\{|\widetilde{\xi}_{n,j}| \geq \omega \sqrt{n}\}] &\leq E[|\widetilde{\xi}_{n,j}|^{4}]^{1/2}P(|\widetilde{\xi}_{n,j}| > \omega \sqrt{n})^{1/2}\\
&\lesssim (l_{n}^{4/2})^{1/2}  {E[|\widetilde{\xi}_{n,j}|^{12}]^{1/2} \over n^{3}} \lesssim l_{n}  \left({l_{n} \over \sqrt{nh}}\right)^{3}  {1 \over (nh)^{3/2}}.
\end{align*}
Therefore, we have that
\[
{1 \over nh}\sum_{j=1}^{k_{n}}E[|\tilde{\xi}_{n,j}|^{2}1\{|\tilde{\xi}_{n,j}| > \omega \sqrt{n}\}] \lesssim {k_{n}l_{n} \over n}  \left({l_{n} \over \sqrt{nh}}\right)^{3}  \left({1 \over nh^{{5 \over 3}}}\right)^{2/3} \to 0,\ \text{as}\ n \to \infty 
\]
since $nh^{5/3} \to \infty$. 

(Step 3): In this step, we complete the proof. Considering (\ref{k_approx}), Condition (vi) in Assumption \ref{Ass1} and Lemma \ref{L4} yields that the bias term $I_{n}$ is asymptotically negligible since $h^{r}\sqrt{nh^{2\alpha + 1-\delta}\log n} \to 0$ as $n \to \infty$. This implies that 
\[
{\sqrt{n}h(\widehat{k}_{\sharp}(x) - k_{\sharp}(x)) \over \sigma_{n}(x)} - {S_{n}(x) \over \sigma_{n}(x)\sqrt{n}} = o_{P}((\log n)^{-1/2})
\]
and the asymptotic distribution of $\sqrt{n}(\widehat{k}_{\sharp}(x) - k_{\sharp}(x))$ is the same as that of $S_{n}(x)$. Moreover, Proposition \ref{PrpB4} implies that asymptotic covariance between $S_{n}(x_{1})/\sqrt{n}$ and $S_{n}(x_{2})/\sqrt{n}$ for different design points $0<x_{1}<x_{2}<\infty$ is asymptotically negligible. Therefore, we finally obtain the desired result. 
\end{proof}

\subsection{Proofs for Section \ref{HDCLT}}

We note that Lemmas and Propositions in Section \ref{Proof_Sec3} also hold when $0<x_{1}<\cdots<x_{N}<\infty$, $x_{\ell} \in I$ for $\ell = 1,\hdots, N$, and $\min_{1 \leq k \neq \ell \leq N}|x_{k} - x_{\ell}| \gg h^{1-2\delta}$.  In particular, we need to take into account the effect of  the separation between points in the proof of Lemmas 4.1 and A.10, and Theorem A.1. In the proof of Theorem A.1, we use the lower bound of $\min_{1 \leq \ell \leq N}\sigma_{n}(x_{\ell})$ to obtain an intermediate Gaussian approximation result.  We also need to take care of the effect of the discretization of a compact set $I$ to obtain the consistency of $\hat{\sigma}^{2}_{n}(x)$ on the discrete points in Lemma 4.1, that is, $\max_{1 \leq \ell \leq N}|\hat{\sigma}^{2}_{n}(x_{\ell})/\sigma^{2}_{n}(x_{\ell})-1| \stackrel{P}{\to} 0$. Moreover, in the proof of Lemma A.10, we use the condition $\min_{1 \leq k \neq \ell \leq N}|x_{k} - x_{\ell}| \gg h^{1-2\delta}$ to obtain a result that the variance-covariance matrix a random vector $(W_{n}(x_{1}),\hdots, W_{n}(x_{N}))^{\top}$ can be approximated by the $N \times N$ identity matrix and this yields a Gaussian comparison result (Proposition A.4). 

\begin{proof}[\textbf{Proof of Lemma \ref{Var_approx}}]
Since $\|K_{n}\|_{\mathbb{R}} \lesssim h^{-\alpha}$ and we can show $\|K_{n} - \widehat{K}_{n}\|_{\mathbb{R}} = O_{P}\left(h^{-2\alpha}n^{-1/2}\log n\right)$, we have that 
\[
\|\widehat{K}_{n}\|_{\mathbb{R}} \leq \|K_{n}\|_{\mathbb{R}} + \|K_{n} - \widehat{K}_{n}\|_{\mathbb{R}} \lesssim h^{-\alpha} + O_{P}\left(h^{-2\alpha}n^{-1/2}\log n\right) = O_{P}\left(h^{-\alpha}\right).
\]
Therefore, we have that $\|K_{n}^{2} - \widehat{K}_{n}^{2}\|_{\mathbb{R}} \leq \|K_{n} + \widehat{K}_{n}\|_{\mathbb{R}}\|K_{n} - \widehat{K}_{n}\|_{\mathbb{R}} = O_{P}\left(h^{-3\alpha}n^{-1/2}\log n\right)$. Then we have that 
\begin{align*}
&\max_{1 \leq \ell \leq N}\left|{1 \over n}\sum_{j=1}^{n}X_{j\Delta}1\{|X_{j\Delta}| \leq \theta_{n}\}\left\{\hat{K}_{n}((x_{\ell} - X_{j\Delta})/h) - K_{n}((x_{\ell} - X_{j\Delta})/h)\right\}\right|\\
&\quad \leq \underbrace{\left({1 \over n}\sum_{j=1}^{n}X_{j\Delta}1\{|X_{j\Delta}| \leq \theta_{n}\}\right)}_{= O_{P}(1)}\|\hat{K}_{n} - K_{n}\|_{\mathbb{R}} = O_{P}(h^{-2\alpha}n^{-1/2}\log n),
\end{align*}
and likewise, 
\begin{align*}
\max_{1 \leq \ell \leq N}\left|{1 \over n}\sum_{j=1}^{n}X_{j\Delta}^{2}1\{|X_{j\Delta}| \leq \theta_{n}\}\left\{\hat{K}_{n}^{2}((x_{\ell} - X_{j\Delta})/h) - K_{n}^{2}((x_{\ell} - X_{j\Delta})/h)\right\}\right| &= O_{P}(h^{-3\alpha}n^{-1/2}\log n). 
\end{align*}
Since $(h^{-2\alpha}n^{-1/2}\log n)^{2}/(h^{-3\alpha}n^{-1/2}\log n) = h^{-\alpha}n^{-1/2}\log n \ll 1$, we have that 
\[
\widehat{\sigma}_{n}^{2}(x) = \underbrace{{1 \over n}\sum_{j=1}^{n}Z_{n,j}^{2}(x) - \left({1 \over n}\sum_{j=1}^{n}Z_{n,j}(x)\right)^{2}}_{=: \tilde{\sigma}^{2}(x)} + O_{P}(h^{-3\alpha}n^{-1/2}\log n)
\]
uniformly $x = x_{\ell}$, $\ell =1,\hdots, N$. Furthermore, since $\min_{1 \leq \ell \leq N}\sigma^{2}_{n}(x_{\ell}) \gtrsim h^{-2\alpha + \delta +1}$ and  
\[
{h^{-3\alpha}n^{-1/2}\log n \over h^{-2\alpha + \delta +1}} = h^{-\alpha-\delta -1}n^{-1/2}\log n \ll (\log n)^{-1},
\]
it remains to prove that $\max_{1 \leq \ell \leq N}|\tilde{\sigma}^{2}(x_{\ell})/\sigma^{2}(x_{\ell}) -1 | = o_{P}((\log n)^{-1})$. Since $h^{\alpha}yK_{n}((x-y)/h)$ is uniformly bounded in $n$ and $x_{\ell}$ for $\ell = 1,\hdots, N$ (see the comment after the proof of Lemma \ref{L5}), we have that
\begin{align*}
\max_{1 \leq \ell \leq N}{E[|X_{1}|1\{|X_{1}|>\theta_{n}\}K_{n}((x_{\ell} - X_{1})/h)] \over h^{-\alpha + \delta/2 + 1/2}} &\lesssim {h^{-\alpha}P(|X_{1}|>\theta_{n}) \over h^{-\alpha + \delta/2 + 1/2}} \lesssim h^{-\delta/2 -1/2}\theta_{n}^{-2} \ll (\log n)^{-1/2},\\
\max_{1 \leq \ell \leq N}{E[X_{1}^{2}1\{|X_{1}|>\theta_{n}\}K_{n}^{2}((x_{\ell} - X_{1})/h)] \over h^{-2\alpha + \delta + 1}} &\lesssim {h^{-2\alpha}E[|X_{1}|1\{|X_{1}|>\theta_{n}\}] \over h^{-2\alpha + \delta + 1}}\\
& \lesssim h^{-\delta -1}E[|X_{1}|^{3}/\theta_{n}^{2}] \lesssim h^{-\delta -1}\theta_{n}^{-2} \ll (\log n)^{-1}.
\end{align*} 
Therefore, to complete the proof, it suffices to prove that 
\begin{align}
\max_{1 \leq \ell \leq N}\left|{1 \over n}\sum_{j=1}^{n}\left({Z_{n,j}^{2}(x_{\ell}) - E[Z_{n,j}^{2}(x_{\ell})] \over \sigma^{2}_{n}(x_{\ell})}\right)\right| &= o_{P}((\log n)^{-1}),\ \text{and} \label{sig_max1}\\
\max_{1 \leq \ell \leq N}\left|{1 \over n}\sum_{j=1}^{n}\left({Z_{n,j}(x_{\ell}) - E[Z_{n,j}(x_{\ell})] \over \sigma_{n}(x_{\ell})}\right)\right| &= o_{P}((\log n)^{-1/2}) \label{sig_max2}.
\end{align}
To prove (\ref{sig_max1}), we use Theorem 2.18 in \cite{FaYa03} with $b = h^{-\delta -1}$, $q = [h^{-\delta-2}] \wedge [n/2] \ll n$, and $\epsilon = \epsilon_{0}(\log n)^{-1}$ for any $\epsilon_{0}>0$ in their notations. Here, $[a]$ is the integer part of $a \in \mathbb{R}$. In this case we have that 
\begin{align*}
&P\left(\max_{1 \leq \ell \leq N}\left|{1 \over n}\sum_{j=1}^{n}\left({Z_{n,j}^{2}(x_{\ell}) - E[Z_{n,j}^{2}(x_{\ell})] \over \sigma^{2}_{n}(x_{\ell})}\right)\right| > \epsilon_{0}(\log n)^{-1}\right)\\
&\quad \leq \sum_{\ell = 1}^{N}P\left(\left|{1 \over n}\sum_{j=1}^{n}\left({Z_{n,j}^{2}(x_{\ell}) - E[Z_{n,j}^{2}(x_{\ell})] \over \sigma^{2}_{n}(x_{\ell})}\right)\right| > \epsilon_{0}(\log n)^{-1}\right)\\
&\quad \lesssim h^{-1+2\delta}\left(\exp\left(-{h^{-1} \over 8(\log n)^{2}}\right) + \sqrt{1 + h^{-\delta -1}(\log n) \over \epsilon_{0}}h^{-\delta-2}e^{-\Delta\beta_{1}nh^{\delta+2}}\right)  \to 0
\end{align*}
as $n \to \infty$, and likewise, we can show (\ref{sig_max2}). Therefore, we complete the proof. 
\end{proof}

Let $q > r$ be positive integers such that
\begin{align*}
q + r \leq n/2,\ q = q_{n} \to \infty,\ q_{n}= o(n),\ r = r_{n} \to \infty,\text{and}\ r_{n}= o(q_{n})\ \text{as}\ n \to \infty,
\end{align*}
and $m = m_{n} = [n/(q + r)]$. Consider a partition $\{I_{j}\}_{j = 1}^{m} \cup \{J_{j}\}_{j=1}^{m+1}$ of $\{1,\hdots, n\}$ where $I_{j} = \{(j-1)(q + r) +1,\hdots, jq + (j-1)r\}$, $J_{j} = \{jq + (j-1)r + 1,\hdots, j(q + r)\}$ and $J_{m+1} = \{m(q+r),\hdots, n\}$. 
First we show the following result on Gaussian approximation.
\begin{theorem}\label{Gauss_approx}
Under Assumption \ref{Ass1}, we have that 
\begin{align*}
\sup_{t \in \mathbb{R}}\left|P\left(\max_{1 \leq \ell \leq N}|W_{n}(x_{\ell})| \leq t\right) - P\left(\max_{1 \leq \ell \leq N}|\check{Y}_{\ell,n}| \leq t\right)\right| \to 0,\ \text{as}\ n \to \infty,
\end{align*}
where, $\check{Y}_{n} = (\check{Y}_{n,1},\hdots, \check{Y}_{n,N})^{\top}$ is a centered normal random vector with covariance matrix $E[\check{Y}_{n}\check{Y}_{n}^{\top}] = (mq)^{-1}\sum_{j = 1}^{m}E\left[W_{I_{j}}W_{I_{j}}^{\top}\right] = q^{-1}E\left[W_{I_{1}}W_{I_{1}}^{\top}\right]$ where 
\begin{align*}
W_{I_{j}} &= \left(\sum_{k \in I_{j}}\left({Z_{n,k}(x_{1}) - E[Z_{n,1}(x_{1})] \over \sigma_{n}(x_{1})}\right), \hdots, \sum_{k \in I_{j}}\left({Z_{n,k}(x_{N}) - E[Z_{n,1}(x_{N})] \over \sigma_{n}(x_{N})}\right)\right)^{\top}\\
&=: (W_{I_{j}}(x_{1}),\hdots, W_{I_{j}}(x_{N}))^{\top}. 
\end{align*}
\end{theorem}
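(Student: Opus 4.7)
The plan is the classical big-block small-block argument combined with Berbee's coupling for $\beta$-mixing and the high-dimensional Gaussian approximation of Chernozhukov--Chetverikov--Kato (CCK). First I would decompose
\[
W_n(x_\ell) = \frac{1}{\sqrt n}\sum_{j=1}^m W_{I_j}(x_\ell) + \frac{1}{\sqrt n}\sum_{j=1}^{m+1} W_{J_j}(x_\ell),
\]
with $W_{J_j}(x_\ell)$ defined analogously to $W_{I_j}(x_\ell)$ over the small blocks $J_j$. Using the variance identity $\sigma_n^2(x_\ell)=\mathrm{Var}(Z_{n,1}(x_\ell))(1+o(1))$ from Proposition \ref{PrpB1} and the exponential covariance decay of Lemma \ref{LemB3}, one obtains $\mathrm{Var}\bigl(\sum_j W_{J_j}(x_\ell)\bigr)\lesssim mr$, hence $\max_\ell\bigl|n^{-1/2}\sum_j W_{J_j}(x_\ell)\bigr|=O_P(\sqrt{(r/q)\log N})$ by Markov's inequality and a union bound over $\ell=1,\dots,N$. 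If $r/q$ is small enough this is $o_P((\log N)^{-1/2})$, which by the anti-concentration inequality for Gaussian maxima may be absorbed inside the maximum without affecting the Kolmogorov distance to the Gaussian.

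Next, iterative application of Berbee's coupling lemma to the $\beta$-mixing sequence $\{W_{I_j}\}_{j=1}^m$ (with mixing coefficient $\beta_X(r\Delta)$) produces independent copies $\{W_{I_j}^{*}\}$ with $W_{I_j}^{*} \stackrel{d}{=} W_{I_j}$ and $P(\exists j:\,W_{I_j}\neq W_{I_j}^{*})\leq m\beta_X(r\Delta)$. Because the process is exponentially $\beta$-mixing and $\Delta\gtrsim\log n$ by Assumption \ref{Ass1}(vi), this coupling error tends to zero for any slowly growing $r$. On the coupling event the big-block part of $W_n$ coincides with $T_n:=n^{-1/2}\sum_j W_{I_j}^{*}$, an independent sum of mean-zero vectors in $\mathbb{R}^N$.

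The final step is to apply the CCK high-dimensional CLT for independent sums to $T_n$. The truncation $|X_{j\Delta}|\leq\theta_n$ together with Lemma \ref{L5} and the variance lower bound of Proposition \ref{LemB1} gives a uniform envelope $\|W_{I_j}^{*}\|_\infty\lesssim q\theta_n h^{-(1+\delta)/2}$, while $q^{-1}E[W_{I_1}(x_\ell)^2]\to 1$. CCK then yields
\[
\sup_{t\in\mathbb{R}}\left|P\bigl(\max_\ell |T_{n,\ell}|\leq t\bigr)-P\bigl(\max_\ell|\check Y_{n,\ell}|\leq t\bigr)\right|\to 0,
\]
where $\check Y_n$ is centered Gaussian with covariance $\mathrm{Cov}(T_n)=(m/n)\,E[W_{I_1}W_{I_1}^\top]=(mq)^{-1}\sum_j E[W_{I_j}W_{I_j}^\top](1+o(1))$, using stationarity and $n=mq(1+o(1))$. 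Combining the three pieces completes the proof.

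I expect the main obstacle to be the rate bookkeeping: the CCK error (polynomial in $(\log N)^{7}$ divided by a power of $mq$ depending on third- and fourth-moment growth of the blocks), the Berbee coupling error $m\beta_X(r\Delta)$, and the small-block error $\sqrt{(r/q)\log N}$ must all vanish simultaneously for a common choice of $(q_n,r_n)$. The slack provided by $\Delta\gtrsim\log n$ and by the truncation level $\theta_n\sim n^{1/2}(\log n)^{-3}$ permits $q$ of polynomial order in $n$ with $r$ only slowly growing, but verifying compatibility with the bandwidth window in Assumption \ref{Ass1}(vi) and with $N\ll h^{2\delta-1}$ is where the bookkeeping becomes delicate.
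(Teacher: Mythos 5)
Your strategy is essentially the same one the paper implements: the paper does not redo the blocking, Berbee coupling, and independent-sum Gaussian approximation by hand; it verifies the hypotheses of Theorem B.1 of \cite{ChChKa13}, which is precisely a high-dimensional CLT for $\beta$-mixing sequences whose proof packages exactly the three ingredients you outline. So conceptually you are re-deriving that theorem, while the paper cites it. Either route is legitimate, but the paper's is much shorter because the hard part is already encapsulated.

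There is, however, one genuine error that would break your rate bookkeeping. You claim the envelope $\|W_{I_j}^{*}\|_{\infty}\lesssim q\,\theta_{n}\,h^{-(1+\delta)/2}$, attributing the control to the truncation $|X_{j\Delta}|\leq\theta_{n}$. This is too loose by the full factor $\theta_{n}\sim n^{1/2}(\log n)^{-3}$, and with that factor in the envelope the CCK error term does not vanish: roughly one needs $q^{2}B_{n}^{2}(\log(Nn))^{7}\ll n$ with $B_{n}$ the per-summand bound, and plugging $B_{n}\sim\theta_{n}h^{-(\delta+1)/2}$ gives $n^{2q'}\cdot n(\log n)^{-6}\cdot h^{-(\delta+1)}(\log n)^{7}\ll n$, which fails for any $q'>0$. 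The bound the argument actually needs, and the one the paper uses, is $\theta_{n}$-free: by Lemma~\ref{L5} and the decomposition stated right after its proof,
\[
h^{\alpha}\,y\,K_{n}\!\left(\frac{x-y}{h}\right)
= -h^{\alpha+1}\!\left(\frac{x-y}{h}\right)K_{n}\!\left(\frac{x-y}{h}\right)
  + h^{\alpha}\,x\,K_{n}\!\left(\frac{x-y}{h}\right),
\]
and both terms on the right are bounded uniformly in $y$, $n$, and $x=x_{\ell}$. Hence $|Z_{n,j}(x_{\ell})|\lesssim h^{-\alpha}$ regardless of $\theta_{n}$, and together with $\min_{\ell}\sigma_{n}(x_{\ell})\gtrsim h^{(-2\alpha+\delta+1)/2}$ one gets $|(Z_{n,j}(x_{\ell})-E[Z_{n,j}(x_{\ell})])/\sigma_{n}(x_{\ell})|\lesssim h^{-(\delta+1)/2}$, so that $\|W_{I_{j}}\|_{\infty}\lesssim q\,h^{-(\delta+1)/2}$. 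The truncation by $\theta_{n}$ is present for a different reason (Lemma~\ref{L1}, boundedness needed for the exponential inequality in the estimation of $\varphi'$), not to control the kernel envelope. Once you replace your envelope by $q\,h^{-(\delta+1)/2}$, the choice $q_{n}=O(n^{q'})$, $r_{n}=O(n^{r'})$ with $0<r'<q'<3/8$ makes all three error terms vanish, exactly as in the paper.

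A second, minor point: your small-block bound $O_{P}(\sqrt{(r/q)\log N})$ is fine, but to convert it into a Kolmogorov-distance statement you need not just smallness of the remainder but an anti-concentration inequality for $\max_{\ell}|\check Y_{n,\ell}|$; you mention this, and it is indeed supplied by the Nazarov/CCK anti-concentration bound, which is built into Theorem B.1 of \cite{ChChKa13}. If you carry out the argument by hand you must invoke it explicitly.
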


\begin{proof}
Since $h^{\alpha}yK_{n}((x-y)/h)$ is uniformly bounded in $n$ and $x = x_{\ell}$, $\ell = 1,\hdots, N$ as a function of $y$ (see the comment after the proof of Lemma \ref{L5}) and $\min_{1 \leq \ell \leq N}\sigma_{n}(x_{\ell}) \gtrsim \sqrt{h^{-2\alpha + \delta + 1}}$, we have that 
\begin{align*}
|(Z_{n,j}(x_{\ell}) - E[Z_{n,j}(x_{\ell})])/\sigma_{n}(x_{\ell})| \lesssim h^{-(\delta+1)/2} 
\end{align*}
and $h^{-1/2}(\log Nn)^{5/2} \ll n^{1/8}$. Therefore, if we take $q_{n} = O(n^{q'})$ and $r_{n} = O(n^{r'})$ with $0< r' < q'  < 3/8$, we have that $q_{n}h^{-(\delta+1)/2}(\log Nn)^{5/2} \lesssim n^{1/2 - (1/8 + q')}$, $(r_{n}/q_{n})(\log N)^{2} \lesssim n^{-(q' - r')/2}$ and $m_{n}\beta_{X}(r_{n}) \lesssim m_{n}e^{-\beta_{1}r_{n}} \lesssim n^{-(q'-r')/2}$. Moreover, define 
\begin{align*}
\overline{\sigma}^{2}(q) &:= \max_{1 \leq \ell \leq N}\max_{I}\Var\left({1 \over \sigma_{n}(x_{\ell})\sqrt{q}}\sum_{k \in I}(Z_{n,k}(x_{\ell}) - E[Z_{n,1}(x_{\ell})])\right),\\
\underline{\sigma}^{2}(q)&:= \min_{1 \leq \ell \leq N}\min_{I}\Var\left({1\over \sigma_{n}(x_{\ell})\sqrt{q}}\sum_{k \in I}(Z_{n,k}(x_{\ell}) - E[Z_{n,1}(x_{\ell})])\right), 
\end{align*}
where $\max_{I}$ and $\min_{I}$ are taken over all $I \subset \{1,\hdots, n\}$ of the form $I = \{j + 1,\hdots, j + q\}$. By the stationarity of $\{X_{j\Delta}\}_{j \geq 0}$ and Proposition \ref{PrpB1}, we have that 
\begin{align*}
\overline{\sigma}^{2}(q) & = \overline{\sigma}^{2} \sim  \max_{1 \leq \ell \leq N}\left(\Var(Z_{n,1}(x_{\ell})/\sigma_{n}(x_{\ell})\right),\\
\underline{\sigma}^{2}(q)& = \underline{\sigma}^{2} \sim  \min_{1 \leq \ell \leq N}\left(\Var(Z_{n,1}(x_{\ell})/\sigma_{n}(x_{\ell})\right).
\end{align*}
Then there exists constants $0<c_{1}, C_{1}<\infty$ such that $c_{1} \leq \underline{\sigma}^{2}(q) \leq \overline{\sigma}^{2}(r)\vee \overline{\sigma}^{2}(q) \leq C_{1}$. 
From the above arguments, the conditions of Theorem B.1 in \cite{ChChKa13} are satisfied. So, we have the desired result.
\end{proof}

Next we show that the distribution of $\max_{1 \leq \ell \leq N}|\check{Y}_{n,\ell}|$ can be approximated by that of $\max_{1 \leq \ell \leq N}|Y_{\ell}|$ where $Y = (Y_{1},\hdots, Y_{N})^{\top}$ is a normal random vector in $\mathbb{R}^{N}$. For this, we prepare two lemmas. 
\begin{lemma}\label{Cov_approx}
Under Assumption \ref{Ass1}, we have that 
\begin{align*}
\max_{1 \leq k,\ell \leq N}\left|{\Cov(W_{n}(x_{k}), W_{n}(x_{\ell}))} - 1_{\left\{x_{k} = x_{\ell}\right\}}\right| = O(h^{\delta}). 
\end{align*}
\end{lemma}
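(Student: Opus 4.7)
The plan is to split the proof into the diagonal case $k=\ell$ and the off-diagonal case $k\neq \ell$ (so that $x_k \neq x_\ell$ and $|x_k-x_\ell|\gg h^{1-2\delta}$).

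For the diagonal case the statement is trivial: by the very definition of $\sigma_n(x_k)$ as the standard deviation of $\sqrt{n}hZ_n(x_k)$, the variable $W_n(x_k)$ has unit variance by construction, so $\Cov(W_n(x_k),W_n(x_k))-1=0$ exactly. The diagonal entries make no contribution to the bound.

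For the off-diagonal case I would write
\[
\Cov(W_n(x_k),W_n(x_\ell))=\frac{\Cov(Q_n(x_k),Q_n(x_\ell))}{\sigma_n(x_k)\,\sigma_n(x_\ell)},
\]
where $Q_n(x)=n^{-1/2}\sum_{j=1}^n Z_{n,j}(x)$ (centering does not affect the covariance). For the denominator, Propositions \ref{LemB1} and \ref{PrpB1} together with Lemma \ref{L55} give the asymptotic equivalence $\sigma_n(x_k)\sigma_n(x_\ell)\sim \sqrt{V(x_k)V(x_\ell)}\,h^{-2\alpha+1}L^{-2}(1/h)$ with $V(x)=D_1 x^2\pi(x)>0$. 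For the numerator, I would reuse the estimates already carried out in the proof of Proposition \ref{PrpB4}: after splitting the integral over $\{|z|\leq h^{-2\delta}\}$ and its complement, and using the uniform bound $h^\alpha|K_n(x)|\lesssim \min(1,x^{-2})$ of Lemma \ref{L5} together with $\|x^2\pi\|_{\mathbb{R}}\lesssim 1$ from Lemma \ref{L0}, one has $\max_{k\neq \ell}E[|Z_{n,1}(x_k)Z_{n,1}(x_\ell)|]\lesssim h^{-2\alpha+1+2\delta}$, while the lag contributions $\sum_{j\geq 1}E[|Z_{n,1}(x_k)Z_{n,j+1}(x_\ell)|]$ are of strictly smaller order thanks to the $\beta$-mixing bound of Lemma \ref{LemB3} combined with the macroscopic sampling condition $\Delta\gtrsim \log n$.

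Putting these together gives
\[
|\Cov(W_n(x_k),W_n(x_\ell))|\lesssim h^{2\delta}L^{2}(1/h),
\]
and the proof is closed by invoking the fact noted in Remark \ref{tail_decay} that the slowly varying function $L$ satisfies $L^{2}(1/h)=O(h^{-\delta'})$ for any $\delta'>0$; choosing $\delta'=\delta$ yields the claimed rate $O(h^\delta)$. The main obstacle is essentially bookkeeping: one must keep the exponents of $h$ and the slowly varying factor $L(1/h)$ straight throughout, but no new analytic input beyond the lemmas already established is needed.
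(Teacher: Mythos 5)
Your proposal is essentially the same argument as the paper's own proof: the paper also reduces the off-diagonal bound to $|\Cov(Z_{n,1}(x_k),Z_{n,1}(x_\ell))|/(\sigma_n(x_k)\sigma_n(x_\ell))$ via the lag-negligibility established in the proofs of Propositions~\ref{PrpB1} and~\ref{PrpB4}, then reuses the bound $E[|Z_{n,1}(x_k)Z_{n,1}(x_\ell)|]\lesssim h^{-2\alpha+1+2\delta}$ derived there and divides by $\sigma_n^2\sim h^{-2\alpha+1}L^{-2}(1/h)$, absorbing the slowly varying factor $L^2(1/h)$ into an $h^{-\delta}$. Your treatment of the diagonal case is marginally cleaner (you note that $\Var(W_n(x_\ell))=1$ holds exactly by the definition of $\sigma_n$, whereas the paper passes through the approximation $\Var(Z_{n,1})/\sigma_n^2\to 1$), and you make the $L^2(1/h)$ bookkeeping explicit where the paper keeps it implicit, but there is no substantive difference in method.
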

\begin{proof}
Since the covariance between $Z_{n,j}(x_{\ell})$ and $Z_{n,k}(x_{\ell})$ for $j \neq k$ is asymptotically negligible with respect to the variances of each term by the proof of Proposition \ref{PrpB4}, it is sufficient to prove
\begin{align*}
\max_{1 \leq k,\ell \leq N}\left|{\Cov(Z_{n,1}(x_{k}), Z_{n,1}(x_{\ell})) \over \sqrt{\sigma_{n}^{2}(x_{k})\sigma_{n}^{2}(x_{\ell})}} - 1_{\left\{x_{k} = x_{\ell}\right\}}\right| = O\left(h^{\delta}\right).
\end{align*}
Since $1/\min_{1 \leq \ell \leq N}\sigma^{2}_{n}(x) \lesssim h^{2\alpha -\delta -1}$, from the same argument of the proof of Proposition \ref{PrpB4}, we have that 
\begin{align*}
&\max_{1 \leq k,\ell \leq N}\left|{\Cov(Z_{n,1}(x_{k}), Z_{n,1}(x_{\ell})) \over \sqrt{\sigma_{n}^{2}(x_{k})\sigma_{n}^{2}(x_{\ell})}} - 1_{\left\{x_{k} = x_{\ell}\right\}}\right|\\
&\quad = \max_{1 \leq k \neq \ell \leq N}\left|{\Cov(Z_{n,1}(x_{k}), Z_{n,1}(x_{\ell})) \over \sqrt{\sigma_{n}^{2}(x_{k})\sigma_{n}^{2}(x_{\ell})}}\right| \lesssim h^{2\alpha -\delta -1}\max_{1 \leq k \neq \ell \leq N}\left|\Cov(Z_{n,1}(x_{k}), Z_{n,1}(x_{\ell})) \right|\\
&\quad \lesssim {h^{2-\delta} \over \min_{1 \leq j \neq k \leq N}(x_{k} - x_{\ell})^{2}} \vee h^{\delta} \lesssim h^{\delta}
\end{align*}
since $\min_{1 \leq k \neq \ell \leq N}(x_{k} - x_{\ell})^{2} \gg h^{2-4\delta}$. Then we have the desired result. 
\end{proof}

\begin{lemma}\label{Gauss_Cov_compare}
Under Assumption \ref{Ass1}, we have that
\begin{align*}
\max_{1 \leq k,\ell \leq N}\left|q^{-1}\Cov(W_{I_{1}}(x_{k}), W_{I_{1}}(x_{\ell})) - 1_{\{x_{k} = x_{\ell}\}}\right| = O(h^{\delta}). 
\end{align*} 
\end{lemma}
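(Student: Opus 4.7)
The plan is to imitate the proof of Lemma \ref{Cov_approx} with sums over $\{1,\dots,n\}$ replaced by sums over $I_{1}=\{1,\dots,q\}$. Using stationarity of $\{X_{j\Delta}\}$, I would write
\begin{align*}
q^{-1}\Cov(W_{I_1}(x_k), W_{I_1}(x_\ell))
= \frac{\Cov(Z_{n,1}(x_k), Z_{n,1}(x_\ell))}{\sigma_n(x_k)\sigma_n(x_\ell)} + D_{k,\ell},
\end{align*}
where $D_{k,\ell}$ collects the lag-$j\ge 1$ contributions
$\sigma_n(x_k)^{-1}\sigma_n(x_\ell)^{-1}\sum_{j=1}^{q-1}(1-j/q)\{\Cov(Z_{n,1}(x_k),Z_{n,j+1}(x_\ell))+\Cov(Z_{n,j+1}(x_k),Z_{n,1}(x_\ell))\}$.

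For the leading term with $k=\ell$, I would invoke Proposition \ref{PrpB1} together with $\sigma_n^{2}(x_k)\sim h^{-2\alpha+1}L^{-2}(1/h)\cdot V(x_k)$ (from Lemma \ref{L55} and the discussion after Proposition \ref{PrpB1}) to conclude that $\Var(Z_{n,1}(x_k))/\sigma_n^{2}(x_k)=1+O(h^{\delta})$, using Remark \ref{tail_decay} (i.e.\ $L^{2}(1/h)\ll h^{-\delta'}$ for arbitrary $\delta'>0$) to absorb the slowly varying factor into a small power of $h$. For $k\ne\ell$, the argument inside the proof of Proposition \ref{PrpB4} already gives
$E[|Z_{n,1}(x_k)Z_{n,1}(x_\ell)|]\lesssim h^{1-2\alpha+2\delta}$
under the separation hypothesis \eqref{design_points}; combined with the mean bound $|E[Z_{n,1}(x_\ell)]|\lesssim h$ from Lemma \ref{LemB2}, dividing by $\sigma_n(x_k)\sigma_n(x_\ell)\sim h^{-2\alpha+1}L^{-2}(1/h)$ gives $O(h^{2\delta}L^{2}(1/h))=O(h^{\delta})$, again via Remark \ref{tail_decay}.

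For the lag contribution $D_{k,\ell}$ I would apply Lemma \ref{LemB3}, which yields
$\max_{k,\ell}|\Cov(Z_{n,1}(x_k),Z_{n,j+1}(x_\ell))|\lesssim e^{-j\Delta\beta_{1}/3}h^{2/3-2\alpha}L^{-2}(1/h)$.
Summing a geometric series in $j$ and dividing by $\sigma_n(x_k)\sigma_n(x_\ell)\sim h^{-2\alpha+1}L^{-2}(1/h)$ gives
$\max_{k,\ell}|D_{k,\ell}|\lesssim h^{-1/3}e^{-\Delta\beta_{1}/3}$,
which is $O(h^{\delta})$ under the macroscopic observation condition $\Delta\gtrsim \log n$ (with a suitably large constant, as already needed in the proof of Proposition \ref{PrpB1}).

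The only mildly delicate point is keeping track of the slowly varying factor $L(1/h)$ throughout, but this is routinely handled by invoking Remark \ref{tail_decay} to absorb any $L^{\pm 2}(1/h)$ into $h^{\mp\delta'}$ for an arbitrarily small $\delta'>0$; no new estimate beyond those already developed for Lemmas \ref{LemB2}, \ref{LemB3}, \ref{Cov_approx} and Propositions \ref{PrpB1}, \ref{PrpB4} is required.
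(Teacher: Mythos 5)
Your proposal is correct and takes essentially the same approach as the paper: the paper's proof is just a terse pointer to Propositions \ref{LemB1}, \ref{PrpB1} and the proof of Lemma \ref{Cov_approx}, reducing $q^{-1}\Cov(W_{I_{1}}(x_{k}),W_{I_{1}}(x_{\ell}))$ to the lag-zero term $\Cov(Z_{n,1}(x_{k}),Z_{n,1}(x_{\ell}))/(\sigma_{n}(x_{k})\sigma_{n}(x_{\ell}))$ plus negligible lag contributions, exactly as you do. Your explicit decomposition, the use of Lemma \ref{LemB3} with geometric summation under $\Delta\gtrsim\log n$, and the bounds for $k=\ell$ and $k\neq\ell$ drawn from Propositions \ref{PrpB1} and \ref{PrpB4} match the paper's ingredients (one could even avoid invoking Remark \ref{tail_decay} for the $k\neq\ell$ term by using the lower bound $\sigma_{n}^{2}(x_{\ell})\gtrsim h^{-2\alpha+\delta+1}$ from Proposition \ref{LemB1} directly, but your route is equally valid).
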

\begin{proof}
Form the same argument of the proof Propositions \ref{PrpB1} and \ref{PrpB4}, 
\[
{1 \over q}\sum_{k,\ell \in I_{j}, k \neq \ell}{\Cov(Z_{n,k}(x_{m_{1}}), Z_{n,\ell}(x_{m_{2}}))\over \sigma_{n}(x_{m_{1}})\sigma_{n}(x_{m_{2}})}
\]
is asymptotically ignorable for $1 \leq m_{1}, m_{2} \leq N$. Therefore, the proof of Lemma \ref{Cov_approx} yields that  
\begin{align*}
&\max_{1 \leq k,\ell \leq N}\left|q^{-1}\Cov(W_{I_{1}}(x_{k}), W_{I_{1}}(x_{\ell})) - 1_{\{x_{k} = x_{\ell}\}}\right|\\
&\quad = O\left(\max_{1 \leq k,\ell \leq N}\left|{\Cov(Z_{n,1}(x_{k}), Z_{n,1}(x_{\ell})) \over \sqrt{\sigma_{n}^{2}(x_{k})\sigma_{n}^{2}(x_{\ell})}} - 1_{\left\{x_{k} = x_{\ell}\right\}}\right|\right) = O(h^{\delta}).
\end{align*}
This completes the proof. 
\end{proof}

Lemma \ref{Gauss_Cov_compare} and Condition (vi) in Assumption \ref{Ass1} yield the following result on Gaussian comparison:
\begin{proposition}\label{Gauss_compare}
Under Assumption \ref{Ass1}, we have that 
\[
\sup_{t \in \mathbb{R}}\left|P\left(\max_{1 \leq \ell \leq N}|\check{Y}_{n,\ell}| \leq t\right) - P\left(\max_{1 \leq \ell \leq N}|Y_{\ell}| \leq t\right)\right| \to 0,\ \text{as}\ n \to \infty, 
\]
where $Y = (Y_{1},\hdots, Y_{N})^{\top}$ is a standard normal random vector in $\mathbb{R}^{N}$. 
\end{proposition}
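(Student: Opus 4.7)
The plan is to derive Proposition \ref{Gauss_compare} as a direct consequence of Lemma \ref{Gauss_Cov_compare} by invoking a Gaussian comparison inequality in Kolmogorov distance for the sup-norm of two centered Gaussian vectors. Specifically, I would use a Gaussian comparison result of the Chernozhukov--Chetverikov--Kato type (e.g., Theorem 2 in \cite{ChChKa15}), which asserts that for two centered Gaussian vectors $U, V \in \mathbb{R}^N$ with covariance matrices $\Sigma_U, \Sigma_V$,
\begin{align*}
\sup_{t \in \mathbb{R}}\left|P\left(\max_{1 \leq \ell \leq N}|U_{\ell}| \leq t\right) - P\left(\max_{1 \leq \ell \leq N}|V_{\ell}| \leq t\right)\right| \lesssim \Delta_{0}^{1/3}\left(1 \vee \log(N/\Delta_{0})\right)^{2/3},
\end{align*}
where $\Delta_{0} = \max_{1 \leq k,\ell \leq N}|(\Sigma_{U})_{k\ell} - (\Sigma_{V})_{k\ell}|$.

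I would apply this with $U = \check{Y}_{n}$ (so $\Sigma_{U} = q^{-1}E[W_{I_{1}}W_{I_{1}}^{\top}]$) and $V = Y$ (so $\Sigma_{V} = I_{N}$). Lemma \ref{Gauss_Cov_compare} provides the key bound
\[
\Delta_{0} = \max_{1 \leq k,\ell \leq N}\left|q^{-1}\Cov(W_{I_{1}}(x_{k}), W_{I_{1}}(x_{\ell})) - 1_{\{x_{k} = x_{\ell}\}}\right| = O(h^{\delta}).
\]
Moreover, the design-point separation assumption (\ref{design_points}) forces $N \ll h^{2\delta - 1}$, so $\log N \lesssim \log(1/h) \lesssim \log n$ under Condition (vi) in Assumption \ref{Ass1}, and also $\log(N/\Delta_{0}) \lesssim \log n$.

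Combining these, the Kolmogorov distance in question is bounded by
\[
\Delta_{0}^{1/3}(1 \vee \log(N/\Delta_{0}))^{2/3} \lesssim h^{\delta/3}(\log n)^{2/3},
\]
which vanishes as $n \to \infty$ since Condition (vi) enforces $h \to 0$ at a polynomial rate in $n$, dominating the logarithmic factor. This yields the claim.

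The only real subtlety is choosing a version of the Gaussian comparison inequality applicable to $\max|U_{\ell}|$ (rather than $\max U_{\ell}$); this is handled either by applying the one-sided comparison to the $2N$-dimensional vector $(U_{1}, -U_{1}, \dots, U_{N}, -U_{N})$ (whose covariance still differs from the target by $O(h^{\delta})$ entrywise), or by citing the absolute-value version in the CCK literature directly. Beyond that, the proof is essentially a bookkeeping exercise once Lemma \ref{Gauss_Cov_compare} and the bound on $N$ are in hand.
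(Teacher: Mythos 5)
Your proof is correct and takes essentially the same route as the paper: both invoke Theorem 2 in \cite{ChChKa15} together with the covariance bound $\Delta = O(h^{\delta})$ from Lemma \ref{Gauss_Cov_compare} to obtain the Kolmogorov-distance bound $\Delta^{1/3}\{1 \vee \log(N/\Delta)\}^{2/3} \to 0$. Your proposal is slightly more explicit about why this quantity vanishes (spelling out $N \ll h^{2\delta - 1}$ and the polynomial decay of $h$) and about the standard doubling trick to pass from one-sided maxima to $\max|\cdot|$, but the substance is identical.
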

\begin{proof}
Let $\Delta(\check{Y}_{n}, Y):= \max_{1 \leq k,\ell \leq N}\left|\Cov(\check{Y}_{n,k}, \check{Y}_{n,\ell}) - 1_{\{x_{k} = x_{\ell}\}}\right|$. By Lemma \ref{Gauss_Cov_compare} and Theorem 2 in \cite{ChChKa15}, we have that  
\begin{align*}
\sup_{t \in \mathbb{R}}\left|P\left(\max_{1 \leq \ell \leq N}|\check{Y}_{n,\ell}| \leq t\right) - P\left(\max_{1 \leq \ell \leq N}|Y_{\ell}| \leq t\right)\right| &\lesssim \Delta(\check{Y}_{n}, Y)^{1/3}\{1 \vee \log (N/\Delta(\check{Y}_{n}, Y))\}^{2/3} \to 0
\end{align*}
as $n \to \infty$. Therefore, we obtain the desired result.
\end{proof}

\begin{proof}[\textbf{Proof of Theorem \ref{Gauss_approx2}}]
Theorem \ref{Gauss_approx2} immediately follows from Theorem \ref{Gauss_approx} and Proposition \ref{Gauss_compare}. 
\end{proof}

\begin{proof}[\textbf{Proof of Theorem \ref{f_Gauss_approx}}]
The asymptotic linear representation (\ref{k_approx}) yields that 
\begin{align*}
U_{n} &:= \max_{1 \leq \ell \leq N}\left|{\sqrt{n}h(\hat{k}_{\sharp}(x_{\ell}) - k_{\sharp}(x_{\ell})) \over \sigma_{n}(x_{\ell})}\right| = \max_{1 \leq \ell \leq N}|W_{n}(x_{\ell})| + o_{P}((\log n)^{-1/2})\\
&=:  V_{n} + o_{P}((\log n)^{-1/2}).
\end{align*}
This also implies that there exists a sequence of constants $\epsilon_{n} \downarrow 0$ such that
\[
P\left(|U_{n} - V_{n}| > \epsilon_{n}(\log n)^{-1/2}\right) \leq \epsilon_{n} 
\] 
(which follows from the fact that convergence in probability is metrized by the Ky Fan metric; see Theorem 9.2.2 in \cite{Du02}).
Then we have that 
\begin{align*}
P\left(U_{n} \leq t\right) &\leq P\left(\{U_{n} \leq t\} \cap \{|U_{n} - V_{n}| \leq \epsilon_{n}(\log n)^{-1/2}\}\right)\\
&\quad + P\left(\{U_{n} \leq t\} \cap \{|U_{n} - V_{n}| > \epsilon_{n}(\log n)^{-1/2}\}\right)\\
&\leq P\left(V_{n} \leq t + \epsilon_{n}(\log n)^{-1/2}\right) + \epsilon_{n}
\end{align*}
for any $t \in \mathbb{R}$. Theorem \ref{Gauss_approx2} yields that there exists a sequence of constants $\tilde{\epsilon}_{n} \downarrow 0$ such that
\begin{align*}
P\left(V_{n} \leq t + \epsilon_{n}(\log n)^{-1/2}\right) & \leq P\left(G_{n} \leq t + \epsilon_{n}(\log n)^{-1/2}\right) + \tilde{\epsilon}_{n}
\end{align*} 
for any $t \in \mathbb{R}$ where $G_{n} = \max_{1 \leq \ell \leq N}|Y_{\ell}|$. From the anti-concentration inequality for the maxima of Gaussian random vector (Theorem 3 in \cite{ChChKa15}), the right hand side is bounded from above by $P\left(G_{n} \leq t\right) + 8\epsilon_{n}(\log n)^{-1/2}E[G_{n}] + \tilde{\epsilon}_{n}$. 
Since $E[G_{n}] \leq D\log n$ for some positive constant $D$ which does not depend on $n$,
we have that 
\begin{align}\label{U_approx1}
P\left(U_{n} \leq t\right) &\leq P\left(G_{n} \leq t \right) + 9D\epsilon_{n} + \tilde{\epsilon}_{n} = P\left(G_{n} \leq t \right) + o(1)
\end{align}
for any $t \in \mathbb{R}$.
We also have that
\begin{align*}
P\left(V_{n} \leq t - \epsilon_{n}(\log n)^{-1/2}\right) &\leq P\left(\{V_{n} \leq t - \epsilon_{n}(\log n)^{-1/2}\} \cap \{|U_{n} - V_{n}| \leq \epsilon_{n}(\log n)^{-1/2}\}\right)\\
&\quad + P\left(\{V_{n} \leq t - \epsilon_{n}(\log n)^{-1/2}\} \cap \{|U_{n} - V_{n}| > \epsilon_{n}(\log n)^{-1/2}\}\right)\\
&\leq P\left(U_{n} \leq t \right) + \epsilon_{n}
\end{align*}
for any $t \in \mathbb{R}$. Therefore, we can show that 
\begin{align}\label{U_approx2}
P\left(U_{n} \leq t\right) &\geq P\left(G_{n} \leq t \right) - 9D\epsilon_{n} - \tilde{\epsilon}_{n} = P\left(G_{n} \leq t \right) + o(1)
\end{align}
for any $t \in \mathbb{R}$. Combining (\ref{U_approx1}) with (\ref{U_approx2}), we obtain the desired result. 
\end{proof} 

\section{On asymptotic validity of confidence bands}

We use the notations used in the proof of Theorem \ref{f_Gauss_approx} here. Let $q_{\tau}^{U_{n}}$ denotes the $(1-\tau)$-quantile of $U_{n}$. Theorem \ref{f_Gauss_approx} implies that there exists a sequence $\epsilon'_{n} \downarrow 0$ such that 
\[
\sup_{t \in \mathbb{R}}\left|P\left(U_{n} \leq t\right) - P\left(G_{n} \leq t\right)\right| \leq \epsilon'_{n}. 
\]
Then we have that
\begin{align*}
P\left(U_{n} \leq q_{\tau-\epsilon'_{n}}\right) &\geq P\left(G_{n} \leq q_{\tau - \epsilon'_{n}}\right) - \epsilon'_{n} = 1-\tau,
\end{align*}
where the last inequality holds $G_{n}$ has continuous distribution from the anti-concentration inequality (see Theorem 3 in \cite{ChChKa15}). This yields the inequality $q_{\tau}^{U_{n}} \leq q_{\tau-\epsilon'_{n}}$. Therefore, we have that 
\begin{align*}
P\left(U_{n} \leq q_{\tau}\right) &\leq P\left(U_{n} \leq q_{\tau-\epsilon'_{n}}\right)\\
&\leq P\left(G_{n} \leq q_{\tau - \epsilon'_{n}}\right) + \epsilon'_{n} = 1 - \tau + 2\epsilon'_{n}.
\end{align*}
Likewise, we have the inequality $q_{\tau+\epsilon'_{n}} \leq q_{\tau}^{U_{n}}$. This yields that 
\[
P\left(U_{n} \leq q_{\tau}\right) \geq 1-\tau -2\epsilon'_{n}. 
\]
Then we obtain $P\left(U_{n} \leq q_{\tau}\right) \to 1-\tau$ as $n \to \infty$.

\end{document}